\documentclass[lettersize,journal]{IEEEtran}
\usepackage{cite}
\usepackage{amsmath,amssymb,amsfonts, bm, algorithm, subcaption, adjustbox}
\usepackage{amsthm}
\usepackage{algorithmic}
\usepackage{graphicx}
\usepackage{verbatim}
\usepackage{textcomp}
\usepackage{xcolor}
\usepackage{url}
\usepackage{stfloats}
\usepackage{pifont}
\usepackage{caption}
\usepackage{subcaption}
\usepackage{graphicx}
\usepackage{lipsum} 
 
\def\hlb{\color{blue}}
\def\hlb{\color{black}}

\newtheorem{theorem}{Theorem}
\newtheorem{lemma}{Lemma}
\newtheorem{corollary}{Corollary}
\def\BibTeX{{\rm B\kern-.05em{\sc i\kern-.025em b}\kern-.08em
		T\kern-.1667em\lower.7ex\hbox{E}\kern-.125emX}}
\begin{document}

\title{Hybrid Hierarchical Federated Learning over 5G/NextG Wireless Networking}
 
\author{Haiyun Liu, Jiahao Xue,
	Jie Xu,~\IEEEmembership{Senior Member,~IEEE,}
	Yao Liu,~\IEEEmembership{Senior Member,~IEEE,}
    and Zhuo Lu,~\IEEEmembership{Senior Member,~IEEE}
\thanks{Haiyun Liu and Yao Liu are with the Department of Computer Science and Engineering,
    	University of South Florida, Tampa, FL 33620 USA (e-mail: haiyunliu@usf.edu; yliu21@usf.edu).}
\thanks{Jie Xu is with the Department of Electrical and Computer Engineering, University of Florida, Gainesville, FL 32611 USA (e-mail: jie.xu@ufl.edu).}
\thanks{Jiahao Xue and Zhuo Lu are with the Department of Electrical Engineering,
University of South Florida, Tampa, FL 33620 USA (e-mail: jiahao@usf.edu; zhuolu@usf.edu).}%

}

\maketitle

\begin{abstract}
	Today’s 5G and NextG wireless networks are moving toward using the coordinated multi-point (CoMP) transmission and reception technique, where a client can be simultaneously served by multiple base stations (BSs) for better communication performance. However, traditional hierarchical federated learning (HFL) architectures impose the constraint that each client can be associated with only one edge server (ES) at a time. If we keep using the traditional HFL architectures in modern hierarchical networks for model training, the benefits of the CoMP technique would remain unexploited and leave room for further improvements in training efficiency. To address this issue, we propose hybrid hierarchical federated learning (HHFL), which allows clients in overlapping regions to simultaneously communicate with multiple edge servers (ESs) for model aggregation. HHFL is able to enhance inter-ES knowledge sharing, thereby mitigating model divergence and improving training efficiency. We provide a rigorous theoretical convergence analysis with a convergence upper bound to validate its effectiveness. Experimental results show that HHFL outperforms traditional HFL, particularly when the data across different ESs is not independent and identically distributed (non-IID). For example, when each ES is dominated by only two of the ten classes and 15 out of the 57 clients can connect to multiple ESs, HHFL achieves up to 2× faster convergence under the same configuration. These results demonstrate that HHFL provides a scalable and efficient solution for FL model training in today’s and NextG wireless networks.
	
\end{abstract}

\begin{IEEEkeywords}
	Federated learning (FL); Hierarchical Federated Learning (HFL); Multi-server; Training efficiency.
\end{IEEEkeywords}

\section{Introduction}
\IEEEPARstart{F}{ederated} learning (FL) in wireless networking is able to effectively utilize data locally stored by wireless users for machine learning tasks with data privacy \cite{brik2020federated,nguyen2021federated,yin2020fedloc}. In real-world applications, the design of the FL architecture, which defines how clients and servers communicate and collaborate during the learning process, must account for the network topology, which describes how devices are physically or logically connected within the wireless system, as their alignment directly impacts training effectiveness and efficiency.

In small-scale wireless networks, the star network topology is widely adopted, where a centralized server device directly connects to multiple distributed client devices to finish a training task \cite{mehmet1988traffic,petrek2001large,goratti2015nacrp,barranco2006active,day2002comparative}. As networks continue to expand and become increasingly complex, as exemplified by the Internet of Things (IoT), large-scale vehicular networks (V2X), and edge-cloud computing frameworks, the escalating demands for communication and computation often exceed the capacity of a single server device. To address these challenges, the hierarchical network topology is employed, where intermediate-layer servers are designed to reduce the load on the previously centralized server \cite{lessmann2007parameterized,xu2021adaptive,ravasz2003hierarchical}. To fit this common topology, hierarchical federated learning (HFL) has been proposed, as illustrated in Fig.~\ref{Fig:basickeyframework}(a). {\hlb In this architecture, training proceeds periodically. The cloud server (CS) disseminates the global model to all edge servers (ESs), which broadcast it to their clients for initialization. Clients perform local updates and upload their models to the connectable ES(s) for aggregation; the aggregated ES model is then broadcast back to clients. After several client–ES rounds, ESs upload their models to the CS for aggregation to update the global model.}

\begin{figure}[t!]
	\centering
	\includegraphics[width=0.5\textwidth]{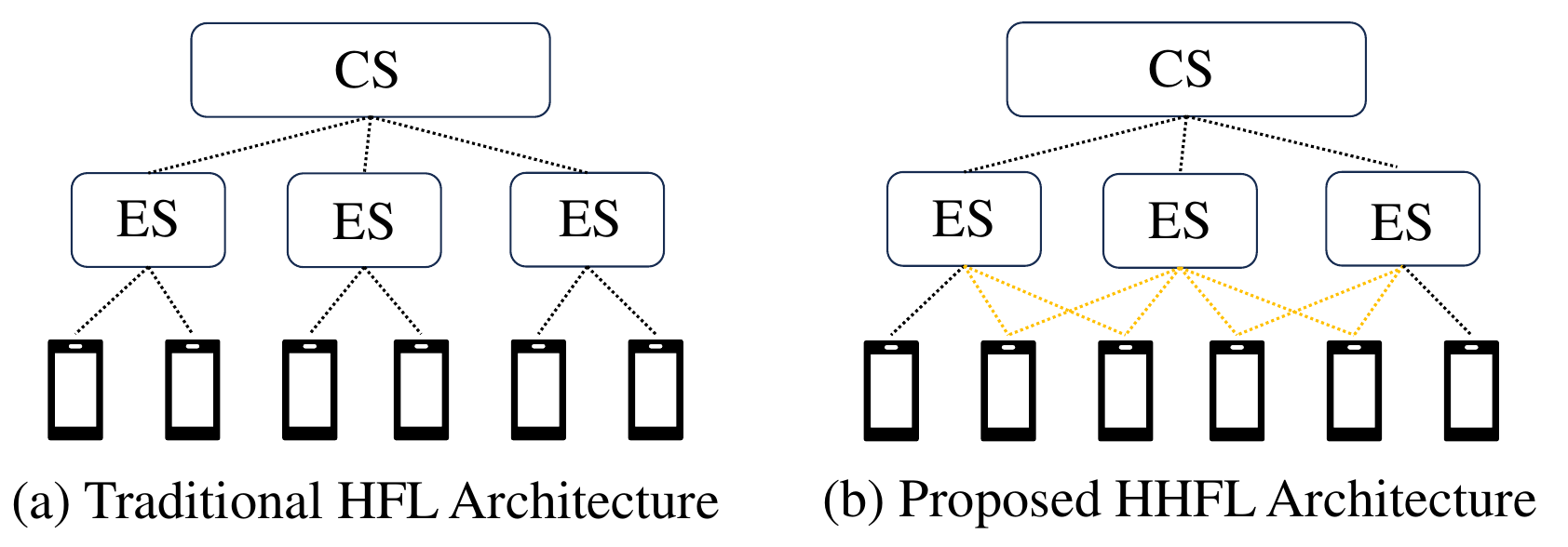}
	\caption{The traditional HFL architecture and our proposed architecture.}
	\label{Fig:basickeyframework}
\end{figure}

As user demands for high data rates, low latency, and strong reliability continue to grow, the multi-point connection mode, i.e., coordinated multi-point (CoMP), has been proposed as a more advanced alternative to the traditional single-point approach in hierarchical network architectures \cite{jiao2023performance, ye2023multi}. In this mode, clients can concurrently connect to multiple intermediate-layer base stations (BSs, which correspond to ESs in FL), which better utilizes network resources and enhances communication performance. This connection mode has already seen widespread adoption in 5G networks and is expected to play an even more critical role in the development of 6G \cite{xiao2024space,xu2020research,yang2022federated,chowdhury20206g,akyildiz20206g,yang20196g}.
Theoretically, we could continue using the traditional HFL architecture in today’s and future hierarchical networks that support CoMP. In this case, each client is assigned to communicate with a single BS, either randomly or based on specific communication parameters such as received signal strength (RSS). However, this design does not fully harness the advantages of multi-point mode to benefit FL training. {\hlb In particular, under the HFL framework, the models of different ESs perform updates in isolation for extended periods between two consecutive aggregations at the CS; their updated models tend to accumulate biases from heterogeneous local gradients and drift toward localized optima. This inconsistency hinders global convergence, especially when the data distribution across ESs is not independent and identically distributed (non-IID).} 

To address this issue, we propose a novel hybrid hierarchical federated learning (HHFL) architecture, illustrated in Fig.~\ref{Fig:basickeyframework}(b). {\hlb In HHFL, each client obtains the model parameters from its connected ESs and aggregates them via simple arithmetic averaging (i.e., summing the received models and dividing by the number of connected ESs). The client then performs local updates and periodically uploads the updated models back to the connected ESs for further aggregation.} After a fixed number of aggregations, the ESs transmit their aggregated models to the CS for further aggregation, resulting in an updated global model. Then, this global model is sent back to the ESs. {\hlb Compared to traditional HFL, a distinct advantage of HHFL is that each client in overlapping areas functions as a knowledge bridge between neighboring ESs. Specifically, by aggregating models from multiple neighboring ESs and subsequently uploading the updated model to all associated servers, the client effectively fuses knowledge derived from distinct ESs and propagates this fused knowledge across cell boundaries. As a result, clients under a given ES can indirectly benefit from knowledge carried by clients covered by other ESs, which mitigates overly localized model updates at the ES level, makes the aggregated updates more consistent with the global optimization objective, and ultimately accelerates convergence. }

Through the rigorous analysis presented in Section~\ref{Sub:convergence}, we demonstrate that HHFL achieves effective convergence as training progresses, with its convergence upper bound formally established therein.
The experimental results in Section~\ref{Sec:evaluation} show that when the data distribution across ESs is non-IID and some clients are located in overlapping regions covered by multiple ESs, HHFL exhibits a faster convergence speed compared to traditional HFL. Specifically, in a representative case where the data held by each ES is dominated by only two out of the ten classes, and 15 out of the 57 clients are located in overlapping regions, HHFL achieves approximately a $100\%$ improvement in convergence efficiency over traditional HFL. Our contributions are as follows.

\begin{enumerate}
	\item {\hlb We propose the novel HHFL architecture based on the CoMP technology. This architecture enables clients located in overlapping areas to act as bridges and facilitates knowledge sharing across ESs.}
	\item {\hlb We provide a rigorous theoretical convergence analysis of HHFL, and discuss its practical deployment considerations to confirm its feasibility in modern 5G/NextG networks.}
	\item {\hlb We conduct extensive experiments to demonstrate the superior efficiency of HHFL over traditional HFL.}
\end{enumerate}

The rest of this paper is organized as follows: Section~\ref{Sec:preliminaries} introduces the preliminaries of FL. Section~\ref{Sec:design} introduces the HHFL architecture in detail, along with a well-designed algorithm. Section~\ref{Sec:performance} analyzes the performance of our HHFL architecture. Section~\ref{Sec:evaluation} conducts experimental simulations to validate the high efficiency of HHFL.  
Section~\ref{Sec:relatedwork} summarizes related works, followed by the conclusion in Section~\ref{Sec:conclusion}.

\section{Preliminaries and Models}\label{Sec:preliminaries}
In this section, we briefly introduce the classical FL architectures and their corresponding algorithms.

\subsection{Basic FL Architecture}

A basic FL architecture includes $K$ participating clients that directly communicate with a CS that enables collaborative model training by aggregating locally computed updates from all clients \cite{mcmahan2017communication,kairouz2021advances,bonawitz2017practical,li2020federated,smith2017federated}. Specifically, in each communication round, the CS broadcasts the current global model to clients. Each client then performs local training based on its private data for $E$ steps to get the model update. Finally, each  client returns its update to the CS, which aggregates these model updates to refine the global model and send them in the next round.

One of the most representative algorithms for this basic architecture is \textit{FedAvg}~\cite{mcmahan2017communication}, in which the CS updates the global model by computing a weighted average of the client models.

Formally, let $t$ denote the index of local update steps, and $t_0$ be the starting step of a communication round. Then, the \textit{FedAvg} algorithm proceeds iteratively as follows:

\begin{enumerate}
	\item The CS broadcasts the current global model $w^{t_0}$ to all $K$ clients, and each client $i$ initializes its local model as $w_i^{t_0} = w^{t_0}$.
	\item Each client $i$ performs $E$ steps of local updates as:
	\begin{equation}\label{Eq:sgd}
		\begin{split}
			w_i^{{t_0}+e+1}=w_i^{{t_0}+e}-\eta_{{t_0}+e}\nabla F_i(w_i^{{t_0}+e}),
		\end{split}
	\end{equation}
	where $e = 0, \dots, E-1$, $\eta_{{t_0}+e}$ is the learning rate at step ${t_0}+e$, and $F_i(w_i^{{t_0}+e})$ is the average loss of client $i$ with respect to its local model $w_i^{{t_0}+e}$. After  $E$ local updates, the  model of client $i$ becomes $w_i^{t_0+E}$.
	\item The CS aggregates all the client models to get an updated global model as:
	\begin{equation}\label{Eq:aggregation}
		\begin{split}
			w^{{t_0}+E}=\sum_{i=1}^{K}p_iw_i^{{t_0}+E}.
		\end{split}
	\end{equation}
	where $p_i$ denotes the proportion of data held by client $i$ relative to the total amount of data across all clients.
\end{enumerate}
These three procedures are repeated until either a predefined number of communication rounds is reached or the model meets a specified convergence criterion.

\subsection{The HFL Architecture}\label{SubSec:HFL}
As communication networks scale and user demands continue to grow, the basic architecture of FL increasingly falls short of meeting practical requirements. To address these issues, the HFL architecture has been introduced \cite{liu2020client}. Unlike the basic client-server setup, HFL places ESs between clients and the CS, forming a multi-layer structure of client-ES-CS, as shown in Fig.~\ref{Fig:basickeyframework}(a). In each global communication round, the global model is first transmitted from the CS to all ESs, which then forward it to their respective clients. Upon receiving the model, each client performs $E$ steps of local updates and sends the updated model to its associated ES for \textit{edge aggregation}.  Each ES performs $G$ rounds of edge aggregation with its connected clients. Finally, the updated models from all ESs are sent back to the CS, which performs \textit{cloud aggregation} to update the global model.

A representative HFL algorithm is \textit{Hier-FedAvg}, which extends the classical \textit{FedAvg} algorithm to the multi-layer setting \cite{liu2020client}. Suppose there are $N$ ESs, denoted by $ES_1, ES_2, \dots, ES_N$. For each $ES_n$, let $C_n$ denote the set of indices of all clients that it covers. Similarly, for each client $i$ with $i \in \{1, \dots, K\}$, let $S_i$ denote the indices of ES to which client $i$ can connect to.
Then, the \textit{Hier-FedAvg} algorithm proceeds iteratively as follows:

\begin{enumerate}
	\item  CS broadcasts its global model $w^{t_0}$ to all ESs, and each $ES_n$ sets its current model  $w_{(n)}^{{t_0}}$ to $w^{t_0}$.
	\item Each client $i$ initializes its model as:
	\begin{equation}\label{Eq:clientinitial}
		\begin{split}
			w_{i}^{{t_0}}=w_{(S_i)}^{{t_0}}.
		\end{split}
	\end{equation}
	\item Each client $i$ performs $E$ steps of local updates as:
	\begin{equation}\label{Eq:localupdate}
		\begin{split}
			w_{i}^{{t_0}+e+1}=w_{i}^{{t_0}+e}-\eta_{{t_0}+e}\nabla F_i(w_i^{{t_0}+e}).
		\end{split}
	\end{equation}
	After $E$ local updates, the model of client $i$ is $w_i^{{t_0}+E}$.
	
	\item Each client $i$ sends its model $w_i^{{t_0}+E}$ to the connected ES for {edge aggregation}. For each $ES_n$, its aggregated model is as:
	\begin{equation}\label{Eq:ES aggregationfor}
		\begin{split}
			w_{(n)}^{{t_0}+E}=\sum_{i\in C_n}\frac{N_i}{N_n}w_i^{{t_0}+E},
		\end{split}
	\end{equation}
	where $N_i$ is the number of local data held by client $i$, and $N_n = \sum_{i \in \mathcal{C}_n} N_i$ is the number of data covered by $ES_n$.
	\item After repeating 2), 3), and 4) for $G$ rounds, each $ES_n$ sends its model to CS for {cloud aggregation}. Then, the updated global model is as:
	\begin{equation}\label{Eq:globalm}
		\begin{split}
			w^{{t_0}+GE}=\sum_{n=1}^{N} \frac{N_n}{N_{total}}w_{(n)}^{{t_0}+GE}.
		\end{split}
	\end{equation}
	where $N_{total}= \sum_{n=1}^{N} N_n$ is the number of data covered by all ESs.
\end{enumerate}
These procedures are repeated until either a specified number of training steps is reached or the model meets a specified convergence criterion.

\section{Architecture Design} \label{Sec:design}
In this section, we first present the motivation behind our HHFL architecture and describe the overall design. Then, we design the algorithm for this architecture.

\subsection{Motivation and Architecture Design}
With the increasing densification of wireless infrastructures and the growing demand for seamless coverage in 5G and NextG communication systems, overlapping regions among BSs have become increasingly common \cite{bhushan2014network,borralho2021survey,ullah2023survey,rehman2023survey}. The coordinated multi-point (CoMP) transmission and reception technique has been proposed \cite{shamai2001enhancing} as an enhanced technique over traditional signal-point model (i.e., a client can communicate with only one BS). As shown in Fig.~\ref{Fig:CoMP}, compared to the traditional single-point mode, CoMP allows a client to simultaneously transmit to and receive data from multiple BSs, thereby enhancing system performance by improving spectral efficiency, reducing inter-cell interference, and significantly boosting the quality of service for users \cite{shamai2001enhancing,qamar2017comprehensive,irram2020coordinated,shen2022comp}. CoMP has been widely adopted in the 5G era and is expected to play an even more significant role in NextG networks \cite{irram2020coordinated,shen2022comp,solaija2021generalized,jiang2021road}.

\begin{figure}[t!]
	\centering
	\includegraphics[width=0.35\textwidth]{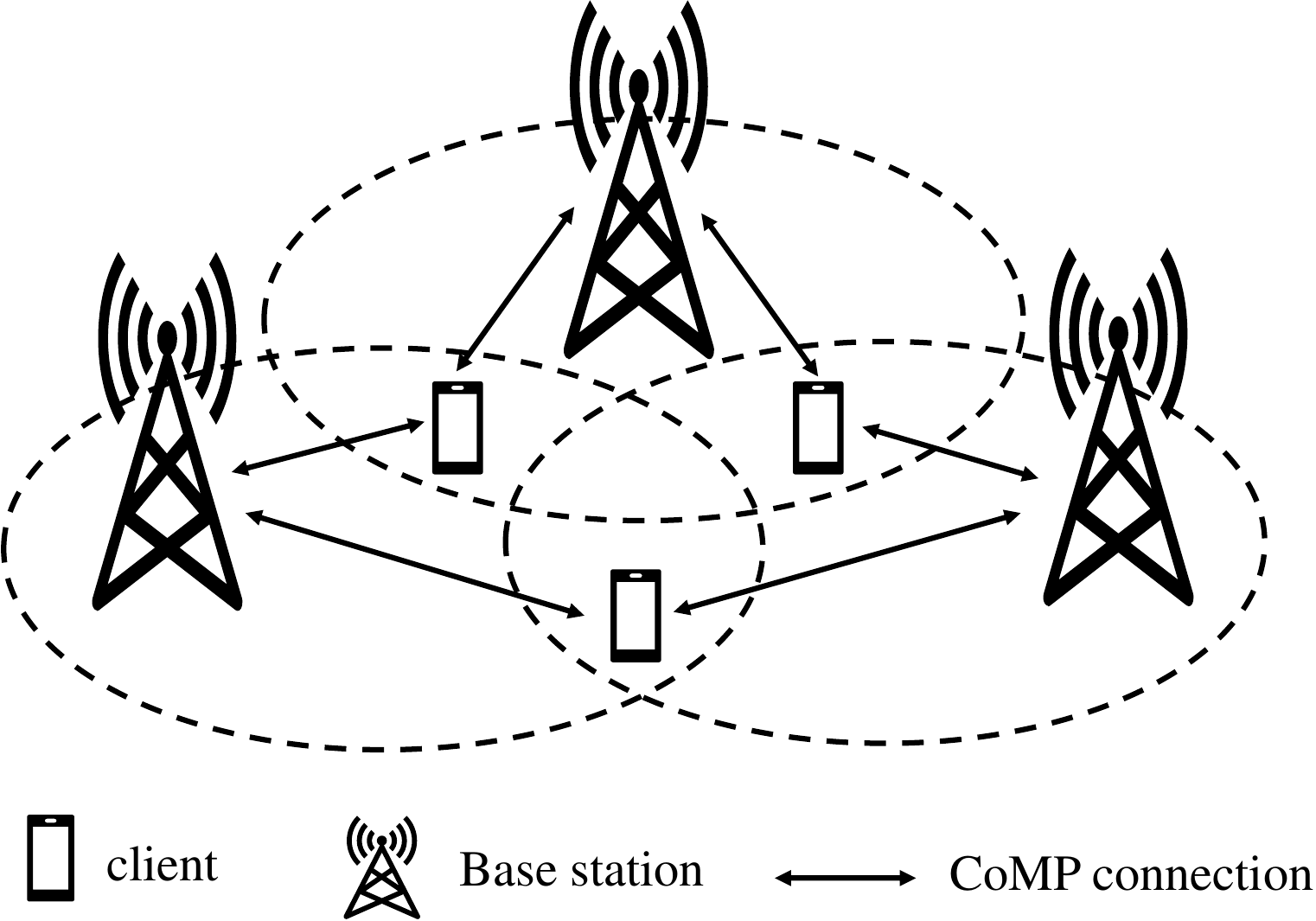}
	\caption{Coordinated multi-point technique.}
	\label{Fig:CoMP}
\end{figure}

Although some existing studies on HFL have considered making use of the overlapping regions and allow clients to be associated with multiple ESs, such as enabling dynamic edge association \cite{luo2020hfel}, leveraging statistical similarities across clients \cite{ma2024personalized}, optimizing staleness-aware aggregation schedules \cite{wu2023hiflash}, analyzing security implications of multi-edge connectivity \cite{alqattan2024security}, or modeling asynchronous communication and dynamic client–ES interactions \cite{mitra2023timely}, {\hlb they essentially still follow the conventional HFL paradigm, where multi-connectivity in a CoMP-capable network is not utilized. }

To address this under-utilization, we propose the new HHFL to effectively leverage the CoMP technique to benefit the FL training. Specifically, the FL training runs in a hybrid mode: for those clients located in overlapping regions, they are able to simultaneously communicate with multiple ESs through CoMP techniques. In contrast, clients that fall within the coverage of only a single ES still communicate with the ES alone. {\hlb In each global communication round, the CS first disseminates the global model to all ESs. Then, each ES transmits its current model to the clients within its coverage to initialize their local models. For clients in any overlapping area, CoMP-coordinated scheduling (CoMP-CS) allocates frequency-orthogonal physical resource blocks (PRBs) to their accessible ESs via the Xn interface (a standard interface for inter-BS/ES communication) for downlink transmission. This mechanism prevents inter-cell interference, enabling the simultaneous reception of multiple models. Upon receiving these models, the clients in overlapping areas perform \textit{client aggregation} to initialize their local models.
	Next, each client conducts local training for $E$ steps and uploads its updated model to all associated ESs for {edge aggregation}. 
	For each client in the overlapping area, a single uplink transmission of its local model can be received and processed in parallel by multiple ESs via CoMP-capable uplink multi-point reception. Meanwhile, different clients in the same area are scheduled on frequency-orthogonal PRBs to avoid mutual interference. After $G$ rounds of {edge aggregation}, each ES sends its current model to the CS for {cloud aggregation}, which updates the global model.}

{\hlb It is worth noting that employing CoMP inevitably introduces additional synchronization requirements, which increase implementation complexity. Nevertheless, our proposed HHFL architecture mitigates these costs through a lightweight and practical CoMP design. Specifically, rather than relying on traditional CoMP joint reception, multiple ESs independently receive and process the uplink data transmitted by the same client. This bypasses the need for complex inter-BS cooperation, such as coherent joint decoding or soft combining, which typically necessitates stringent sub-symbol phase synchronization and heavy backhaul data exchange. Moreover, other fundamental synchronization requirements, such as basic frame and slot timing alignment to absorb marginal propagation delays, strictly align with standard network operations. These macro-level synchronization mechanisms are inherently mandated in all modern cellular systems to maintain basic connectivity and prevent cross-link interference, meaning our architecture successfully accommodates multi-point reception without imposing the prohibitive physical-layer synchronization overheads typically associated with traditional CoMP.
	Finally, to avoid the latency of per-round dynamic negotiation among ESs for allocating PRBs to the clients in overlapping areas, HHFL exploits the inherently predictable and periodic traffic patterns of FL. Because the model payload size remains constant and uploads occur at regular intervals (i.e., exactly after $E$ local updates), the network can leverage 5G configured grants (CG) \cite{liu2020analyzing} to pre-allocate periodic uplink resources, effectively bypassing dynamic resource requests. 
}

{\hlb Building upon this practical multi-connectivity communication foundation, from an algorithmic perspective, HHFL differs from traditional HFL in that it involves two levels of aggregation: {client aggregation} and {edge aggregation}. Together, these mechanisms allow clients located in overlapping areas to act as knowledge {bridges} between neighboring ESs. Specifically, by aggregating models from multiple ESs, these clients fuse knowledge derived from distinct ESs; subsequently, by uploading their locally trained updates back to all associated ESs, they propagate this fused knowledge across cell boundaries. This mechanism enables clients under a given ES to indirectly benefit from data distributions predominantly located in neighboring ES coverage areas, thereby mitigating overly localized model updates at the ES level. When ES-level non-IID is pronounced, this extra inter-ES sharing makes their aggregated updates more consistent with the global optimization direction, thereby accelerating convergence.}

\subsection{HHFL Algorithm Design}
Let $S_i$ denote the set of indices of all ESs that client $i$ can connect to, rather than a single ES as previously defined. Similar to the Hier-FedAvg algorithm for HFL described in Section~\ref{SubSec:HFL}, our HHFL algorithm operates iteratively:
\begin{enumerate}
	\item  CS broadcasts the global model $w^{t_0}$ to all ESs, and each $ES_n$ sets its current model parameter $w_{(n)}^{{t_0}}$ to $w^{t_0}$.
	\item Each client $i$ executes {client aggregation} to initialize its client model as:
	
	\begin{equation}\label{Eq:wforES}
		\begin{split}
			w_{i}^{t_0}=\frac{1}{|S_i|}\sum_{n\in S_i}{w_{(n)}^{t_0}},
		\end{split}
	\end{equation}
	where $|S_i|$ is the number of elements in set $S_i$.
	\item Each client $i$ performs $E$ steps of local updates as:
	\begin{equation}\label{Eq:localupdateforproposed}
		\begin{split}
			w_{i}^{{t_0}+e+1}=w_{i}^{{t_0}+e}-\eta_{{t_0}+e}\nabla F_i(w_i^{{t_0}+e}).
		\end{split}
	\end{equation}
	After $E$ local updates, the model of client $i$ is $w_i^{{t_0}+E}$.
	
	\item Each client $i$ sends its model $w_i^{{t_0}+E}$ to the ESs it can connect to for {edge aggregation}. For each $ES_n$, its aggregated model is as:
	\begin{equation}\label{Eq:ESaggregationforproposed}
		\begin{split}
			w_{(n)}^{{t_0}+E}=\sum_{i\in C_n}\frac{1}{\lozenge_n}\frac{p_i}{|S_i|}w_i^{{t_0}+E},
		\end{split}
	\end{equation}
	where $\lozenge_n$ is determined by
	\begin{equation}\label{Eq:lozenge}
		\begin{split}
			\lozenge_n=\sum_{i\in C_n}\frac{p_i}{|S_i|}.
		\end{split}
	\end{equation}
	\item After repeating procedures 2), 3), and 4) for $G$ rounds, each $ES_n$ sends its model  to CS for {cloud aggregation}. Then, the updated global model is as:
	\begin{equation}\label{Eq:globalforproposed}
		\begin{split}
			w^{t+GE}=\sum_{n=1}^{N}\lozenge_n w_{(n)}^{t+GE}.
		\end{split}
	\end{equation}
	
\end{enumerate}
These procedures will be repeated until either the specified overall training steps are reached or the model meets a specified convergence criterion. The pseudo-code of the algorithm is shown in Algorithm~\ref{ALG:HHFL}.

\begin{algorithm}[t]
	\caption{The Proposed HHFL Algorithm}\label{ALG:HHFL}
	\begin{algorithmic}[1]
		\STATE \textbf{Input:} {Initialized parameter $w^0$ and total training steps $T$}
		\STATE \textbf{Output:} {Final global parameter $w^T$}
		\STATE Set $w_{i}^0=w^0$ for all clients $i=1,...,K$
		\FOR{$t = 0, 1, \ldots, T-1$}
		\FORALL{client $i=1,...,K$ \textbf{in parallel}}
		\STATE $v_{i}^{t+1}=w_{i}^{t}-\eta_{t}\nabla F_i(w_i^{t})$  // Model after one local update
		
		\ENDFOR
		\IF{$E \nmid t+1$}
		\FORALL{client $i=1,...,K$ \textbf{in parallel}}
		\STATE $w_{i}^{t+1}=v_{i}^{t+1}$
		
		\ENDFOR
		\ENDIF
		\IF{$E \mid t+1$}
		\FORALL{edge server $ES_n$, $n=1,...,N$ \textbf{in parallel}}
		\STATE $w_{(n)}^{t+1}=\sum_{i\in C_n}\frac{1}{\lozenge_n}\frac{p_i}{|S_i|}v_i^{t+1}$ // Edge aggregation
		\ENDFOR
		\IF{ $GE \mid t+1$}
		\STATE 	$w^{t+1}=\sum_{n=1}^{N}\lozenge_n w_{(n)}^{t+1}$ // Cloud aggregation
		\FORALL{edge server $ES_n$ with $n=1,...,N$ \textbf{in parallel}}
		\STATE $w_{(n)}^{t+1}=w^{t+1}$ // CS broadcasts its parameter to all ESs
		\ENDFOR
		\ENDIF
		
		\FORALL{client $i=1,...,K$ \textbf{in parallel}}
		\STATE $w_{i}^{t+1}=\frac{1}{|S_i|}\sum_{n\in S_i}{w_{(n)}^{t+1}}$ // Client aggregation
		\ENDFOR
		\ENDIF
		\ENDFOR
		\STATE $w^{T}=\sum_{n=1}^{N}\left(\lozenge_n \sum_{i\in C_n}\frac{1}{\lozenge_n}\frac{p_i}{|S_i|}w_i^{T}\right)$ // Final global parameter obtained through edge aggregations and cloud aggregation for all client models at step $T$.
	\end{algorithmic}
\end{algorithm}

\section{Performance Analysis}\label{Sec:performance}
In this section, we analyze the convergence properties of the HHFL architecture and derive an upper bound on its convergence to demonstrate its effectiveness. Then, we compare the proposed architecture with the traditional HFL architecture to demonstrate its efficiency advantages.

\subsection{Convergence Analysis}\label{Sub:convergence}
\subsubsection{Evolution Path of Client Models}
In the HHFL architecture, the global model is closely related to the local models maintained by individual clients. To investigate whether and how the global model converges as training progresses, we first need to understand how client models evolve after every local update.
We use the state vectors $\mathbf{w}^t = [w_1^t, \dots, w_K^t]^\top$ and $\mathbf{w}^{t+1} = [w_1^{t+1}, \dots, w_K^{t+1}]^\top$ to represent the collection of local models of all clients at step $t$ and step $t+1$, respectively.
Next, we analyze the evolution path from $\mathbf{w}^t$ to $\mathbf{w}^{t+1}$, denoted by $\mathbf{w}^t \rightarrow \mathbf{w}^{t+1}$.

We first introduce an intermediate state vector $\mathbf{v}^{t+1} = [v_1^{t+1}, \dots, v_K^{t+1}]^\top$. This vector represents the collection of all clients’ local updates at step $t$, obtained immediately after local training and before aggregation (if any). 
Then, we can decompose the evolution path $ \mathbf{w}^t \rightarrow \mathbf{w}^{t+1} $ into two successive transitions: 
$ \mathbf{w}^t \rightarrow \mathbf{v}^{t+1} $ (local training) and 
$ \mathbf{v}^{t+1} \rightarrow \mathbf{w}^{t+1} $ (aggregation, if performed).
This decomposition enables us to examine the effects of computation and aggregation separately, providing a clearer view of how model evolves during training.
In other words, understanding the evolution $\mathbf{w}^t \rightarrow \mathbf{w}^{t+1}$ requires analyzing both $\mathbf{w}^t \rightarrow \mathbf{v}^{t+1}$ and $\mathbf{v}^{t+1} \rightarrow \mathbf{w}^{t+1}$.

Since $v_i^{t+1}$ is obtained by client $i$ after one local update based on the model $w_i^t$, we have:
\begin{equation}\label{Eq:w_t2v_t+1}
	\begin{split}
		v_i^{t+1}=w_i^{t}-\eta_{t}\nabla F_i(w_i^{t}),
	\end{split}
\end{equation}
which characterizes the first transition $\mathbf{w}^t \rightarrow \mathbf{v}^{t+1}$.
For the transition $ \mathbf{v}^{t+1} \rightarrow \mathbf{w}^{t+1} $, we analyze the process under three distinct cases based on the value of step $ t+1 $:

\noindent \textbf{Case 1}: $E \nmid t+1$. At this step, there is no any aggregation, and we have
\begin{equation}\label{Eq:case1}
	\begin{split}
		w_i^{t+1} = v_i^{t+1}.
	\end{split}
\end{equation}

\noindent \textbf{Case 2}: $EG\mid t+1$. At this step, {edge aggregation}, {cloud aggregation}, and {client aggregation} perform sequentially. Then, we have
\begin{equation}\label{Eq:wforMultiEG1}
	\begin{split}
		w_{i}^{t+1}&\stackrel{(a)}{=}\frac{1}{|S_i|}\sum_{n\in S_i}{w_{(n)}^{t+1}}\stackrel{(b)}{=}\frac{1}{|S_i|}\sum_{n\in S_i}{w^{t+1}}
		=w^{t+1}\\&\stackrel{(c)}{=}\sum_{n=1}^{N}\lozenge_n w_{(n)}^{t+1}\stackrel{(d)}{=}\sum_{n=1}^{N}\left(\lozenge_n  \sum_{i\in C_n}\frac{1}{\lozenge_n}\frac{p_i}{|S_i|}v_i^{t+1}\right)\\
		&=\sum_{n=1}^{N}\sum_{i\in C_n}\frac{p_i}{|S_i|} v_i^{t+1}\\&=\sum_{i\in C_1}\frac{p_i}{|S_i|} v_i^{t+1}+...+\sum_{i\in C_N}\frac{p_i}{|S_i|} v_i^{t+1},
	\end{split}
\end{equation}
where $\stackrel{(a)}{=}$ comes from the {client aggregation}, $\stackrel{(b)}{=}$ comes from the CS broadcasting, $\stackrel{(c)}{=}$ comes from the {cloud aggregation}, and $\stackrel{(d)}{=}$ comes from the {edge aggregation}.
For all $i \in \{1, \dots, K\}$, the term $\frac{p_i}{|S_i|} v_i^{t+1}$ would appear $|S_i|$ times in $\sum_{i \in C_1} \frac{p_i}{|S_i|} v_i^{t+1} + \cdots + \sum_{i \in C_N} \frac{p_i}{|S_i|} v_i^{t+1}$, and thus we have
\begin{equation}\label{Eq:wforMultiEG2}
	\begin{split}
		&\sum_{i \in C_1} \frac{p_i}{|S_i|} v_i^{t+1} + \cdots + \sum_{i \in C_N} \frac{p_i}{|S_i|} v_i^{t+1}\\
		&=\sum_{i=1}^{K}\frac{1}{|S_i|} p_i v_i^{t+1} |S_i|=\sum_{i=1}^{K} p_i v_i^{t+1}.
	\end{split}
\end{equation}
By combining \eqref{Eq:wforMultiEG1} with \eqref{Eq:wforMultiEG2}, we can conclude 
\begin{equation}\label{Eq:case2}
	\begin{split}
		w_{i}^{t+1}=\sum_{i=1}^{K} p_i v_i^{t+1}.
	\end{split}
\end{equation}

\noindent \textbf{Case 3}: $E\mid t+1$ and $EG\nmid t+1$. At this step, {edge aggregation} and {client aggregation} are performed sequentially, whereas {cloud aggregation} does not occur.  Since $w_{i}^{t+1}=\frac{1}{|S_i|}\sum_{n\in S_i}{w_{(n)}^{t+1}}$ and  $w_{(n)}^{t+1}=\sum_{i\in C_n}\frac{1}{\lozenge_n}\frac{p_i}{|S_i|}v_i^{t+1}$, we have

\begin{equation}\label{case3}
	\begin{split}
		w_{i}^{t+1}=\frac{1}{|S_i|}\sum_{n\in S_i}\sum_{i\in C_n}\frac{1}{\lozenge_n}\frac{p_i}{|S_i|}v_i^{t+1}.
	\end{split}
\end{equation}
In summary, the model of client $i$ at step $t+1$ is represented as follows:
\begin{equation}\label{Eq:paraforwMsc}
	\begin{split}
		w_i^{t+1} =
		\begin{cases}
			v_i^{t+1}, & \text{if } E\nmid t+1\\
			\sum_{i=1}^{K}p_iv_i^{t+1}, & \text{if } EG\mid t+1\\
			\frac{1}{|S_i|}\sum_{n\in S_i}\sum_{i\in C_n}\frac{1}{\lozenge_n}\frac{p_i}{|S_i|}v_i^{t+1}, & \text{otherwise}\\
		\end{cases}
	\end{split}
\end{equation}
which characterizes the second transition $\mathbf{v}^{t+1} \rightarrow \mathbf{w}^{t+1}$.
Now, the evolution path $\mathbf{w}^t \rightarrow \mathbf{w}^{t+1}$ can be clearly obtained.
\subsubsection{Virtual Global Model and Client Contributions} \label{fairness}
Since the global model is updated only once every $EG$ local training steps, $\mathbf{w}^t$ cannot reflect the training performance at step $t$ in real time. To address this issue, we define a virtual global model $\overline{w^t}$, which is obtained by hypothetically performing {edge aggregation} followed by {cloud aggregation} over all client models.
This virtual global model can be expressed as:

\begin{equation}\label{Eq:virw}
	\begin{split}
		\overline{w^t} = \sum_{n=1}^{N} \left( \lozenge_n \sum_{i \in C_n} \frac{1}{\lozenge_n} \frac{p_i}{|S_i|} w_i^{t} \right).
	\end{split}
\end{equation}
Furthermore, based on \eqref{Eq:wforMultiEG1} and~\eqref{Eq:wforMultiEG2}, the above $\overline{w^t}$ can be rewritten  as follows:
\begin{equation}\label{Eq:virw2}
	\begin{split}
		\overline{w^t} = \sum_{i=1}^K p_i w_i^t.
	\end{split}
\end{equation}
Similar to  $\overline{w^t}$, we define model $\overline{v^t}$ corresponding to $\mathbf{v}^t$ as:
\begin{equation}
	\begin{split}
		\overline{v^t} = \sum_{i=1}^K p_i v_i^t.
	\end{split}
\end{equation}
{\hlb It can be seen that each client has the same overall contribution weight in the global model, which is crucial for maintaining client-level fairness and stable training performance. In particular, clients in overlapping areas are not made more influential in the global optimization simply because they participate in multiple edge aggregations. This prevents their updates from being overweighted and helps avoid amplifying any adverse impact caused by potential data bias on those clients.}
\subsubsection{Upper Bound for the Convergence} \label{bound}
We first introduce four assumptions commonly used in FL \cite{han2021fedmes,fang2024submodel,li2019convergence,fang2022communication,wang2019adaptive,nguyen2020fast,amiri2021convergence,reisizadeh2020fedpaq} and distributed optimization \cite{zhang2012communication,stich2018local}: 

\begin{itemize}
	\item \textit{Assumption 1:} For all $i\in\{1,...,K\}$, $F_i(w)$ is a $L$-smooth function, i.e., for any $\mathbf w$ and $\mathbf {w'}$, $F_i(\mathbf{w}) \leq F_i(\mathbf{w}') + (\mathbf{w} - \mathbf{w}') \nabla F_i(\mathbf{w}') + \frac{L}{2} \| \mathbf{w} - \mathbf{w}' \|^2$.
	
	\item \textit{Assumption 2:} For all $i\in\{1,...,K\}$, $F_i(w)$ is a $\mu$-strongly convex function, i.e., for any $\mathbf w$ and $\mathbf {w'}$,
	$F_i(\mathbf{w}) \geq F_i(\mathbf{w}') + (\mathbf{w} - \mathbf{w}') \nabla F_i(\mathbf{w}') + \frac{\mu}{2} \| \mathbf{w} - \mathbf{w}' \|^2$.
	
	\item \textit{Assumption 3:} For all $i\in\{1,...,K\}$, the variance of stochastic gradients in each client is bounded, i.e., for any $\mathbf \xi_i^t\sim \mathbf D_i$, $\mathbb{E} \left\| \nabla F_i\left( \mathbf w_i^t, \mathbf \xi_i^t \right) - \nabla F_i\left( \mathbf w_i^t,\mathbf D_i \right) \right\|^2 \leq \sigma_i^2$.
	
	\item \textit{Assumption 4:} For all $i\in\{1,...,K\}$, the expected squared norm of stochastic gradients is uniformly bounded, i.e., for any $\mathbf \xi_i^t\sim \mathbf D_i$ and $\mathbf w_i^t$, $\mathbb{E} \left\| \nabla F_i\left( \mathbf w_i^t, \mathbf \xi_i^t \right) \right\|^2 \leq H^2 \quad$.
\end{itemize}

With these $4$ assumptions and $2$ definitions in place, we now proceed to present the following conclusions:

\begin{lemma}[Lemma 1 in \cite{li2019convergence}]\label{lemma1}
	Let Assumptions 1 and 2 hold, and $L$, $\mu$ be defined therein. If $\eta_t\leq\frac{1}{4L}$, we have
	\begin{equation}\label{Evminusw}
		\begin{split}
			\mathbb{E} \left\| \overline{{v}^{t+1}} - w^\star \right\|^2
			&\leq (1 - \eta_t \mu) \mathbb{E} \left\| \overline{{w}^t} - w^\star \right\|^2
			+ \eta_t^2 \mathbb{E} \left\| g_t - \overline{{g}^t} \right\|^2 \\
			&\;\;\;\;+ 6L \eta_t^2 \Gamma
			+ 2 \mathbb{E} \sum_{i=1}^K p_i \left\| \overline{{w}^t} - w_i^t \right\|^2,
		\end{split}
	\end{equation}
	where $w^\star$ denotes the optimal global model corresponding to the objective function $F(w)=\sum_{i=1}^K p_i F_i (w)$, $\overline{{g}^t} = \sum_{i=1}^K p_i \nabla F_i(w_i^t,D_i)$ is the full-batch gradient, $g^t = \sum_{i=1}^K p_i \nabla F_i(w_i^t, \xi_i)$ is the stochastic gradient, and   $\Gamma = F^\star - \sum_{k=1}^N p_k F_k^\star\geq 0$ is the optimality gap with $F^\star$ is the minimum value of function $F$ and $F_k^\star$ is the minimum value of function $F_k$ \cite{li2019convergence}.
\end{lemma}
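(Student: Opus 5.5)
The plan is to reproduce the one-step argument of Lemma~1 in \cite{li2019convergence}. It applies because the HHFL-specific structure enters only through the virtual averages. By \eqref{Eq:w_t2v_t+1} and \eqref{Eq:virw2}, $\overline{v^{t+1}}=\sum_i p_i v_i^{t+1}=\overline{w^t}-\eta_t g_t$, where the local steps use the stochastic gradients $\nabla F_i(w_i^t,\xi_i)$. Nothing about edge, client or cloud aggregation appears in the transition $\mathbf{w}^t\to\mathbf{v}^{t+1}$. Hence the lemma is a statement about one step of averaged local SGD, with weights $p_i$.

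First I expand
\begin{equation*}
\|\overline{v^{t+1}}-w^\star\|^2=\|\overline{w^t}-\eta_t\overline{g^t}-w^\star\|^2-2\eta_t\langle \overline{w^t}-\eta_t\overline{g^t}-w^\star,\, g_t-\overline{g^t}\rangle+\eta_t^2\|g_t-\overline{g^t}\|^2 .
\end{equation*}
Taking expectation kills the cross term by unbiasedness, and this produces the term $\eta_t^2\mathbb{E}\|g_t-\overline{g^t}\|^2$. Next I write $\|\overline{w^t}-\eta_t\overline{g^t}-w^\star\|^2=\|\overline{w^t}-w^\star\|^2+A_1+A_2$, where $A_1=-2\eta_t\langle\overline{w^t}-w^\star,\overline{g^t}\rangle$ and $A_2=\eta_t^2\|\overline{g^t}\|^2$.

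For $A_2$, I use convexity of the norm and $L$-smoothness, via $\|\nabla F_i(w)\|^2\le 2L(F_i(w)-F_i^\star)$, to get $A_2\le 2L\eta_t^2\sum_i p_i(F_i(w_i^t)-F_i^\star)$. For $A_1$, I split $\overline{w^t}-w^\star=(\overline{w^t}-w_i^t)+(w_i^t-w^\star)$ inside each summand. The first piece is bounded with Young's inequality and the second with $\mu$-strong convexity. This gives
\begin{equation*}
A_1\le \sum_i p_i\Big(\tfrac{1}{\eta_t}\|\overline{w^t}-w_i^t\|^2+\eta_t\|\nabla F_i(w_i^t)\|^2\Big)-2\eta_t\sum_i p_i\Big(F_i(w_i^t)-F_i(w^\star)+\tfrac{\mu}{2}\|w_i^t-w^\star\|^2\Big).
\end{equation*}
I then bound $\|\nabla F_i\|^2$ by smoothness again, and use $\sum_i p_i\|w_i^t-w^\star\|^2\ge\|\overline{w^t}-w^\star\|^2$ to extract the contraction factor $(1-\eta_t\mu)$.

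The main obstacle is the leftover function-value term
\begin{equation*}
C=\eta_t^2\cdot 2L\sum_i p_i(F_i(w_i^t)-F_i^\star)\cdot(\text{const})-2\eta_t\sum_i p_i(F_i(w_i^t)-F_i(w^\star)),
\end{equation*}
which must be bounded by $6L\eta_t^2\Gamma+\sum_i p_i\|\overline{w^t}-w_i^t\|^2$. Following Li et al., I set $\gamma_t=2\eta_t(1-2L\eta_t)$ and rewrite $C$ as $-\gamma_t\sum_i p_i(F_i(w_i^t)-F^\star)+(2\eta_t-\gamma_t)\sum_i p_i(F^\star-F_i^\star)$. The second sum equals $\Gamma$ and carries the factor $4L\eta_t^2$. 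For the first sum I lower bound $F_i(w_i^t)\ge F_i(\overline{w^t})+\langle\nabla F_i(\overline{w^t}),w_i^t-\overline{w^t}\rangle$ by convexity. I then apply Young's inequality and smoothness to the inner product, and use $F(\overline{w^t})\ge F^\star$. The condition $\eta_t\le\frac{1}{4L}$ guarantees $\eta_t L\le 1/4$ and $\gamma_t\le 2\eta_t$. This makes the residual coefficients collapse to $\le\sum_i p_i\|w_i^t-\overline{w^t}\|^2$ and $\le 2L\eta_t^2\Gamma$. Combining these gives $6L\eta_t^2\Gamma$ total, and the distance terms sum to the stated coefficient $2$. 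Finally, taking expectations yields the claimed inequality.
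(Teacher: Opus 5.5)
Your proposal is correct and follows the proof of Lemma~1 in \cite{li2019convergence}, which the paper imports without reproving. As you observe, the HHFL aggregation structure enters only through $\overline{w^t}=\sum_i p_i w_i^t$, so $\overline{v^{t+1}}=\overline{w^t}-\eta_t g_t$ and the FedAvg one-step argument carries over verbatim.

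One slip to fix: the Young's-inequality part of your bound on $A_1$ is missing the overall factor $\eta_t$. It should read $\sum_i p_i\big(\|\overline{w^t}-w_i^t\|^2+\eta_t^2\|\nabla F_i(w_i^t)\|^2\big)$, which is what your later coefficients already assume (the constant $2$ in $C$ giving $4L\eta_t^2$, and the final factor $2$ on the distance term).
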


\begin{lemma}\label{lemma:v=w}
	In our proposed algorithm, for any $t \in \{1, \ldots, T\}$, we have $\overline{v^{t}} = \overline{w^{t}}$.
	
	{\hlb	\textit{Proof:} See Appendix~A.}
	~\hfill$\blacksquare$
\end{lemma}

\begin{corollary} \label{corollary1}
	Let Assumptions 1 and 2 hold, and $L$, $\mu$ be defined therein. Let $\Delta_t$ be defined by $\mathbb{E} \left\| \overline{{w}^{t}} - w^\star \right\|^2$. If $\eta_t\leq\frac{1}{4L}$, we have
	\begin{equation}\label{EforDelta}
		\begin{split}
			\Delta_{t+1}
			&\leq (1 - \eta_t \mu) \Delta_{t}
			+ \eta_t^2 \mathbb{E} \left\| g_t - \overline{{g}^t} \right\|^2 \\
			&\;\;\;\;+ 6L \eta_t^2 \Gamma
			+ 2 \mathbb{E} \sum_{i=1}^K p_i \left\| \overline{{w}^t} - w_i^t \right\|^2,
		\end{split}
	\end{equation}
	
	\textit{Proof:} Based on Lemma \ref{lemma:v=w}, we have $\overline{v^{t}} = \overline{w^{t}}$, and thus $\mathbb{E} \left\| \overline{{v}^{t+1}} - w^\star \right\|^2 = \mathbb{E} \left\| \overline{{w}^{t+1}} - w^\star \right\|^2 = \Delta_{t+1}$.
	According to (\ref{Evminusw}) in Lemma \ref{lemma1}, it is clear that $\Delta_{t+1}
	\leq (1 - \eta_t \mu) \Delta_{t}
	+ \eta_t^2 \mathbb{E} \left\| g_t - \overline{{g}^t} \right\|^2
	+ 6L \eta_t^2 \Gamma
	+ 2 \mathbb{E} \sum_{i=1}^K p_i \left\| \overline{{w}^t} - w_i^t \right\|^2$.
	~\hfill$\blacksquare$
\end{corollary}

\noindent \textbf{Remark 1.} For $\Delta_{t+1}$, it is bounded by four terms: $\Delta_{t}$, $\mathbb{E} \left\| g_t - \overline{g^t} \right\|^2$, $\Gamma$, and $\mathbb{E} \sum_{i=1}^K p_i \left\| \overline{w^t} - w_i^t \right\|^2$. The first term, $\Delta_{t}$, represents the overall training outcome at the previous step. The second term, $\mathbb{E} \left\| g_t - \overline{g^t} \right\|^2$, indicates the data heterogeneity within each client. The third term, $\Gamma$, reflects the data heterogeneity across different clients,  which is associated with the non-IID property. The final term, $\mathbb{E} \sum_{i=1}^K p_i \left\| \overline{w^t} - w_i^t \right\|^2$, is determined by the specific topology and parameters of the training model.

\begin{lemma}\label{lemma3}
	Let Assumption 4 holds and $H$ be defined therein. Assume $\eta_{t}$ is non-increasing and $\eta_{t}\leq2 \eta_{t+E}$, we have
	\begin{equation}
		\begin{split}\label{Eq:coefficient}
			&\mathbb{E} \sum_{i=1}^K p_i \left\| \overline{w^t} - w_i^t \right\|^2\\
			&\leq\eta_{t}^2 4^{G}H^2(GE+G-2)\left[(GE-1)+E^2(G-1)\right],
		\end{split}
	\end{equation}
	
	{\hlb	\textit{Proof:}  See Appendix~B.}
	~\hfill$\blacksquare$
\end{lemma}
\noindent \textbf{Remark 2.} For the term $\mathbb{E} \sum_{i=1}^K p_i \left\| \overline{w^t} - w_i^t \right\|^2$, its upper bound is mainly determined by three factors: the topology of the model, the data distribution across different clients, and the current step $t$. 

\begin{lemma}\label{lemma4}
	Assume Assumption 3 holds. It follows that
	\begin{equation}
		\begin{split}
			\mathbb{E} \left\| g_t - \overline{{g}^t} \right\|^2\leq\sum_{i=1}^K {p_i}^2 \sigma_i^2.
		\end{split}
	\end{equation}
	
	\textit{Proof:} The proof is given in \cite{li2019convergence}.
	~\hfill$\blacksquare$
\end{lemma}

\begin{corollary} \label{corollary2}
	Let Assumptions 1 to 4 hold, and $L$, $\mu$, $\sigma_i^2$ and $H$ be defined therein. Assume $\eta_{t}$ is non-increasing, $\eta_t\leq\frac{1}{4L}$, and $\eta_{t}\leq2 \eta_{t+E}$. Let $\Delta_t$ be defined by $\mathbb{E} \left\| \overline{{w}^{t}} - w^\star \right\|^2$, then we have
	\begin{equation} \label{Eq:corollary}
		\begin{split}
			\Delta_{t+1}
			&\leq (1 - \eta_t \mu) \Delta_{t}
			+ \eta_t^2 X,
		\end{split}
	\end{equation}
	where $ X = \sum_{i=1}^K p_i^2 \sigma_i^2 + 6L(F^\star - \sum_{k=1}^N p_k F_k^\star) + 2^{2G+1}H^2(GE+G-2)[(GE-1)+E^2(G-1)]$.
	
	{\hlb	\textit{Proof:} See Appendix~C.~\hfill$\blacksquare$}
\end{corollary}
\begin{theorem} \label{Th:theorem1}
	Let Assumptions 1 to 4 hold, and $L$, $\mu$, $\sigma_i^2$ and $H$ be defined therein. Set the learning rate $\eta_t=\frac{\beta}{t+\alpha}$, where $\beta>\frac{1}{\mu}$, $\alpha\geq E>0$, and $\eta_0\leq\min\{\frac{1}{\mu},\frac{1}{4L}\}$. Then, we have
	\begin{equation}\label{convergence:HHFL}
		\begin{split}
			\mathbb{E}[F(\overline{w^t})] - F^* \leq \frac{L}{2} \frac{Z}{t+\alpha },
		\end{split}
	\end{equation}
	where $Z=\max\{\frac{\beta^2 X}{\beta\mu+1},\Delta_0 \alpha\}$ {\hlb with $X = \sum_{i=1}^K p_i^2 \sigma_i^2 + 6L(F^\star - \sum_{k=1}^N p_k F_k^\star) + 2^{2G+1}H^2(GE+G-2)[(GE-1)+E^2(G-1)]$ and $\Delta_0= \left\| {{w}^{0}} - w^\star \right\|^2$.} 
	
	{\hlb	\textit{Proof:} See Appendix~D.~\hfill$\blacksquare$}
	
\end{theorem}

\noindent \textbf{Remark 3.} As $t$ increases, $\mathbb{E}[F(\overline{w}^{\,t})]-F^\star$ gradually approaches zero. From an expectation perspective, $\overline{w}^{\,t}$ converges to the optimum, and the system performance improves accordingly.

{\hlb\noindent \textbf{Remark 4.} From the upper bound in Theorem~1, the drift-related term
	$2^{2G+1}H^2(GE+G-2)\big[(GE-1)+E^2(G-1)\big]$ in $X$
	generally increases with larger $E$ and $G$, suggesting that when edge aggregation (controlled by $E$) and cloud aggregation (controlled by $G$) are less frequent, local-update drift becomes more pronounced, thereby slowing down early-stage convergence.
	Therefore, using smaller $E$ and $G$ at the beginning of training to enable more frequent edge/cloud aggregation can help mitigate drift and accelerate performance improvement.
	As training proceeds, the model typically approaches a stationary point and the learning rate decays, so the marginal performance gain per communication round diminishes; in this regime, maintaining very frequent aggregation often yields limited additional benefit but incurs substantial communication time overhead, thus reducing the end-to-end training efficiency.
	Accordingly, gradually increasing $E$ and $G$ in later stages to lower the aggregation and communication frequency is an efficient practical strategy.}

{\hlb\noindent \textbf{Remark 5.} For HFL, when the four assumptions and the system parameters are set to be the same as those of HHFL, its convergence bound takes a similar form to (\ref{convergence:HHFL}), i.e., $\mathbb{E}[F(\overline{w^t})] - F^* \leq \frac{L}{2} \frac{Z'}{t+\alpha }$. The only difference lies in the constant term (see Appendix~E for the proof). Specifically, when the data distribution across ESs is non-IID, we show that $Z' > Z$, and the gap $Z' - Z$ increases as $E$ and $G$ grow. In other words, under non-IID ES-level data distribution, conventional HFL converges more slowly than HHFL, and the advantage of HHFL becomes more pronounced for larger $E$ and $G$. This conclusion is also consistent with the experimental results reported later (e.g., Figs.~12--13). }

{\hlb It is worth noting that, while the four assumptions are commonly adopted in the literature, they may not cover all practical scenarios, since different learning models can exhibit different properties. For example, \textit{Assumption~2} does not apply to neural networks with non-convex loss functions. Nevertheless, the insights obtained from our analysis remain valuable for more general settings. In Section~\ref{Sec:evaluation}, we further validate this point through experiments on a variety of learning models.}

\subsection{Comparison between HFL and HHFL} \label{Sec:comparison}
We compare the performance of HHFL and HFL in terms of the number of steps and the time required to reach convergence.

\subsubsection{Steps to Achieve Convergence} \label{subsec:comparison_steps}
We assume that the algorithmic parameters $E$ and $G$ are identical for both HFL and HHFL.
In HFL, each client connects to a single ES, and its updates remain confined within isolated ES domains for $EG$ local updates before {cloud aggregation} is performed. When the data distributions across ESs are non-IID, this isolation exacerbates model divergence and inconsistency across ES models as training progresses. As a result, local models initialized from different ESs become increasingly divergent over time, leading to higher variance in local gradient updates and slower convergence.

In contrast, HHFL effectively mitigates this issue. Each ES in HHFL receives updates from a larger and more diverse set of clients, making its model more representative of the global data distribution and reducing divergence among ES models. In addition, the client aggregation mechanism in HHFL enables clients located in overlapping regions to initialize their models using multiple ES models. Together, these two mechanisms reduce the variance in client initializations.
As a result, the local updates computed as
$v_i^{t+1} = w_i^t - \eta_t \nabla F_i(w_i^t)$
are based on better-informed starting points. Since $w_i^t$ incorporates more global data, the corresponding gradient $\nabla F_i(w_i^t)$ is more aligned with the global direction $\nabla F(w^t)$, which reduces the variance of local gradients, represented as
$\sigma^2 = \mathbb{E}_i \left[\|\nabla F_i(w_i^t) - \nabla F(w^t)\|^2\right].
$ {\hlb This observation is consistent with Remark~4, and more theoretical details are provided in Appendix~E.}

{\hlb According to \cite{stich2018local}, standard stochastic optimization bounds suggest that
	$\mathbb{E}[F(\overline{{w}^T)}]-F(w^\star)\leq \mathcal{O}\!\left(\frac{1+\sigma^2}{\sqrt{T}}\right)$,
	indicating that a smaller gradient variance $\sigma^2$ leads to faster convergence. Consistently, the tighter bound in (\ref{convergence:HHFL}) further supports this effect. Therefore, compared with HFL, HHFL typically requires fewer training steps to reach a target convergence level.}


\subsubsection{Time to Achieve Convergence} \label{Sec:time}
The {overall time} (i.e., the total wall-clock time) mainly involves the computing time and the wireless transmission time.
We assume that the two types of time do not overlap. This assumption is widely adopted in FL, as it helps reduce scheduling and synchronization complexity, especially on resource-constrained devices \cite{kairouz2021advances,li2020federated,bonawitz2019towards,liu2020client,zhao2018federated,wang2019adaptive,lyu2020threats,smith2017federated}.
Therefore, the overall time is the sum of the computing time and the transmission time.
For the {computing time}, it is primarily used for local updates and model aggregations. Since model aggregation only involves computing a weighted average, its computational cost is significantly lower than that of a local update. Moreover, aggregation occurs much less frequently than local updates. Therefore, it is reasonable to ignore the aggregation time and approximate the computing time using only the time spent on local updates.
Let $c$ denote the number of CPU cycles required to compute one bit, $f$ the CPU frequency, and $b_0$ the number of computed bits required for one step of a local update. Then, the local update computation time for executing $T$ updates is given by
\begin{equation}
	\begin{split}
		T_\text{comp}=\frac{cb_0}{f}T.
	\end{split}
\end{equation}
For the transmission time, the total time for $T$ local update steps can be expressed as
\begin{equation}\label{Eq: transmission}
	\begin{split}
		T_\text{trans}&=\left \lceil\frac{T}{E}\right \rceil  T_\text{CES}+\left \lceil\frac{T}{EG}\right \rceil T_\text{ECS},
	\end{split}
\end{equation}
{\hlb where $\lceil\cdot\rceil$ denotes the ceiling function. $T_{\mathrm{CES}}$ denotes the end-to-end wall-clock duration (e.g., airtime and protocol overhead) of one {ES--client} round-trip model exchange, including the ES-to-client downlink model dissemination and the client-to-ES uplink upload of updated models. $T_{\mathrm{ECS}}$ denotes the end-to-end wall-clock duration of one {CS--ES} round-trip model exchange, including the CS-to-ES global-model dissemination and the ES-to-CS uplink upload of ES models. Moreover, $\left \lceil\frac{T}{E}\right \rceil$ and $\left \lceil\frac{T}{EG}\right \rceil$ represent the numbers of {ES--client} and CS--ES round-trip exchanges required to complete $T$ training steps, respectively.}
Then, the overall time for both architectures to complete $T$ local update steps are represented as
\begin{equation}\label{Eq: overall}
	\begin{split}
		T_\text{overall}&=\frac{cb_0}{f}T+\left \lceil\frac{T}{E}\right \rceil  T_\text{CES}+\left \lceil\frac{T}{EG}\right \rceil T_\text{ECS}.
	\end{split}
\end{equation}

{ \hlb 
	In practical deployments, model dissemination from ESs to clients can be carried out in parallel over orthogonal PRBs, and a client’s model upload can be received and decoded by multiple associated ESs via CoMP-capable uplink multi-point reception. Moreover, contemporary 5G/NextG systems are engineered with substantial parallel {receive-side} baseband-processing capability at both BSs and user devices to decode multiple PRB allocations within the same carrier, and BSs are typically provisioned with sufficient {transmit-side} baseband-processing resources to map and process multiple PRB allocations in parallel within a carrier, thereby serving multiple users concurrently on orthogonal PRBs~\cite{standard2017final}. While such parallel processing capabilities effectively eliminate transmission bottlenecks, enabling these multi-point connections does introduce extra synchronization overhead, as ESs must coordinate PRB allocations via the Xn interface. However, this coordination cost has a negligible impact on $T_{CES}$. First, because ESs operate independently without data-level cooperation, the Xn interface exchanges only lightweight scheduling signals, thereby bypassing heavy backhaul data transfers. Second, leveraging 5G CG for PRB pre-allocation amortizes the one-time negotiation latency across the entire training process, effectively eliminating per-round dynamic scheduling delays. Consequently, under our latency model, the additional ES-client links do not necessarily increase $T_{CES}$, and since the network infrastructure remains static, $T_{\mathrm{ECS}}$ is also unaffected.
	As a result, with the same client hardware capabilities and the same $E$ and $G$, HHFL and HFL have equivalent overall training-time expressions in terms of the total number of local update steps, as shown in \eqref{Eq: overall}. As discussed in Section~\ref{subsec:comparison_steps}, HHFL requires fewer steps to reach convergence compared to HFL.
	Therefore, HHFL requires a shorter overall time to converge, making it a more efficient choice than HFL.}
{\hlb \subsubsection{Resource Consumption} Given the negligible computational cost of model parameter aggregation, the primary resource-related drawback of HHFL compared to standard HFL is the greater per-round resource consumption such as higher PRB occupancy and energy costs during model transmission between ESs and clients. This is because clients situated in overlapping areas can simultaneously connect to multiple ESs for data exchange. This creates more ES--client communication links, thereby increasing the volume of data transmitted per round and requiring more resources for transceiving and signal processing. However, as HHFL typically requires fewer training steps to reach convergence due to the knowledge-sharing benefits provided by multi-connectivity links, it yields three primary advantages: i) reduced total computational overhead for local client training; ii) fewer communication rounds between the CS and ESs; and iii) a lower total number of communication rounds between ESs and clients. While the first two advantages provide direct resource savings for the entire system including clients, ESs, and the CS, the third involves a tradeoff between the reduction in communication rounds and the increased per-round transmission overhead between ESs and clients. Notably, when the data distribution across ESs is sufficiently non-IID, the gain from the reduction in total communication rounds dominates this tradeoff, and HHFL achieves lower overall resource consumption on both the ES and client sides for the data transmission between them. For the sake of brevity, the detailed analysis of this tradeoff is relegated to Appendix~F, which derives the theoretical upper bound on the increase factor in resource consumption for per-round ES-client model transmissions under HHFL and uses our experimental results as a concrete case study to demonstrate that the reduction in communication rounds outweighs the increase in per-round resource consumption. Collectively, HHFL is demonstrated to be more resource efficient for every participating entity including clients, ESs, and the CS throughout the entire training process in this sufficiently non-IID case.

	It is noteworthy that the increase in per-round ES–client resource consumption is connectivity-dependent and remains decoupled from the ES-level non-IID degree. Consequently, when the degree of non-IID is insufficient, the marginal gain from knowledge sharing between ESs facilitated by multi-connectivity links diminishes, and the aforementioned advantages of HHFL may fail to offset the higher per-round overhead, leading to greater total resource consumption than HFL. While HHFL is primarily designed to improve time efficiency, resource constraints can sometimes become a critical factor to consider in certain deployment scenarios. 
	To optimize practical resource efficiency in such cases, some active multi-connectivity links can be strategically deactivated to reduce the additional per-round communication overhead for ES–client data transmissions, at the cost of some potential knowledge-sharing gains. Optimizing the number and selection of multi-connectivity links to disable (e.g., by disabling a subset of links for a single client or across multiple clients and ESs) is a highly complex problem. Addressing this optimization problem requires jointly considering data distribution, network topology, time efficiency, and resource constraints, which exceeds the scope of this study. We defer a detailed investigation of resource-aware link activation/deactivation designs for HHFL to future work.
	
}


\begin{figure*}[ht]
	\centering
	\includegraphics[width=0.9\textwidth]{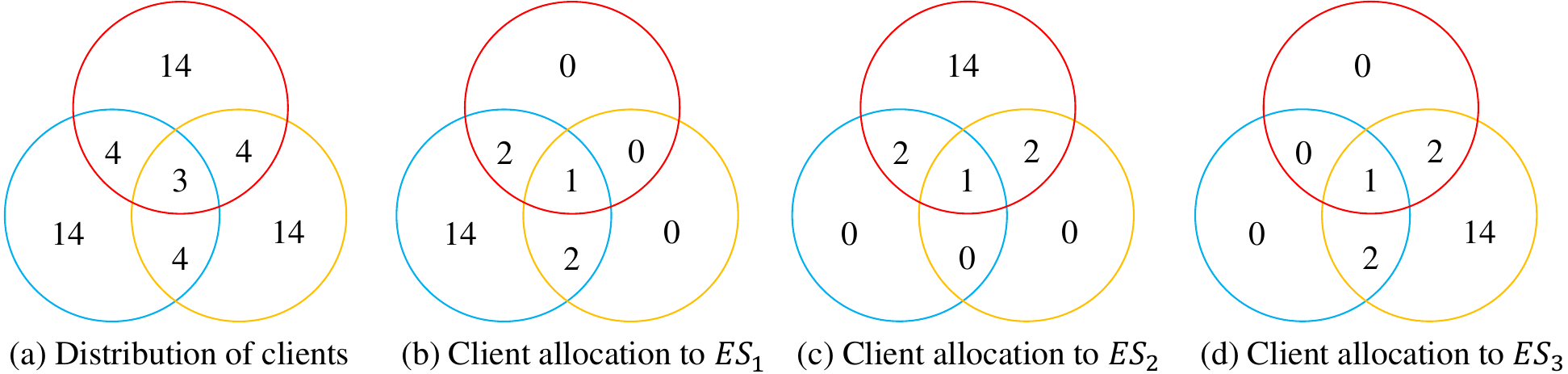}
	\caption{The actual distribution of clients and the number of clients assigned to each ES under the HFL architecture. The coverage area of $ES_1$ is represented by a blue circle, $ES_2$ by a red circle, and $ES_3$ by an orange circle. The values in this figure represent the number of clients.}
	\label{Fig:distribution}
\end{figure*}

\section{Experimental Evaluation}\label{Sec:evaluation}
In this section, we evaluate the effectiveness of HHFL and demonstrate its advantage in convergence efficiency.

\subsection{Basic Experimental Setup}
\noindent\textbf{FL Architecture Configuration:} Without loss of generality, we consider a network scenario consisting of one CS, three ESs labeled as $ES_1$, $ES_2$, and $ES_3$, and $57$ clients. The ESs are symmetrically placed at the vertices of an equilateral triangle, each serving $25$ clients: $14$ connect exclusively to a single ES, $8$ connect to two ESs, and $3$ connect to all three ESs, as illustrated in Fig.~\ref{Fig:distribution}(a). This topology can be directly adopted for the HHFL architecture. For the HFL architecture, clients located in each overlapping region are randomly and evenly assigned to one of the ESs covering that region, ensuring that each client connects to only one ES. Specifically, the distributions of clients allocated to $ES_1$, $ES_2$, and $ES_3$ in the HFL architecture are shown in Fig. \ref{Fig:distribution}(b), Fig. \ref{Fig:distribution}(c), and Fig. \ref{Fig:distribution}(d), respectively.

\noindent\textbf{FL Algorithm Configuration:}
We employ the HHFL algorithm for our proposed architecture. For the HFL architecture, we implement the efficient and widely adopted Hier-FAVG algorithm, as introduced in \cite{liu2020client}. We set $E = 5$ and $G = 5$.

\noindent\textbf{Training Model Configuration:}
To validate that our HHFL method converges effectively regardless of whether the training loss function is convex or nonconvex and does so with higher efficiency than HFL, we selected two models: a logistic regression model and a convolutional neural network (CNN). These models correspond to convex and nonconvex loss functions respectively. Both models are trained on the MNIST dataset using mini-batch SGD with a batch size of $20$. The logistic regression model's initial learning rate is $0.1$, with an exponential decay of $0.992$ per epoch. For the CNN model, to obtain a clearer convergence curve, we adopt a smaller initial learning rate of $0.02$.
Here, we intentionally adopt the simple MNIST dataset and lightweight models. This setup avoids unnecessary complexity introduced by deep models or large-scale datasets, while still providing generalizable insights into architectural performance.

\noindent\textbf{Data Distribution Configuration:} 
We consider three basic data distribution scenarios: i) \textit{client IID, ES IID} scenario, where the data distribution among clients within each ES is IID while the data distribution across ESs is also IID; ii) \textit{client non-IID, ES IID} scenario, where the data distribution among clients within each ES is non-IID while the data distribution across ESs is IID; iii) \textit{client non-IID, ES non-IID} scenario, where both the data distributions among clients within each ES are non-IID while the data distribution across ESs is also non-IID. Since the non-IID setting in this work requires considering IID conditions at the ES level, the conventional Dirichlet-based distribution is not well-suited to meet this requirement. Therefore, we adopt the pathological non-IID setting \cite{mcmahan2017communication} following the setup in \cite{han2021fedmes}. 


\noindent\textbf{Computation and Communication Configuration:} We set the ratio $T_\text{CES}/T_\text{SGD} = 10$ to reflect the communication bottleneck, where $T_\text{SGD}$ denotes the local training time for $E$ local updates, following the experimental setup in \cite{han2021fedmes,reisizadeh2020fedpaq}. In 5G/NextG networks, the connection between base stations and the cloud server station is wired and significantly faster than wireless transmission \cite{qi2024bridge}. Real-world measurements show that the wireless latency between clients and ESs typically ranges from 5 to 20 ms \cite{fezeu2023mmwave}, whereas the wired fiber latency between ESs and the CS is typically between 0.2 and 1 ms \cite{agiwal2016next}. Based on these observations, it is reasonable to set the ratio $T_\text{CES} / T_\text{ECS} = 10$ without loss of generality.

\subsection{Experimental Results and Analysis}

\subsubsection{Impact of intra-ES non-IID data distribution on convergence under \textit{ES IID}  scenario} We compare the convergence performance of the HFL and HHFL architectures under different degrees of non-IID data distribution among clients within each ES, while keeping the data distribution across ESs IID. 
To this end, we design three cases: the \textit{(IID, IID)} case, the \textit{(non-IID-$1$, IID)} case, and the \textit{(non-IID-$2$, IID)} case. In the first case, each client uniformly contains samples from all $10$ classes. In the second case, each client uniformly contains samples from $6$ classes, and in the third case, each client uniformly contains samples from only $2$ classes. For all three cases, the data distribution across ESs remains IID. 
The convergence curves over training steps are shown in  Fig.~\ref{fig:es_iid}. It is clear that the two convergence curves corresponding to HFL and HHFL in the same case almost show no difference. One intuitive explanation for this is that when the data distribution across ESs is IID, no cross-ES knowledge sharing is needed, and clients in overlapping regions would offer no incremental utility.
Another side observation is that the convergence speed in the \textit{(non-IID-$2$, IID)} case is significantly slower than in the other two cases. This is because the local gradient variance across clients becomes much larger in this case, as each client only contains samples from two classes.


\begin{figure*}[ht]
	\centering
	\begin{minipage}[t]{0.48\textwidth}
		\centering
		\begin{subfigure}[b]{0.48\textwidth}
			\includegraphics[width=\textwidth]{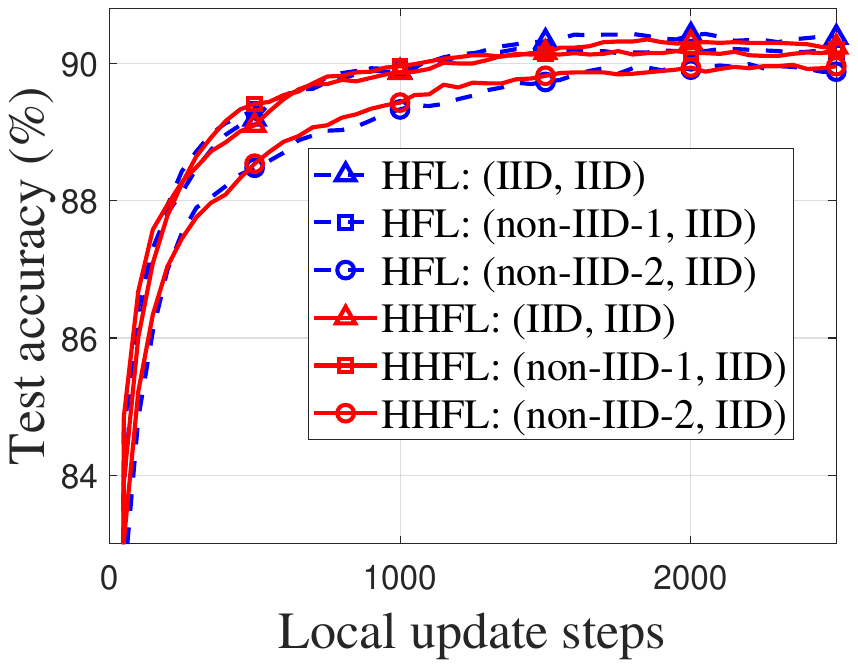}
			\caption*{(a) Convex loss function}
		\end{subfigure}
		\hfill
		\begin{subfigure}[b]{0.48\textwidth}
			\includegraphics[width=\textwidth]{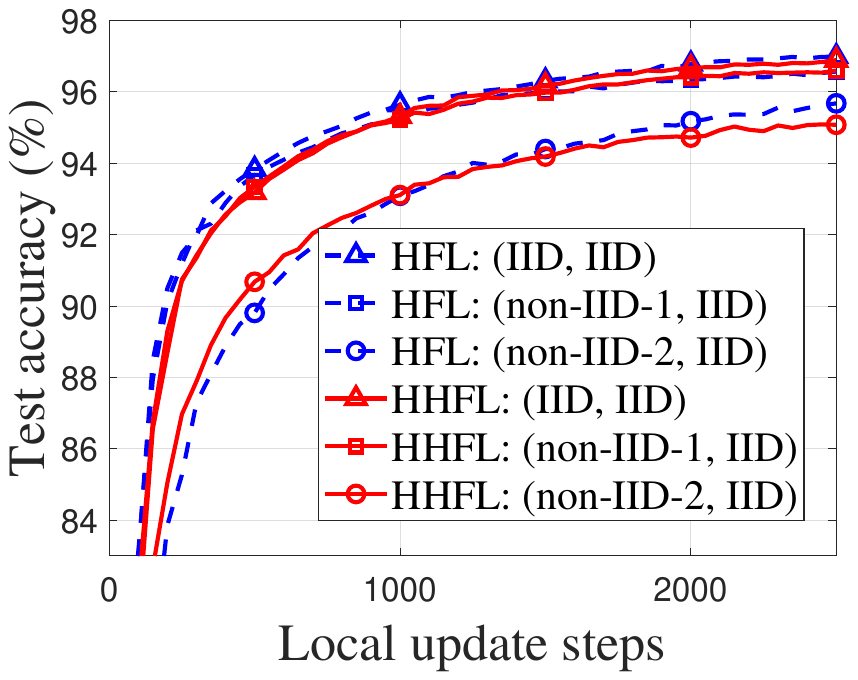}
			\caption*{(b) Non-convex loss function}
		\end{subfigure}
		\caption{Convergence curves under different degrees of non-IID data distribution among clients within each ES ( \textit{ES IID} scenario).}
		\label{fig:es_iid}
	\end{minipage}
	\hfill
	\begin{minipage}[t]{0.48\textwidth}
		\centering
		\begin{subfigure}[b]{0.48\textwidth}
			\includegraphics[width=\textwidth]{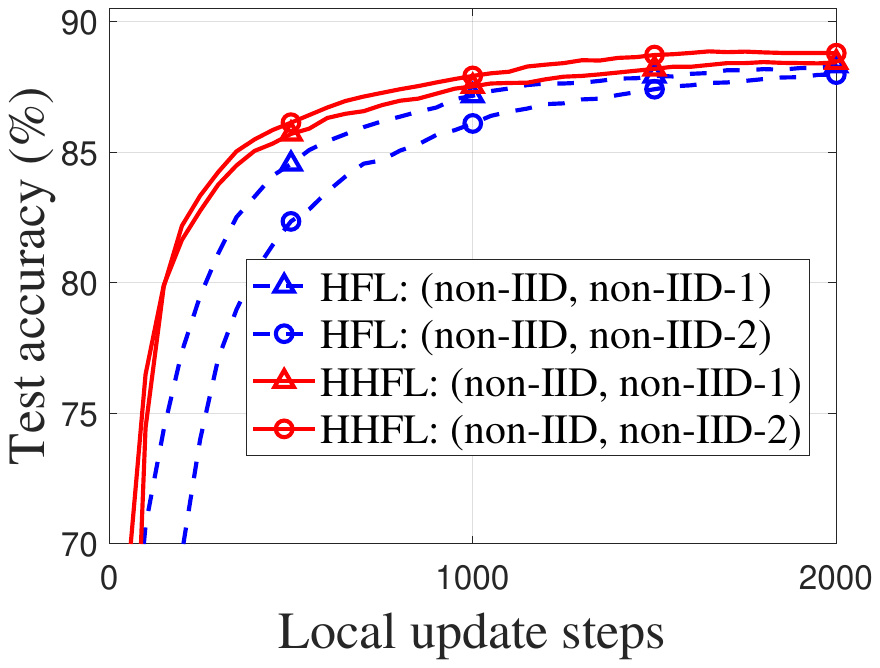}
			\caption*{(a) Convex loss function}
		\end{subfigure}
		\hfill
		\begin{subfigure}[b]{0.48\textwidth}
			\includegraphics[width=\textwidth]{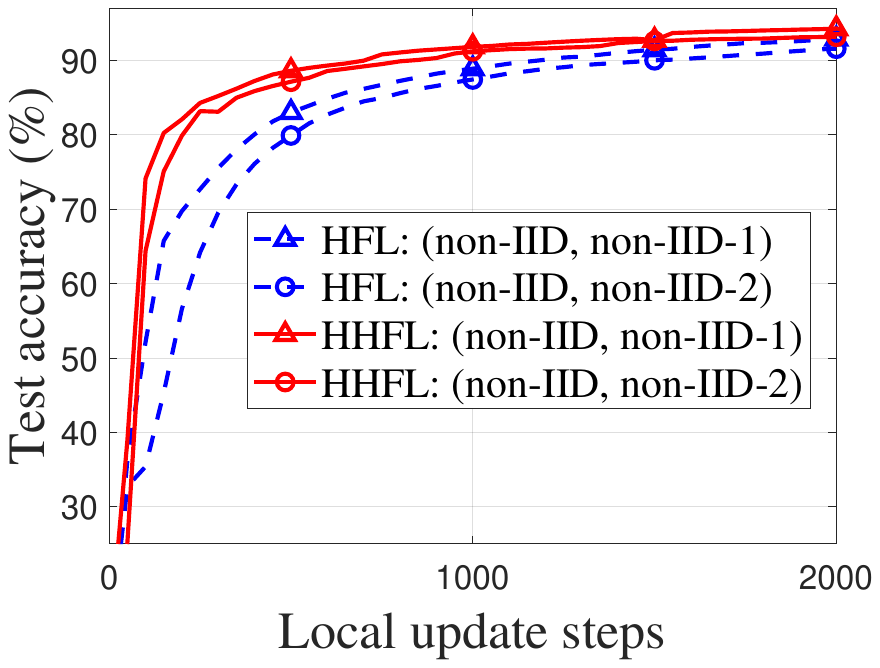}
			\caption*{(b) Non-convex loss function}
		\end{subfigure}
		\caption{Convergence curves under different degrees of non-IID data distribution across ESs (\textit{ES non-IID scenario}).}
		\label{fig:es_noniid}
	\end{minipage}
\end{figure*}

\subsubsection{Impact of ES non-IID data distribution on convergence under \textit{ES non-IID} scenario}
We compare the convergence performance of the HFL and HHFL architectures under different degrees of non-IID data distribution across ESs. Here, two cases are designed: the \textit{(non-IID, IID-$1$) case} and the \textit{(non-IID, IID-$2$) case}. In the first case, each ES is missing $3$ data classes, whereas in the second case, each ES is missing $4$ classes. The data across ESs is more non-IID in the second case than in the first.
The convergence curves over training steps are shown in Fig.~\ref{fig:es_noniid}. It can be observed that HHFL consistently outperforms HFL in terms of convergence performance. Moreover, as the degree of non-IID data distribution across ESs increases, the performance of HHFL remains largely unaffected, whereas the convergence performance of HFL degrades significantly. This results in a larger performance gap between the two architectures. We attribute the advantages of HHFL to the presence of clients located in overlapping regions. These clients enable knowledge sharing between neighboring ESs, which helps mitigate the performance degradation caused by the limited class diversity within individual ESs.

{\hlb \subsubsection{Impact of proportion of clients located in overlapping regions on convergence} We compare the convergence performance of the HFL and HHFL architectures with varying proportion of clients located in overlapping regions. Here, we adjust this proportion by simulating user mobility. Starting from our original deployment, we increase this proportion by relocating some single-coverage clients into overlapping areas; conversely, we decrease it by relocating some overlapping-area clients into single-coverage areas. To ensure that the overlap proportion is the primary change across configurations and to avoid unintended effects caused by ES reassignment (which may alter the ES-level non-IIDness in HFL), we keep the client--ES association in the HFL baseline unchanged when generating different overlap settings. Specifically: (i) when a client is moved from a single-coverage area to an overlapping area, it remains associated with its pre-relocation ES in HFL; and (ii) when a client is moved from an overlapping area to a single-coverage area, it is placed within the single-coverage region of its originally assigned ES in HFL. 
	
	To visualize the impact of changing the overlapping proportion on convergence curves, we first construct an illustrative case based on the \textit{(non-IID, non-IID-$2$)} case. Specifically, we relocate $6$ single-coverage clients into the overlapping areas to get an additional case, called \textit{(non-IID, non-IID-$2'$)} case. The resulting client distribution and the corresponding ES associations are illustrated in Fig.~\ref{Fig:distribution_case6}.
	The convergence curves over training steps are shown in Fig.~\ref{fig:number_overlappingclients}. 
	We observe that HHFL converges faster than HFL under the same case. Moreover, increasing the number of clients in overlapping regions improves the convergence speed of HHFL, although this acceleration is modest due to the small number of relocated clients in our setup, whereas HFL's convergence speed remains largely unaffected. We believe this result comes from the fact that, in the HHFL architecture, a greater number of overlapping clients can enhance indirect knowledge sharing among neighboring ESs, whereas in the HFL architecture, the client assignments at each ES remain unchanged despite the relocation of clients.
	\begin{figure*}[ht]
		\centering
		\includegraphics[width=0.9\textwidth]{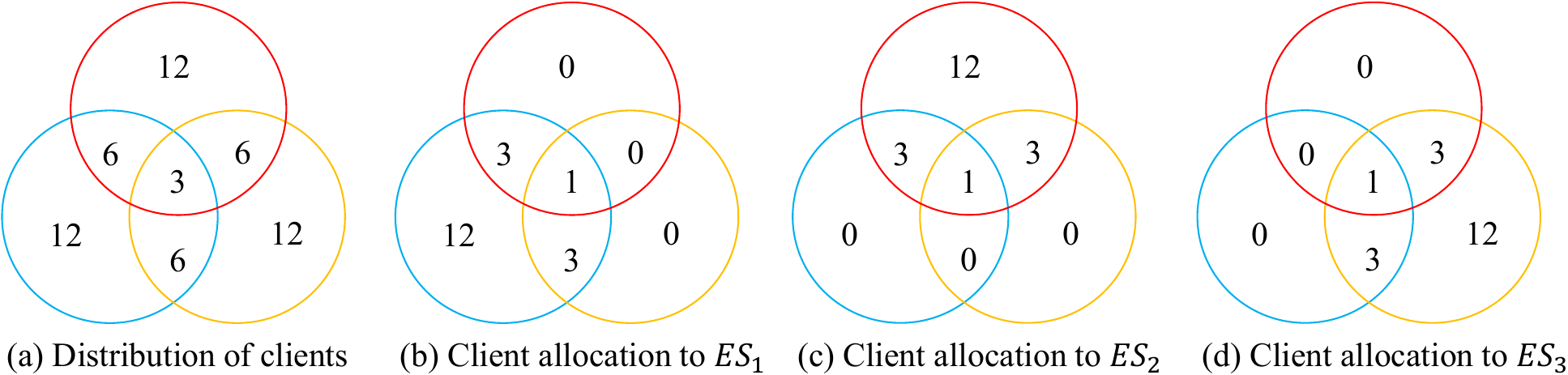} 
		\caption{Client actual distribution and the number of clients assigned to each ES for the \textit{(non-IID, IID-$2'$)} case. The coverage area of $ES_1$ is represented by a blue circle, $ES_2$ by a red circle, and $ES_3$ by an orange circle. The values in this figure represent the number of clients.
		}
		\label{Fig:distribution_case6}
	\end{figure*}
	
	\begin{figure*}[ht]
		\centering
		\begin{minipage}[t]{0.48\textwidth}
			\centering
			\begin{subfigure}[b]{0.48\textwidth}
				\includegraphics[width=\textwidth]{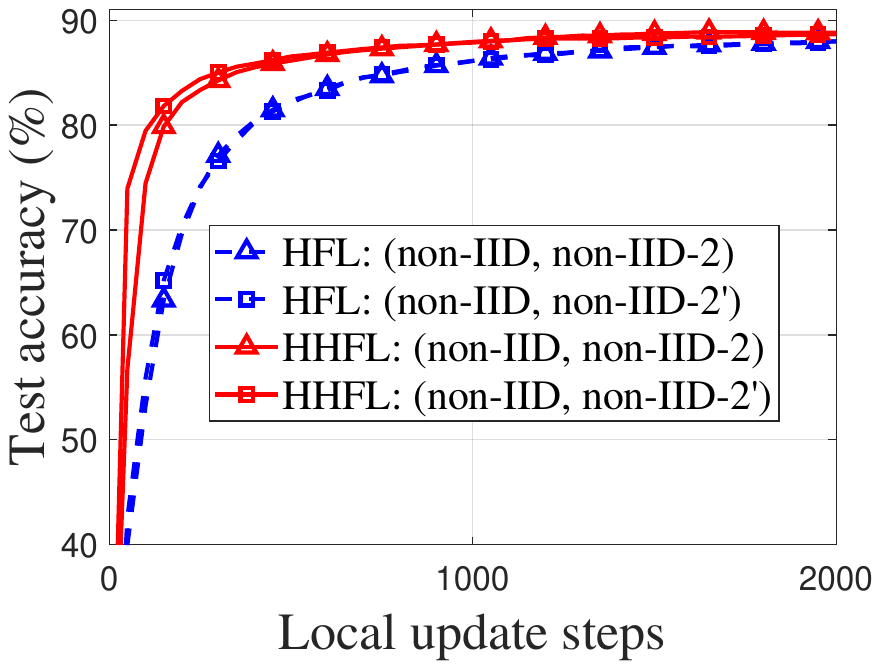}
				\caption*{(a) Convex loss function}
			\end{subfigure}
			\hfill
			\begin{subfigure}[b]{0.48\textwidth}
				\includegraphics[width=\textwidth]{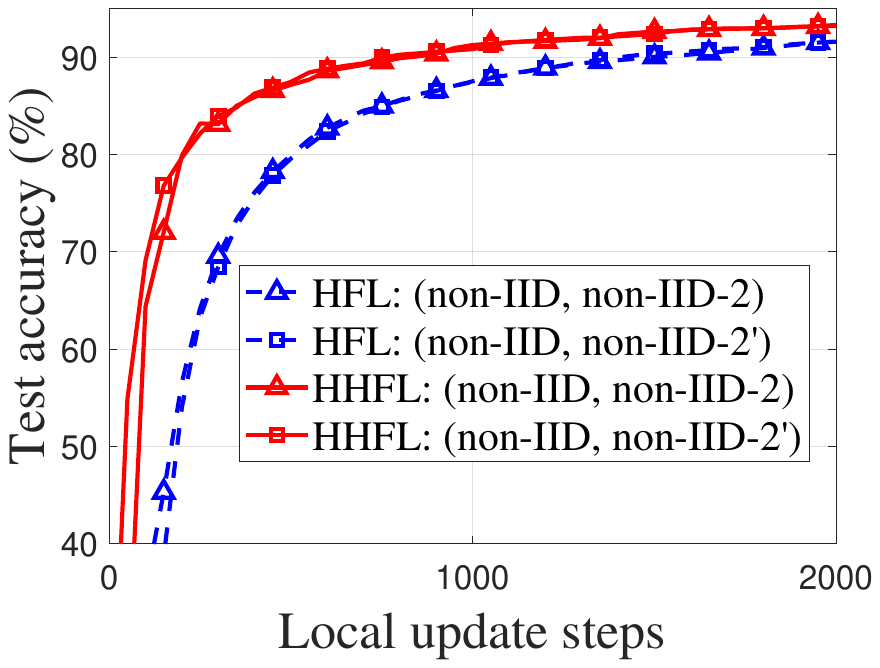}
				\caption*{(b) Non-convex loss function}
			\end{subfigure}
			\caption{Convergence curves under different numbers of clients in overlapping regions (\textit{ES non-IID} scenario).}
			\label{fig:number_overlappingclients}
		\end{minipage}
		\hfill
		\begin{minipage}[t]{0.48\textwidth}
			\centering
			\begin{subfigure}[b]{0.47\textwidth}
				\includegraphics[width=\textwidth]{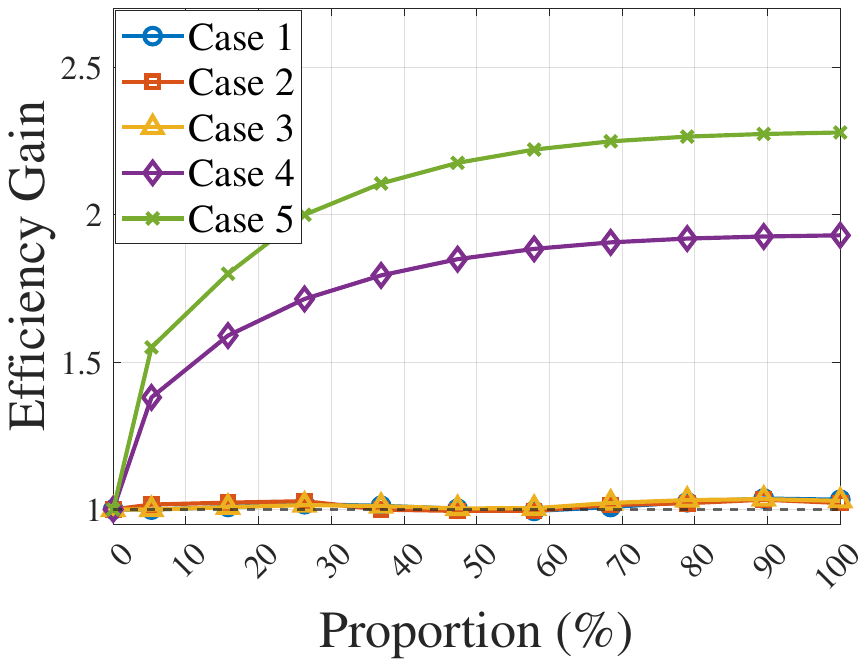}
				\caption*{(a) Convex loss function}
			\end{subfigure}
			\hfill
			\begin{subfigure}[b]{0.48\textwidth}
				\includegraphics[width=\textwidth]{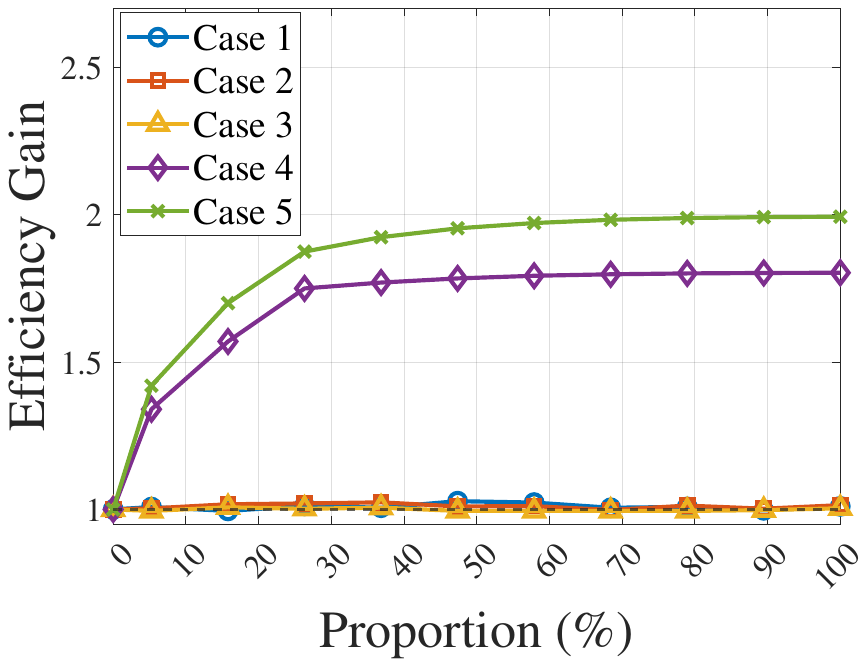}
				\caption*{(b) Non-convex loss function}
			\end{subfigure}
			\caption{{\hlb Efficiency gain of HHFL over HFL versus the overlapping proportion for Cases~1--5:
					\textit{(IID, IID)}, \textit{(non-IID-$1$, IID)}, \textit{(non-IID-$2$, IID)},
					\textit{(non-IID, non-IID-$1$)}, and \textit{(non-IID, non-IID-$2$)}.}
			}
			\label{fig:proportion}
		\end{minipage}
	\end{figure*}
	While the above comparison provides an intuitive understanding, we next conduct a systematic sweep over the overlap proportion. Specifically, we vary the overlap proportion by relocating additional clients into or out of the overlapping areas, and for each proportion value we run both HFL and HHFL under five basic cases. We then quantify HHFL’s advantage using an efficiency-gain metric in terms of the required training steps. To determine the convergence point, we adopt a generalizable approach based on the sliding average of the slope. A threshold of $0.1\%$ is used, such that when the average rate of improvement in accuracy falls below this value at a particular step, the model is regarded as having reached convergence. The resulting Efficiency Gain VS. Proportion curve is shown in Fig.~\ref{fig:proportion}. We observe that when the data distributions across ESs are IID, namely cases 1--3, the efficiency gain remains close to $1$ regardless of the overlap proportion, indicating that HHFL’s multi-connectivity offers almost no speedup over HFL. This is because, under ES-level IID data, HFL does not need the overlapping-area clients to facilitate inter-ES knowledge sharing. In contrast, when the data distributions across ESs are non-IID, namely cases 4--5, HHFL’s efficiency gain over HFL increases with the overlap proportion. The improvement is pronounced when the overlap proportion is small, and then gradually saturates as the proportion becomes larger. We believe this behavior arises because a small number of overlapping clients can already provide much-needed bridging for inter-ES knowledge sharing; once this bridging effect becomes sufficiently strong, additional overlapping clients yield diminishing marginal gains, leading to a saturation trend. 
}
\begin{figure*}[ht]
	\centering
	\begin{minipage}[t]{0.48\textwidth}
		\centering
		\begin{subfigure}[b]{0.48\textwidth}
			\includegraphics[width=\textwidth]{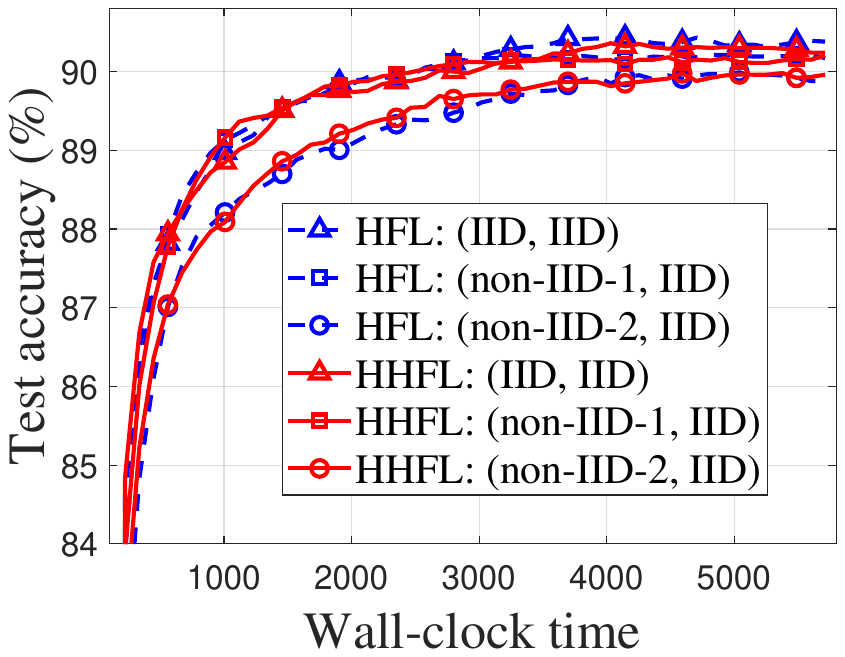}
			\caption*{(a) \textit{ES IID} scenario, convex loss function}
		\end{subfigure}
		\hfill
		\begin{subfigure}[b]{0.48\textwidth}
			\includegraphics[width=\textwidth]{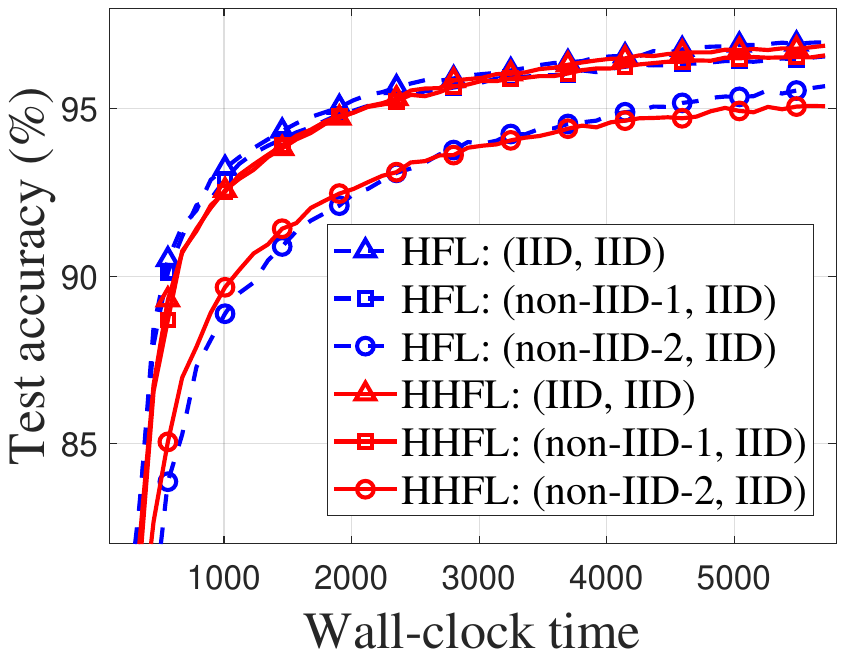}
			\caption*{(b) \textit{ES IID} scenario, non-convex loss function}
		\end{subfigure}
		\hfill
		\begin{subfigure}[b]{0.48\textwidth}
			\includegraphics[width=\textwidth]{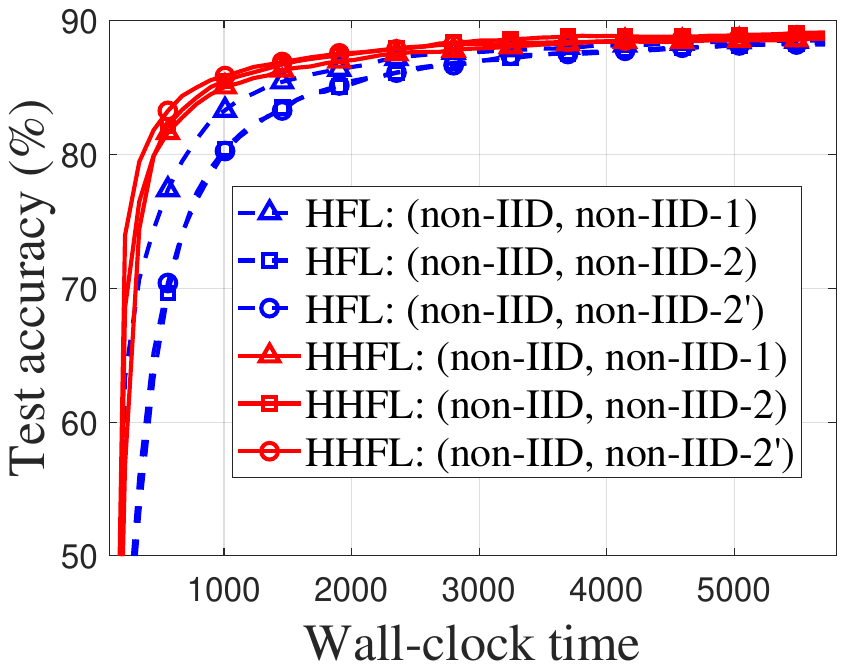}
			\caption*{(c) \textit{ES non-IID} scenario, convex loss function }
		\end{subfigure}
		\hfill
		\begin{subfigure}[b]{0.48\textwidth}
			\includegraphics[width=\textwidth]{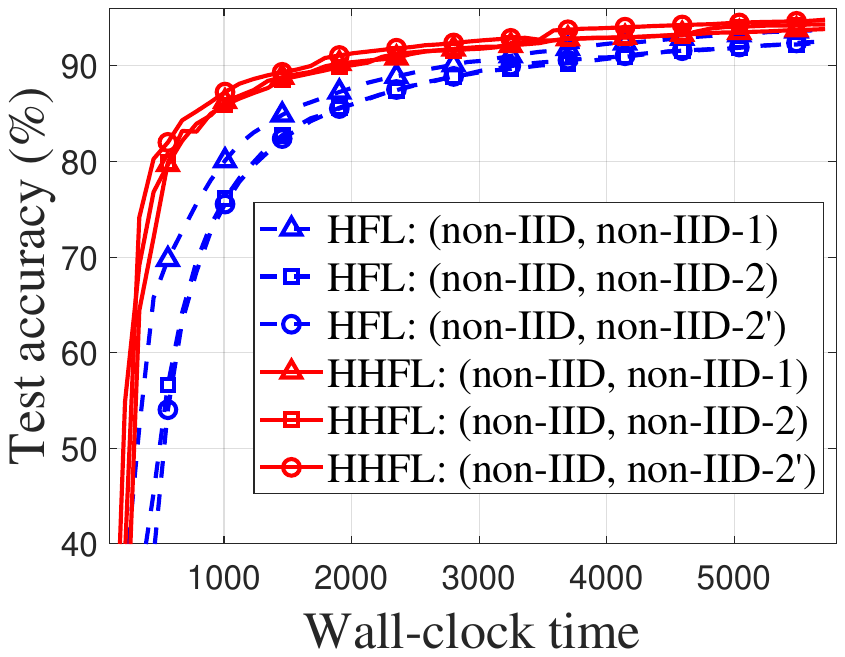}
			\caption*{ (d) {\textit{ES non-IID} scenario, non-convex loss function}\strut}
		\end{subfigure}
		\caption{{\hlb Test accuracy versus wall-clock time across six cases (time is normalized such that one unit corresponds to the time for a client to complete $E$ local updates).}}
		\label{fig:time}
	\end{minipage}
	\hfill
	\begin{minipage}[t]{0.48\textwidth}
		\centering
		\begin{subfigure}[b]{0.48\textwidth}
			\includegraphics[width=\textwidth]{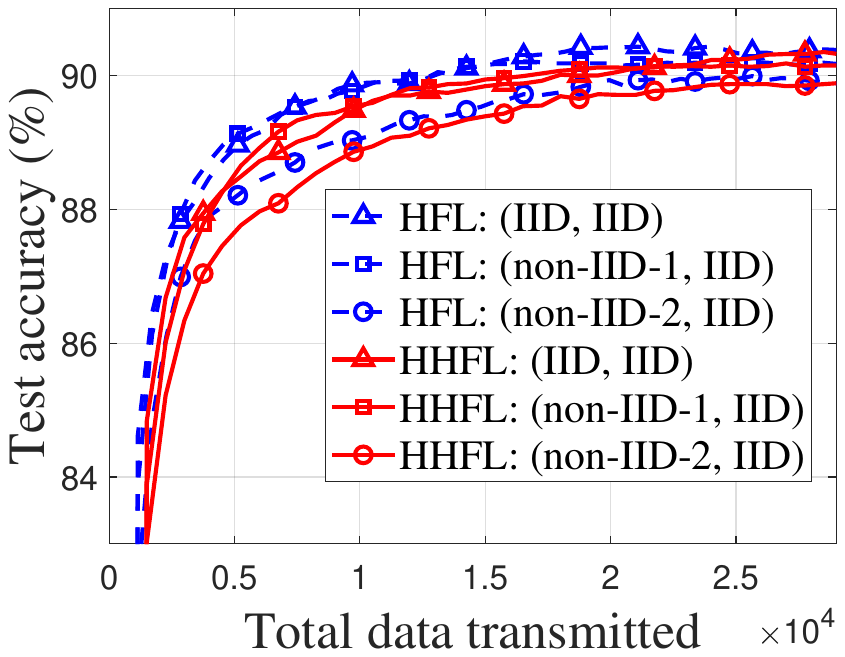}
			\caption*{(a) \textit{ES IID} scenario, convex loss function}
		\end{subfigure}
		\hfill
		\begin{subfigure}[b]{0.48\textwidth}
			\includegraphics[width=\textwidth]{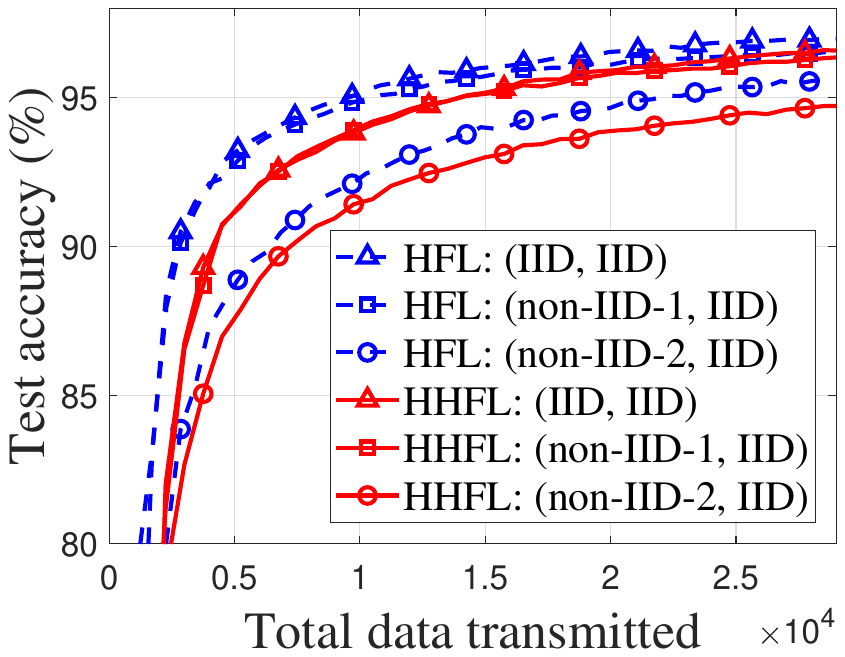}
			\caption*{(b) \textit{ES IID} scenario, non-convex loss function}
		\end{subfigure}
		\hfill
		\begin{subfigure}[b]{0.48\textwidth}
			\includegraphics[width=\textwidth]{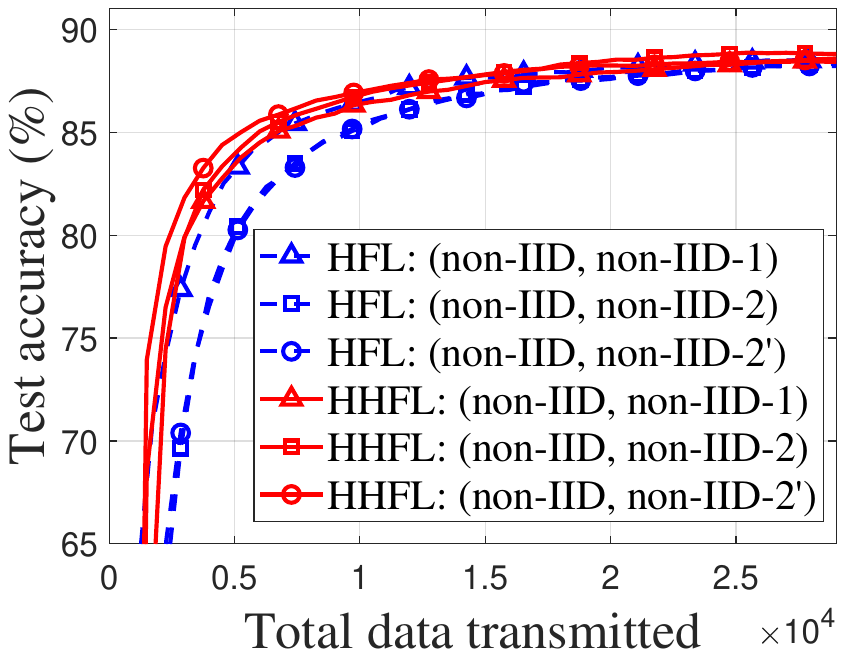}
			\caption*{(c) \textit{ES non-IID} scenario, convex loss function }
		\end{subfigure}
		\hfill
		\begin{subfigure}[b]{0.48\textwidth}
			\includegraphics[width=\textwidth]{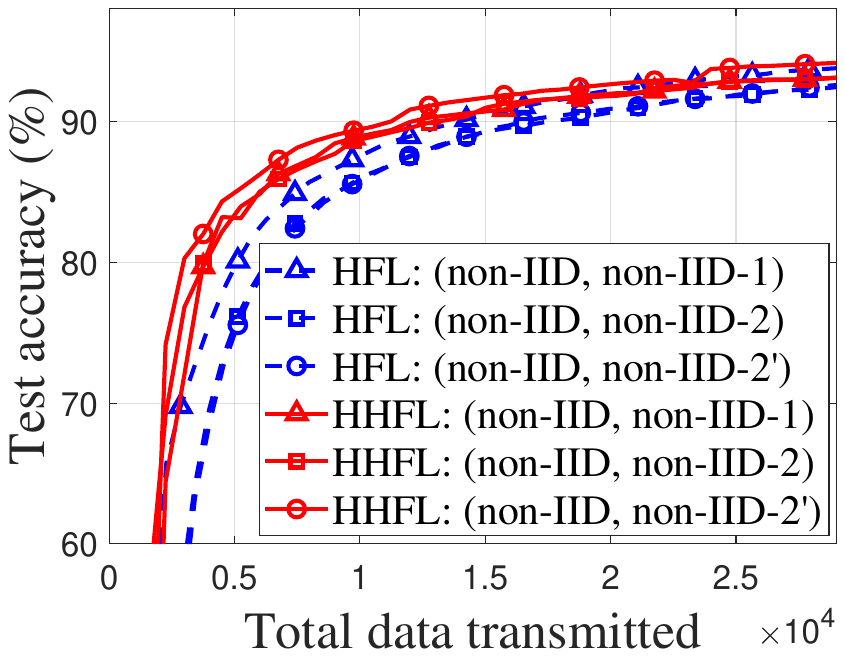}
			\caption*{ (d) {\textit{ES non-IID} scenario, non-convex loss function}\strut}
		\end{subfigure}
		\caption{{\hlb Test accuracy versus total data transmitted across six cases (data volume is normalized such that one unit corresponds to the number of bits required to transmit one model).}}
		\label{fig:data}
	\end{minipage}
\end{figure*}
{\hlb \subsubsection{Convergence curves over wall-clock time and total data transmitted} We further provide a holistic efficiency evaluation by plotting the convergence performance (i.e., test accuracy) of HFL and HHFL against (i) wall-clock time and (ii) the total data transmitted, under different data-distribution cases. 
	
	By setting the time required for a client to complete $E$ local updates as one unit time, the experimental results are presented in Fig. \ref{fig:time}. As observed in Fig. \ref{fig:time} (a) and Fig. \ref{fig:time} (b), when the data distribution across ESs is IID, HHFL does not exhibit a noticeable advantage in terms of wall-clock time. However, when the data distribution across ESs is non-IID, HHFL outperforms HFL, as shown in Fig. \ref{fig:time} (c) and Fig. \ref{fig:time} (d). This is because HHFL requires significantly fewer total training steps, which directly cuts down the time clients spend on local training. It also reduces the number of communication rounds between all participants (i.e., Client-to-ES and ES-to-CS), thereby cutting down the time each party spends on communication. Since these phases are the main components of the total time, HHFL ends up being much more efficient overall.
	
	
	By setting the number of bits required to transmit a single model as one unit of data volume, the experimental results are presented in Fig. \ref{fig:data}. Note that the transmitted data volume accounts specifically for the communication between the BS/ES and the clients. As observed in Fig. \ref{fig:data} (a) and Fig. \ref{fig:data} (b), when the data distribution across ESs is IID, HHFL converges slower than HFL. This is because HHFL establishes more communication links, which involves more data transmission without providing the compensating benefit of effective knowledge sharing in an IID setting. Conversely, as shown in Fig. \ref{fig:data} (c) and Fig. \ref{fig:data} (d), under the cases where the data distribution across ESs is non-IID, HHFL reaches convergence with significantly less total data transmission. This is because the benefits of knowledge sharing facilitated by the additional links effectively outweigh the additional data-transmission cost introduced by the extra links.
	
}

\begin{figure*}[ht]
	\centering

	\begin{minipage}[t]{0.48\textwidth}
		\centering
		\begin{subfigure}[b]{0.48\textwidth}
			\includegraphics[width=\textwidth]{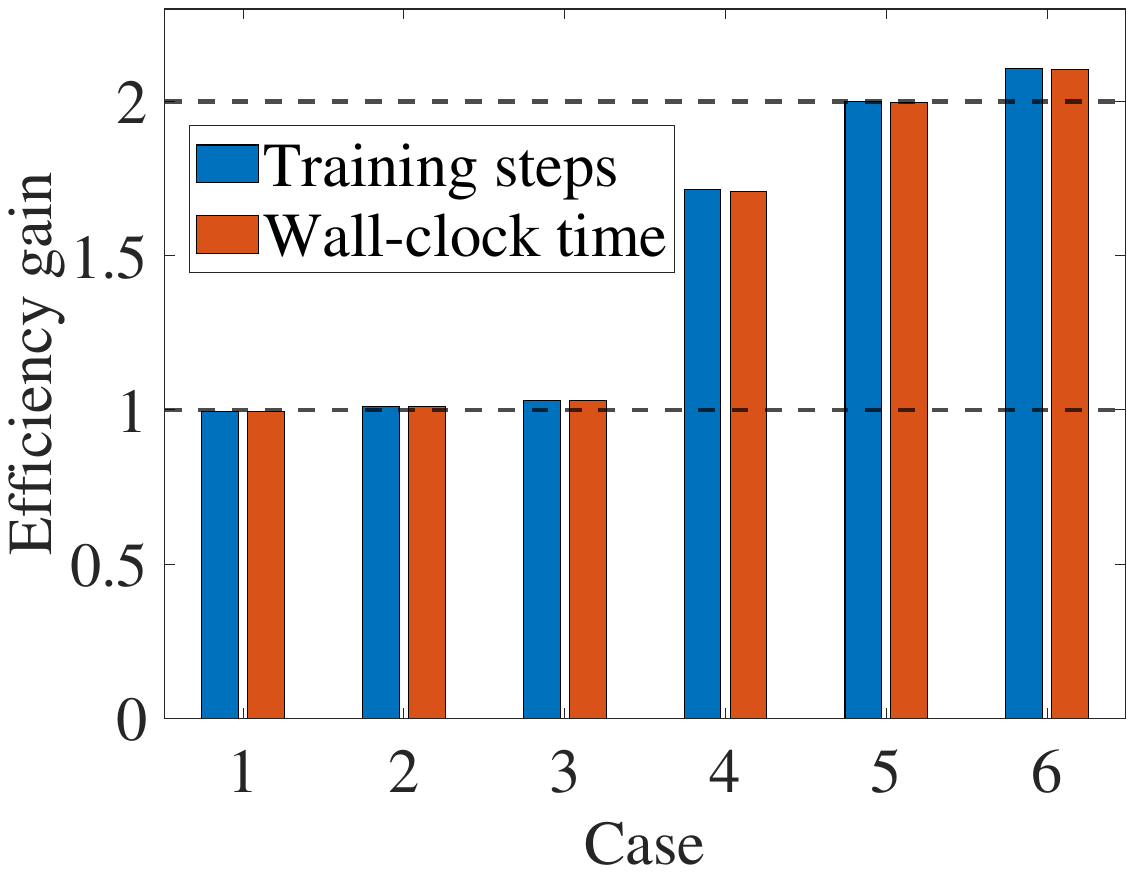}
			\caption*{(a) Convex loss function} 
		\end{subfigure}
		\hfill
		\begin{subfigure}[b]{0.48\textwidth}
			\includegraphics[width=\textwidth]{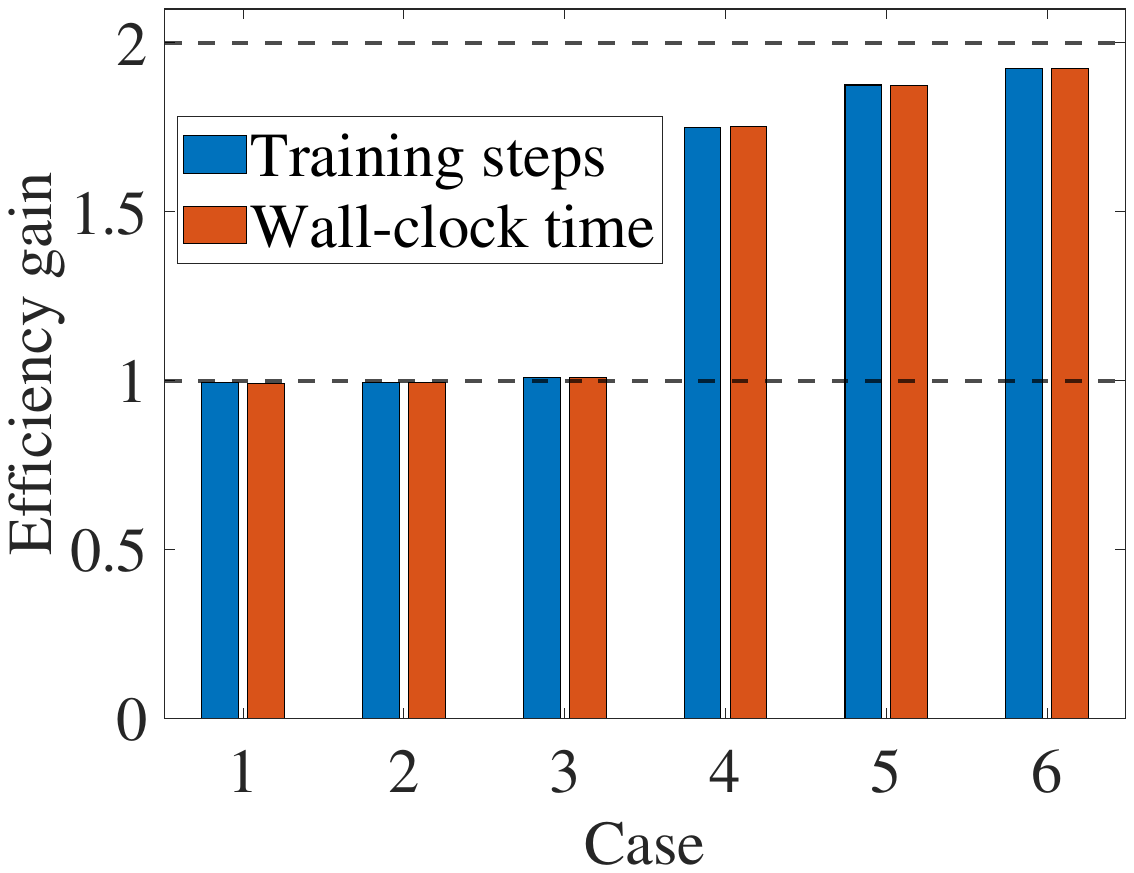}
			\caption*{(b) Non-convex loss function}
		\end{subfigure}
		\caption{Efficiency gain of HHFL over HFL across Cases 1-6:
			\textit{(IID, IID)}, \textit{(non-IID-$1$, IID)}, \textit{(non-IID-$2$, IID)},
			\textit{(non-IID, non-IID-$1$)}, \textit{(non-IID, non-IID-$2$)}, and
			\textit{(non-IID, non-IID-$2'$)}.}
		\label{fig:efficiency}
	\end{minipage}
	\hfill
	\begin{minipage}[t]{0.48\textwidth}
		\centering
		\begin{subfigure}[b]{0.48\textwidth}
			\includegraphics[width=\textwidth]{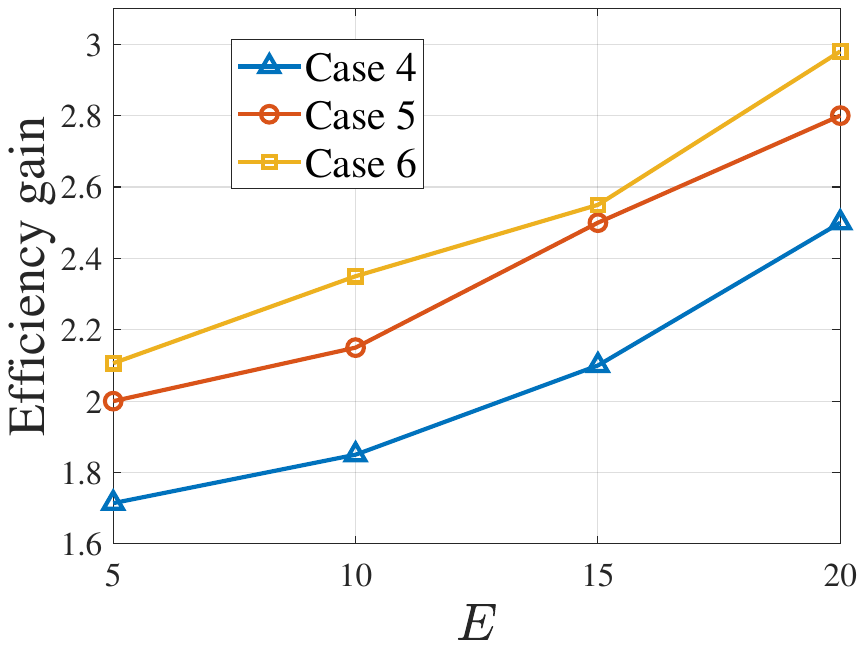}
			\caption*{(a) Convex loss function}
		\end{subfigure}
		\hfill
		\begin{subfigure}[b]{0.48\textwidth}
			\includegraphics[width=\textwidth]{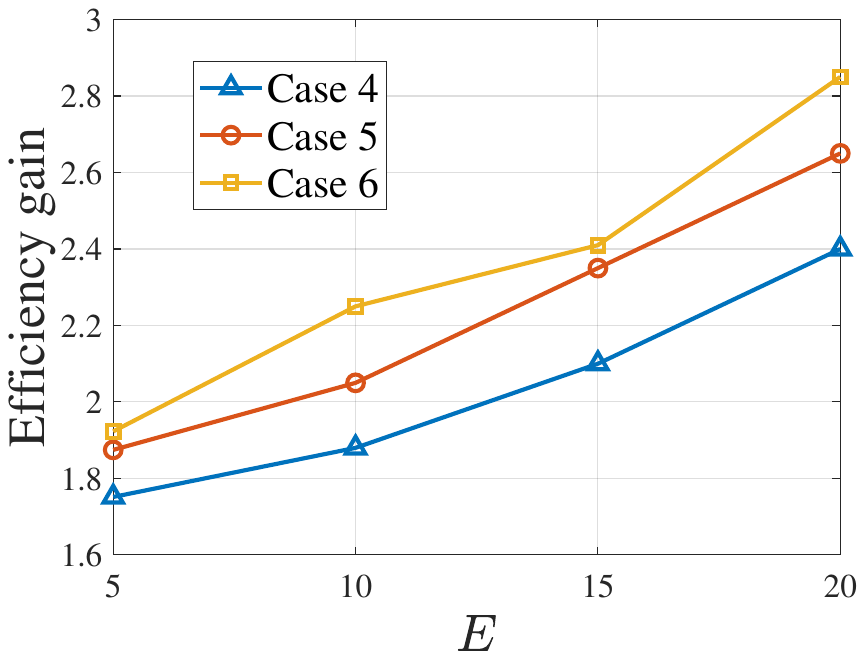}
			\caption*{(b) Non-convex loss function}
		\end{subfigure}
		\caption{Efficiency gain of HHFL over HFL with varying $E$ across Cases 4-6: \textit{(non-IID, non-IID-$1$)}, \textit{(non-IID, non-IID-$2$)}, and \textit{(non-IID, non-IID-$2'$)}.}
		\label{efficiency with varing E}
	\end{minipage}
	
\end{figure*}


\begin{figure*}[ht]
	\begin{minipage}[t]{0.49\textwidth}
		\centering
		\begin{subfigure}[b]{0.49\textwidth}
			\includegraphics[width=\textwidth]{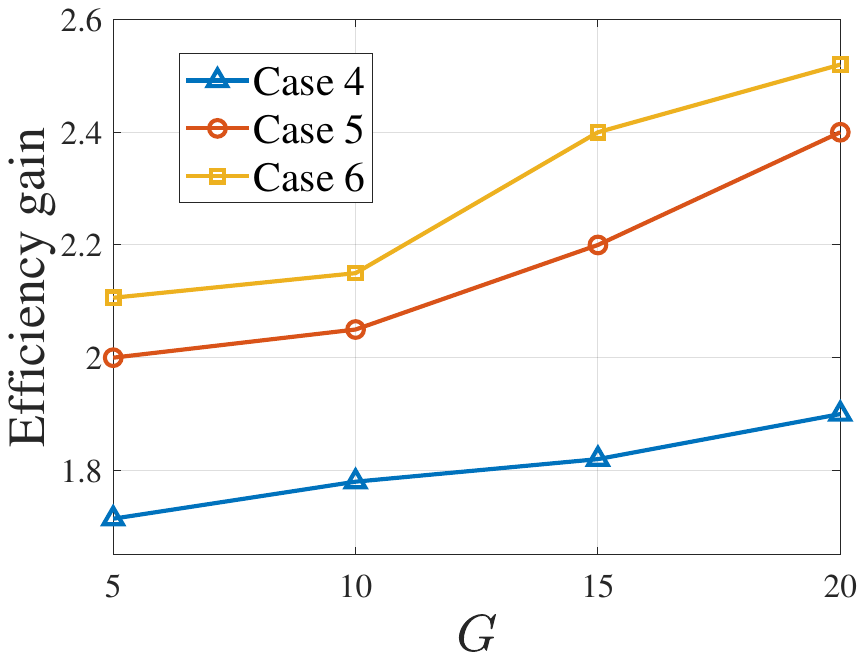}
			\caption*{(a) Convex loss function}
		\end{subfigure}
		\hfill
		\begin{subfigure}[b]{0.49\textwidth}
			\includegraphics[width=\textwidth]{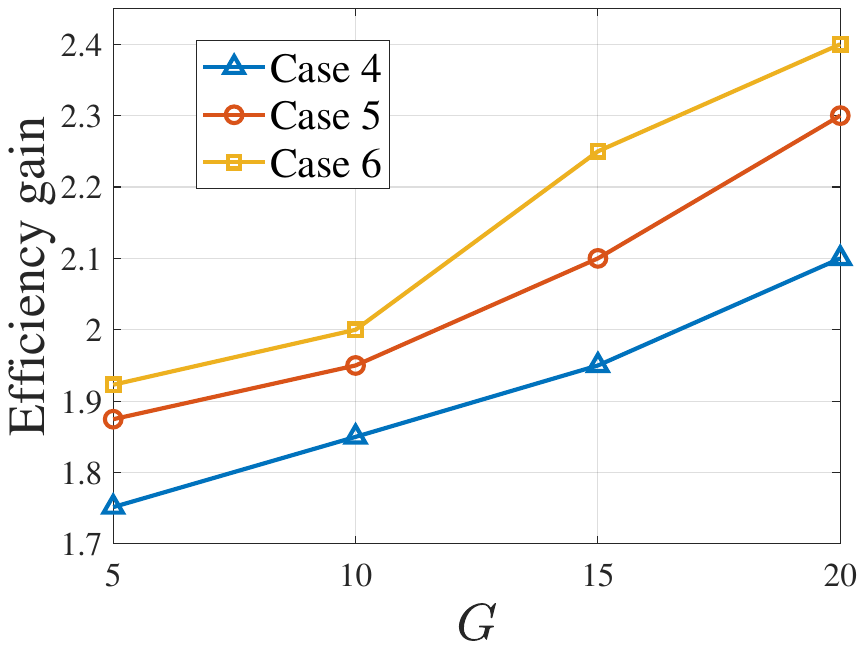}
			\caption*{(b) Non-convex loss function}
		\end{subfigure}
		\caption{Efficiency gain of HHFL over HFL with varying $G$ across Cases 4-6: \textit{(non-IID, non-IID-$1$)}, \textit{(non-IID, non-IID-$2$)}, and \textit{(non-IID, non-IID-$2'$)}.}
		\label{efficiency with varing G}
	\end{minipage}
	\hfill
	\begin{minipage}[t]{0.49\textwidth}
		\centering
		\begin{subfigure}[b]{0.49\textwidth}
			\includegraphics[width=\textwidth]{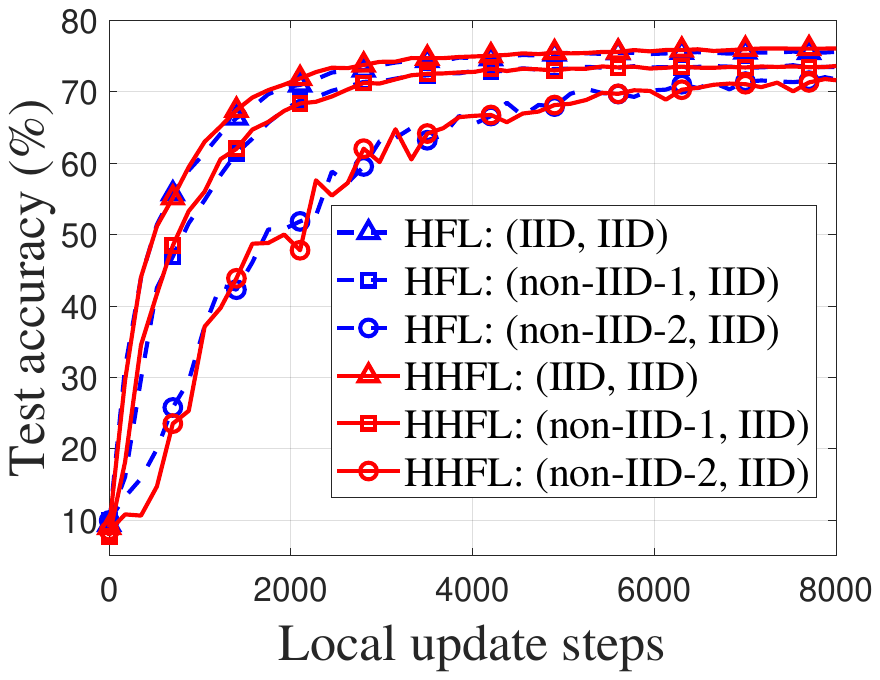}
			\caption*{(a) \textit{ES IID} scenario}
		\end{subfigure}
		\hfill
		\begin{subfigure}[b]{0.49\textwidth}
			\includegraphics[width=\textwidth]{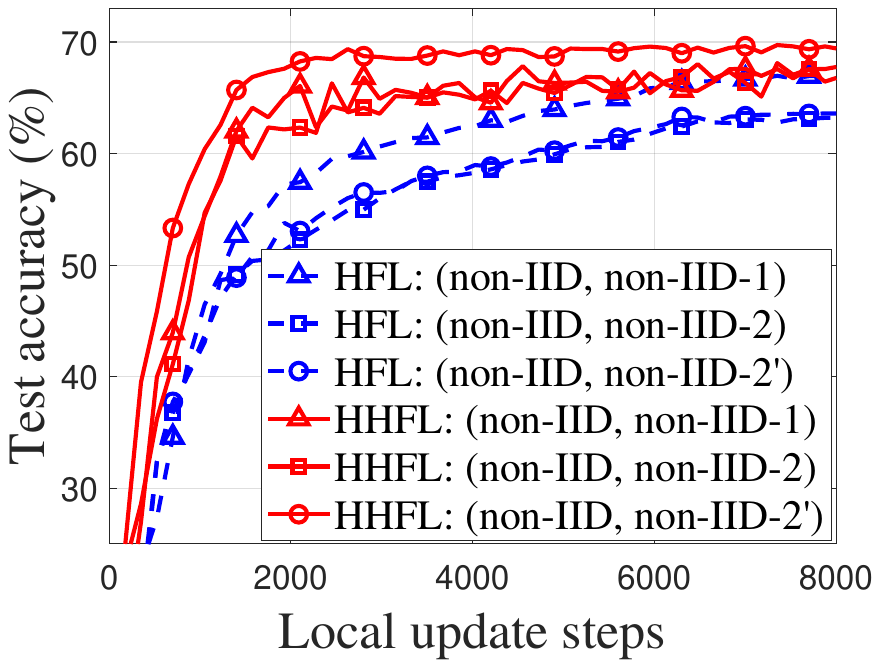}
			\caption*{(b) \textit{ES non-IID} scenario}
		\end{subfigure}
		\caption{{\hlb Convergence curves across six experimental cases on the CIFAR-10 dataset.}}
		\label{fig:cifar_convergence}
	\end{minipage}
\end{figure*}

\subsubsection{Efficiency gain of HHFL over HFL in terms of training steps and wall-clock time across all cases} Based on the obtained convergence curves, we calculate the number of training steps and the wall-clock time required for each architecture to reach convergence in each case using the same convergence criterion. We then compute the efficiency gain of HHFL over HFL across all cases, in terms of both training steps and wall-clock time. 
The results of the efficiency gain comparison are shown in Fig.~\ref{fig:efficiency}. It is evident that both the training steps and the overall time bars are almost equal in height in each case. This is because the overall time scales nearly linearly with training steps, as shown in (\ref{Eq: overall}), then the time ratio between HFL and HHFL closely matches the ratio of their training steps.
Besides that, the efficiency ratios for the first three cases under the \textit{ES IID} scenario are all close to $1$, which indicates that there is little to no efficiency improvement for the use of HHFL. Moreover, for the last three cases, the efficiency ratios are significantly greater than $1$, even reaching $2$ in case $6$, indicating that the use of HHFL results in substantial efficiency improvements.

\subsubsection{Impact of parameters $E$ and $G$ on the efficiency gain of HHFL over HFL} We evaluate how the efficiency gain of HHFL over HFL changes with different values of parameters $E$ and $G$ across three cases under \textit{ES non-IID} scenario: \textit{(non-IID, non-IID-$1$)}, \textit{(non-IID, non-IID-$2$)}, and \textit{(non-IID, non-IID-$2'$)}. The default values are set to $E = 5$ and $G = 5$. When varying one of the parameters, the other is kept fixed at its default value. Since the efficiency gains in terms of training steps and the overall time are nearly identical across all cases regardless of the model used, we use the efficiency gain with respect to training steps as a representative metric.
The changes in efficiency gain with respect to $E$ and $G$ are shown in Fig.~\ref{efficiency with varing E} and Fig.~\ref{efficiency with varing G}, respectively.
It can be observed that the efficiency gain increases in all three cases as either $E$ or $G$ becomes larger. {\hlb The underlying reason is that increasing $E$ or $G$ weakens the frequency of global alignment: a larger $E$ means more local steps are taken between edge aggregations, and a larger $G$ means cloud aggregation is invoked less often. Under non-IID data, both effects aggravate inter-ES model drift and make the ES models diverge more easily in conventional HFL, where ESs mainly evolve in isolation between cloud aggregations.  
	In contrast, HHFL introduces an additional cross-ES knowledge bridge via overlapping clients, which repeatedly receive models from multiple ESs and feed back updates. This inter-ES coupling partially compensates for the reduced global alignment, mitigates model drift, and helps maintain consistency across ESs. Therefore, when $E$ or $G$ increases, the relative advantage of HHFL over HFL becomes more pronounced. }
\begin{figure*}[ht]
	\begin{minipage}[t]{0.49\textwidth}
		\centering
		\begin{subfigure}[b]{0.49\textwidth}
			\includegraphics[width=\textwidth]{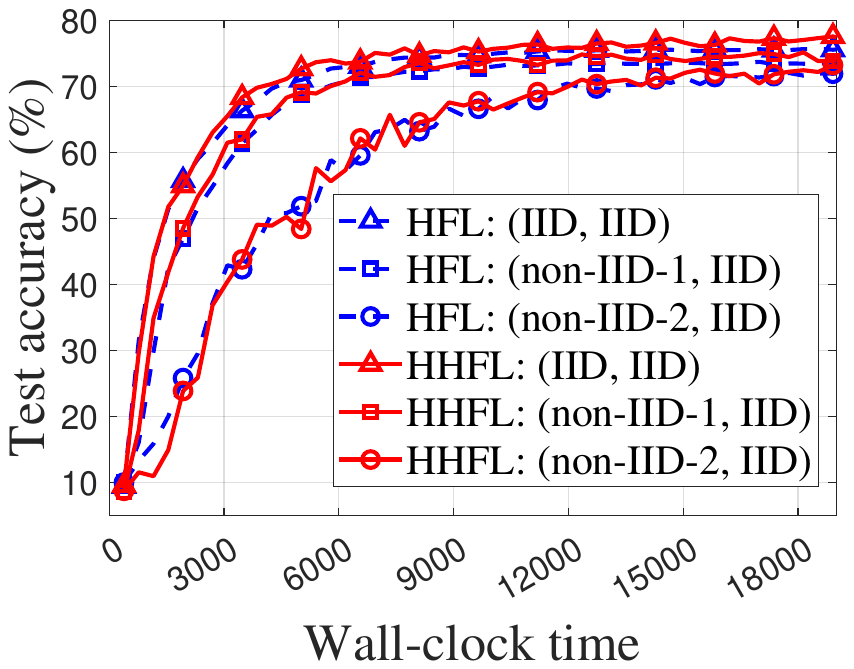}
			\caption*{(a) \textit{ES IID} scenario}
		\end{subfigure}
		\hfill
		\begin{subfigure}[b]{0.49\textwidth}
			\includegraphics[width=\textwidth]{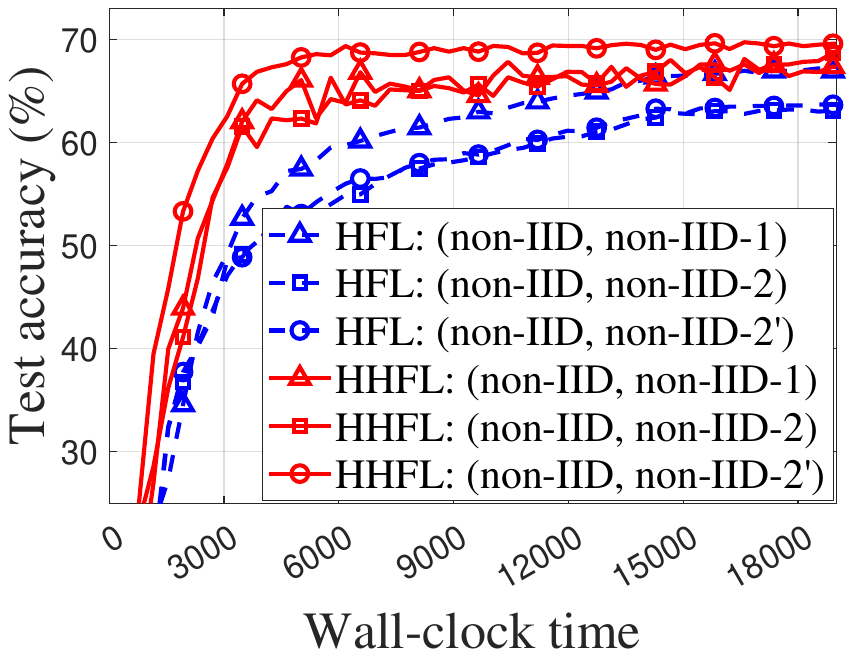}
			\caption*{(b) \textit{ES non-IID} scenario}
		\end{subfigure}
		\caption{{\hlb Test accuracy versus wall-clock time across six cases on the CIFAR-10 dataset (time is normalized such that one unit corresponds to the time for a client to complete $E$ local updates).}}
		\label{fig:Cifar_time}
	\end{minipage}
	\hfill
	\begin{minipage}[t]{0.49\textwidth}
		\centering
		\begin{subfigure}[b]{0.49\textwidth}
			\includegraphics[width=\textwidth]{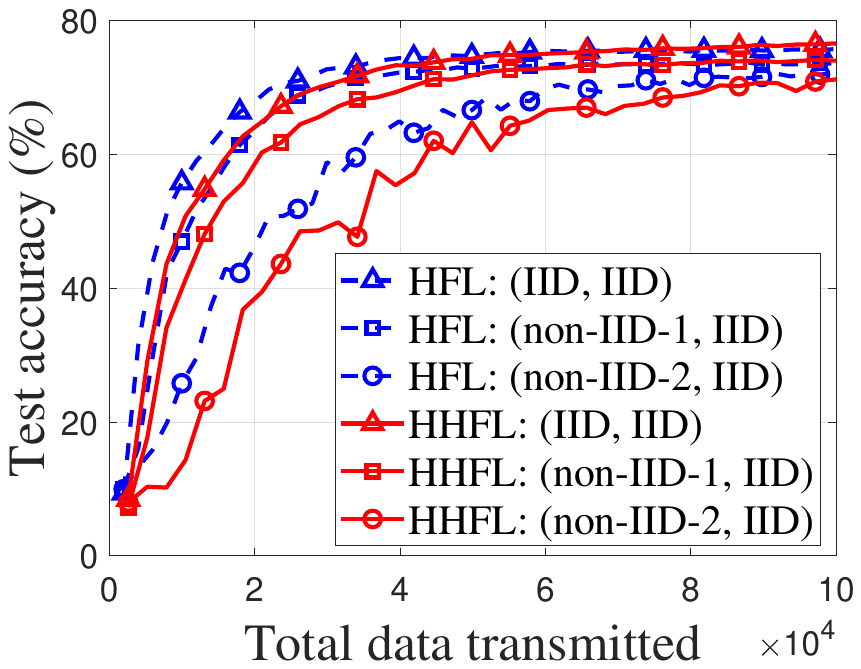}
			\caption*{(a) \textit{ES IID} scenario}
		\end{subfigure}
		\hfill
		\begin{subfigure}[b]{0.49\textwidth}
			\includegraphics[width=\textwidth]{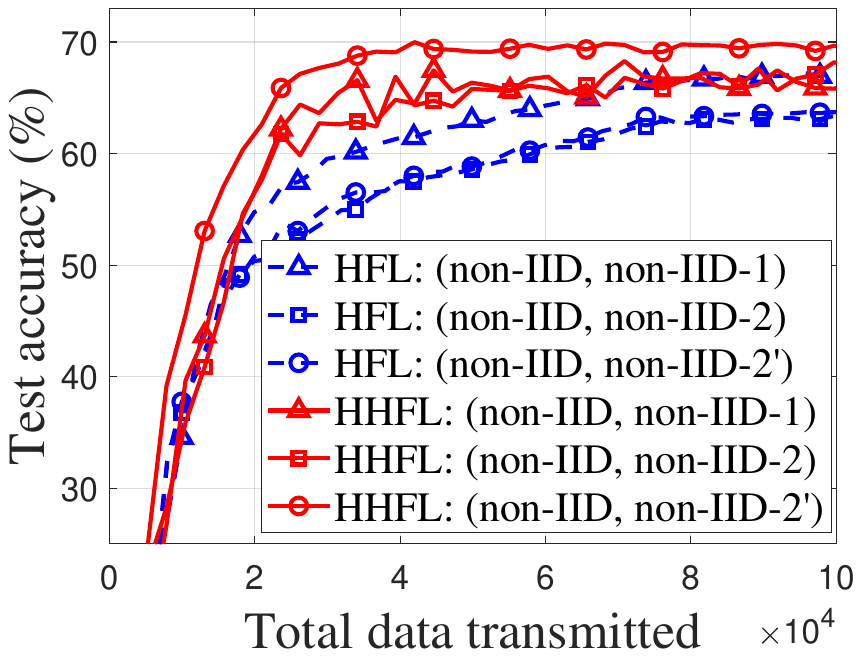}
			\caption*{(b) \textit{ES non-IID} scenario}
		\end{subfigure}
		\caption{{\hlb Test accuracy versus total data transmitted across six cases on the CIFAR-10 dataset (data volume is normalized such that one unit corresponds to the number of bits required to transmit one model).}}
		\label{fig:Cifar_data}
	\end{minipage}
\end{figure*}
{\hlb \subsection{Additional Experiments on CIFAR-10 dataset}
	\subsubsection{Setup}
	To validate the robustness of HHFL under complex data heterogeneity, we extended our evaluation to the more challenging CIFAR-10 dataset. The model architecture follows a VGG-11 style backbone, which comprises a feature extractor with three convolutional blocks (64, 128, and 256 channels) interspersed with max-pooling layers. To stabilize training under non-IID data and avoid overfitting, we applied group normalization and dropout. The feature extractor is followed by a streamlined two-layer fully connected classifier ($4096 \to 1024 \to 10$). Local training is performed using SGD with a momentum of 0.9 and a weight decay of $5 \times 10^{-4}$. The initial learning rate is $0.07$, with an exponential decay of $0.9$ per epoch. For other parameters and configurations, unless otherwise specified, we follow the same experimental settings as in our basic experiments on MNIST.
	\subsubsection{Results Analysis}
	The results are shown in Figs.~\ref{fig:cifar_convergence}--\ref{fig:Cifar_data}. We observe that the experimental trends on CIFAR-10 are highly consistent with those observed on the MNIST dataset. Specifically, as illustrated in Fig. \ref{fig:cifar_convergence} (a) and Fig. \ref{fig:Cifar_time} (a), when the data distribution across ESs is IID, HHFL exhibits negligible advantages over HFL in terms of both total training steps and wall-clock time. However, under non-IID data distributions across ESs, HHFL consistently exhibits significantly better convergence efficiency than HFL, as shown in Fig. \ref{fig:cifar_convergence} (b) and Fig. \ref{fig:Cifar_time} (b). The fundamental reason is the effect of multi-connectivity-enabled knowledge sharing, which makes the update directions of different ES models more aligned with the global optimization direction, thereby reducing the total number of updates required for convergence. Based on our convergence criterion, HHFL consistently achieves efficiency gains greater than $1.5\times$ over HFL across all these non-IID scenarios, in terms of both training steps and wall-clock time. Fig.~\ref{fig:Cifar_data} (a) shows that when the data distributions across ESs are IID, HHFL requires more total data transmission than HFL to reach the same test accuracy, as shown in Fig.~\ref{fig:Cifar_data} (b). This is because, in this case, the additional data transmitted over the multi-connectivity links is largely redundant in terms of knowledge content.  
	In contrast, when the data distributions across ESs are non-IID, HHFL transmits less total data than HFL to achieve convergence. This is because the data exchanged through multi-connectivity links is no longer redundant; instead, it conveys useful knowledge from different ESs. Moreover, the resulting benefit of accelerating training convergence outweighs the additional data-transmission cost introduced by the extra links.
	
}

\section{Related Work}\label{Sec:relatedwork}
In this section, we discuss the research related to this work. 

\noindent \textbf{HFL Designs:}
HFL has been extensively studied to address various real-world challenges faced by FL in large-scale wireless networks \cite{liu2020client,abad2020hierarchical,luo2020hfel,ma2024personalized,wu2023hiflash,alqattan2024security,mitra2023timely,wu2021hierarchical,gao2019hhhfl,feng2021min,wainakh2020enhancing,zhou2023toward,feng2022mobility,zhou2023hierarchical,mhaisen2021optimal,guo2023privacy,qu2022context,wang2022accelerating,li2019smartpc,de2023hed,liu2022time,fang2024hierarchical,azimi2024quantized,pervej2024personalized,xu2021adaptive,briggs2020federated,lim2021dynamic,su2025joint,liu2025compressed,jiang2024hdhrfl,lim2021decentralized,wang2021resource}. 
For example, several studies have explored cluster-based hierarchical aggregation, where clients are first grouped into clusters based on geographical proximity, data similarity, or mobility patterns to improve communication efficiency and training accuracy \cite{briggs2020federated,feng2022mobility,zhou2023hierarchical,wang2022accelerating}. Other works have investigated personalized hierarchical FL frameworks, where clients and ESs collaboratively adapt models to individual data distributions to tackle non-IID challenges \cite{ma2024personalized,wu2021hierarchical,pervej2024personalized}. In addition, some studies focus on optimizing resource allocation and edge association to dynamically improve training efficiency under varying network conditions \cite{luo2020hfel,wang2021resource,lim2021dynamic}. 
Although these studies have introduced valuable improvements across various aspects, they fundamentally preserve the traditional HFL architecture, where each client connects to a single ES, and the ESs periodically upload aggregated models to the CS. This communication structure inherently limits cross-ES information sharing and constrains the potential benefits of modern multi-connectivity wireless infrastructures. In contrast, our proposed HHFL architecture fully leverages the CoMP capabilities of modern wireless networks, allowing clients located in overlapping regions to simultaneously communicate with multiple ESs. By enabling such multi-ES connections, HHFL facilitates broader model sharing across different ESs, effectively mitigates statistical heterogeneity, and significantly accelerates convergence under practical non-IID distributions. 

\noindent \textbf{Multi-ES architecture:}
This FL architecture is technically regarded as a client-ES structure with multiple parallel aggregation ESs. Numerous research efforts have been devoted to this type of architecture, generally aiming to explore how inter-ES collaboration can improve overall learning performance, enhance convergence under non-IID data, and optimize communication and resource allocation efficiency in distributed systems \cite{kawnine2024evaluating,ganguly2023multi,han2021fedmes,qu2022convergence,nguyen2022latency,guo2021inter,diamanti2025resource,nguyen2025mario,thapa2021advancements,yang2022federated}. Among these, several studies have investigated the overlapping-region scenario, where a client may connect to multiple ESs simultaneously. This setting closely reflects the characteristics of 5G and beyond communication systems and is particularly relevant to our work. For example, \cite{han2021fedmes} proposed a multi-ES architecture and introduced the FedMes algorithm, in which clients located in overlapping regions aggregate the models from their connected ESs to initialize local training. Based on that, \cite{qu2022convergence} designed the MS-FedAvg algorithm, which further improved model aggregation and achieved higher efficiency.
These existing works allow only indirect and localized data sharing between ESs through clients in overlapping regions. However, our proposed HHFL supports direct and global cross-ES data sharing via the CS during training. Under practical 5G and beyond network environments, the wired connections between base stations and the cloud server are stable and high-speed, and it is common to see the data distributions across ESs are non-IID. In such situations, our HHFL can achieve higher efficiency by mitigating statistical heterogeneity while incurring only a small communication cost between the CS and ESs.

\section{Conclusion}\label{Sec:conclusion}
This paper presents a novel FL architecture named HHFL, which can be regarded as a hybrid extension of HFL. In this architecture, each client is capable of communicating with one or more ESs simultaneously, making HHFL inherently aligned with current and future wireless infrastructures that support CoMP. Through both theoretical analysis and experimental evaluation, we demonstrate the effectiveness and efficiency of HHFL, underscoring its strong potential as a scalable and efficient solution for practical FL model training in today’s and future wireless networks.
\bibliography{References}

@article{han2021fedmes,
  title={FedMes: Speeding up federated learning with multiple edge servers},
  author={Han, Dong-Jun and Choi, Minseok and Park, Jungwuk and Moon, Jaekyun},
  journal={IEEE J. Sel. Areas Commun. },
  year={2021},
  publisher={IEEE}
}

@inproceedings{liu2020client,
  title={Client-edge-cloud hierarchical federated learning},
  author={Liu, Lumin and Zhang, Jun and Song, SH and Letaief, Khaled B},
  booktitle={IEEE ICC},
  year={2020}
}

@article{li2019convergence,
  title={On the convergence of fedavg on non-iid data},
  author={Li, Xiang and Huang, Kaixuan and Yang, Wenhao and Wang, Shusen and Zhang, Zhihua},
  journal={arXiv preprint},
  year={2019}
}

@article{qu2022convergence,
  title={On the convergence of multi-server federated learning with overlapping area},
  author={Qu, Zhe and Li, Xingyu and Xu, Jie and Tang, Bo and Lu, Zhuo and Liu, Yao},
  journal={IEEE Trans. Mobile Comput. },
  year={2022},
  publisher={IEEE}
}

@inproceedings{fang2024submodel,
  title={Submodel partitioning in hierarchical federated learning: Algorithm design and convergence analysis},
  author={Fang, Wenzhi and Han, Dong-Jun and Brinton, Christopher G},
  booktitle={IEEE ICC},
  year={2024}
}

@inproceedings{mcmahan2017communication,
  title={Communication-efficient learning of deep networks from decentralized data},
  author={McMahan, Brendan and Moore, Eider and Ramage, Daniel and Hampson, Seth and y Arcas, Blaise Aguera},
  booktitle={PMLR AISTATS},
  year={2017}
}

@inproceedings{bonawitz2017practical,
  title={Practical secure aggregation for privacy-preserving machine learning},
  author={Bonawitz, Keith and Ivanov, Vladimir and Kreuter, Ben and Marcedone, Antonio and McMahan, H Brendan and Patel, Sarvar and Ramage, Daniel and Segal, Aaron and Seth, Karn},
  booktitle={ACM CCS},
  year={2017}
}

@inproceedings{li2020federated,
  title={Federated optimization in heterogeneous networks},
  author={Li, Tian and Sahu, Anit Kumar and Zaheer, Manzil and Sanjabi, Maziar and Talwalkar, Ameet and Smith, Virginia},
  booktitle={MLSys},
  year={2020}
}

@article{smith2017federated,
  title={Federated multi-task learning},
  author={Smith, Virginia and Chiang, Chao-Kai and Sanjabi, Maziar and Talwalkar, Ameet S},
  journal={Adv. Neural Inf. Process. Syst.},
  year={2017}
}

@article{fang2022communication,
  title={Communication-efficient stochastic zeroth-order optimization for federated learning},
  author={Fang, Wenzhi and Yu, Ziyi and Jiang, Yuning and Shi, Yuanming and Jones, Colin N and Zhou, Yong},
  journal={IEEE Trans. Signal Process. },
  year={2022},
  publisher={IEEE}
}

@article{kairouz2021advances,
  title={Advances and open problems in federated learning},
  author={Kairouz, Peter and McMahan, H Brendan and Avent, Brendan and Bellet, Aur{\'e}lien and Bennis, Mehdi and Bhagoji, Arjun Nitin and Bonawitz, Kallista and Charles, Zachary and Cormode, Graham and Cummings, Rachel and others},
  journal={Found. Trends Mach. Learn.},
  year={2021}
}

@article{bonawitz2019towards,
  title={Towards federated learning at scale: Syste m design},
  author={Bonawitz, Keith},
  journal={arXiv preprint},
  year={2019}
}

@article{zhao2018federated,
  title={Federated learning with non-iid data},
  author={Zhao, Yue and Li, Meng and Lai, Liangzhen and Suda, Naveen and Civin, Damon and Chandra, Vikas},
  journal={arXiv preprint},
  year={2018}
}

@article{wang2019adaptive,
  title={Adaptive federated learning in resource constrained edge computing systems},
  author={Wang, Shiqiang and Tuor, Tiffany and Salonidis, Theodoros and Leung, Kin K and Makaya, Christian and He, Ting and Chan, Kevin},
  journal={IEEE J. Sel. Areas Commun.},
  year={2019},
  publisher={IEEE}
}

@article{lyu2020threats,
  title={Threats to federated learning: A survey},
  author={Lyu, Lingjuan and Yu, Han and Yang, Qiang},
  journal={arXiv preprint},
  year={2020}
}

@article{nguyen2020fast,
  title={Fast-convergent federated learning},
  author={Nguyen, Hung T and Sehwag, Vikash and Hosseinalipour, Seyyedali and Brinton, Christopher G and Chiang, Mung and Poor, H Vincent},
  journal={IEEE J. Sel. Areas Commun.},
  year={2020},
  publisher={IEEE}
}

@article{amiri2021convergence,
  title={Convergence of update aware device scheduling for federated learning at the wireless edge},
  author={Amiri, Mohammad Mohammadi and G{\"u}nd{\"u}z, Deniz and Kulkarni, Sanjeev R and Poor, H Vincent},
  journal={IEEE Trans. Wireless Commun. },
  year={2021},
  publisher={IEEE}
}

@article{zhang2012communication,
  title={Communication-efficient algorithms for statistical optimization},
  author={Zhang, Yuchen and Wainwright, Martin J and Duchi, John C},
  journal={	Adv. Neural Inf. Process. Syst.},
  year={2012}
}

@article{stich2018local,
  title={Local SGD converges fast and communicates little},
  author={Stich, Sebastian U},
  journal={arXiv preprint},
  year={2018}
}

@inproceedings{reisizadeh2020fedpaq,
  title={Fedpaq: A communication-efficient federated learning method with periodic averaging and quantization},
  author={Reisizadeh, Amirhossein and Mokhtari, Aryan and Hassani, Hamed and Jadbabaie, Ali and Pedarsani, Ramtin},
  booktitle={PMLR AISTATS},
  year={2020}
}

@article{brik2020federated,
  title={Federated learning for UAVs-enabled wireless networks: Use cases, challenges, and open problems},
  author={Brik, Bouziane and Ksentini, Adlen and Bouaziz, Maha},
  journal={IEEE Access},
  year={2020},
  publisher={IEEE}
}

@article{nguyen2021federated,
  title={Federated learning for Internet of Things: A comprehensive survey},
  author={Nguyen, Dinh C. and Ding, Ming and Pathirana, Pubudu N. and Seneviratne, Aruna and Li, Jun and Vincent Poor, H.},
  journal={IEEE Commun. Surv. Tutor.},
  year={2021},
  publisher={IEEE}
}

@article{yin2020fedloc,
  title={FedLoc: Federated learning framework for data-driven cooperative localization and location data processing},
  author={Yin, Feng and Lin, Zhidi and Kong, Qinglei and Xu, Yue and Li, Deshi and Theodoridis, Sergios and Cui, Shuguang Robert},
  journal={IEEE Open J. Signal Process.},
  year={2020},
  publisher={IEEE}
}

@article{mehmet1988traffic,
  title={Traffic analysis of a local area network with a star topology},
  author={Mehmet-Ali, Mustafa K and Hayes, Jeremiah F and Elhakeem, Ahmed K},
  journal={IEEE Trans. Commun.},
  year={1988},
  publisher={IEEE}
}

@article{petrek2001large,
  title={A large hierarchical network star—star topology design algorithm},
  author={Petrek, Jozef and Sledt, Volker},
  journal={Eur. Trans. Telecommun.},
  year={2001},
  publisher={Wiley Online Library}
}

@article{goratti2015nacrp,
  title={NACRP: A connectivity protocol for star topology wireless sensor networks},
  author={Goratti, Leonardo and Baykas, Tuncer and Rasheed, Tinku and Kato, Shuzo},
  journal={IEEE Wirel. Commun. Lett.},
  year={2015},
  publisher={IEEE}
}

@article{barranco2006active,
  title={An active star topology for improving fault confinement in CAN networks},
  author={Barranco, Manuel and Proenza, Juli{\'a}n and Rodr{\'\i}guez-Navas, Guillermo and Almeida, Lu{\'\i}s},
  journal={IEEE Trans Ind. Informat. },
  year={2006},
  publisher={IEEE}
}

@article{day2002comparative,
  title={A comparative study of topological properties of hypercubes and star graphs},
  author={Day, Khaled and Tripathi, Anand},
  journal={IEEE Trans. Parallel Distrib. Syst.},
  year={2002},
  publisher={IEEE}
}

@article{xu2021adaptive,
  title={Adaptive hierarchical federated learning over wireless networks},
  author={Xu, Bo and Xia, Wenchao and Wen, Wanli and Liu, Pei and Zhao, Haitao and Zhu, Hongbo},
  journal={IEEE Trans. Veh. Technol.},
  year={2021},
  publisher={IEEE}
}

@article{ravasz2003hierarchical,
  title={Hierarchical organization in complex networks},
  author={Ravasz, Erzs{\'e}bet and Barab{\'a}si, Albert-L{\'a}szl{\'o}},
  journal={APS Physical review E},
  year={2003},
  publisher={APS}
}

@inproceedings{lessmann2007parameterized,
  title={Parameterized hierarchical layer topology construction for wireless networks},
  author={Lessmann, Johannes and Krishnamurthy, Arvind},
  booktitle={IEEE ICSNC},
  year={2007}
}

@article{jiao2023performance,
  title={Performance analysis for downlink transmission in multiconnectivity cellular V2X networks},
  author={Jiao, Luofang and Zhao, Jiwei and Xu, Yunting and Zhang, Tianqi and Zhou, Haibo and Zhao, Dongmei},
  journal={	IEEE Internet Things J.},
  year={2023},
  publisher={IEEE}
}

@article{ye2023multi,
  title={Multi-connection to the sky: Energy-efficient Beamforming for multi-satellite uplink transmission with lens antenna array},
  author={Ye, Neng and Cao, Xinyuan and Ding, Xuhui and Li, Jiaxuan and Zhao, Deguang and Ouyang, Qiaolin},
  journal={IEEE Trans. Green Commun. Netw.},
  year={2023},
  publisher={IEEE}
}

@article{xiao2024space,
  title={Space-air-ground integrated wireless networks for 6G: Basics, key technologies and future trends},
  author={Xiao, Yue and Ye, Ziqiang and Wu, Mingming and Li, Haoyun and Xiao, Ming and Alouini, Mohamed-Slim and Al-Hourani, Akram and Cioni, Stefano},
  journal={IEEE J. Sel. Areas Commun. },
  year={2024},
  publisher={IEEE}
}

@inproceedings{xu2020research,
  title={Research on 6G mobile communication system},
  author={Xu, Guangbin},
  booktitle={Journal of Physics: Conference Series},
  year={2020},
  organization={IOP Publishing}
}

@article{chowdhury20206g,
  title={6G wireless communication systems: Applications, requirements, technologies, challenges, and research directions},
  author={Chowdhury, Mostafa Zaman and Shahjalal, Md and Ahmed, Shakil and Jang, Yeong Min},
  journal={IEEE Open J. Commun. Soc.},
  year={2020},
  publisher={IEEE}
}

@article{akyildiz20206g,
  title={6G and beyond: The future of wireless communications systems},
  author={Akyildiz, Ian F and Kak, Ahan and Nie, Shuai},
  journal={IEEE access},
  year={2020},
  publisher={IEEE}
}

@article{yang20196g,
  title={6G wireless communications: Vision and potential techniques},
  author={Yang, Ping and Xiao, Yue and Xiao, Ming and Li, Shaoqian},
  journal={IEEE network},
  year={2019},
  publisher={IEEE}
}

@inproceedings{briggs2020federated,
  title={Federated learning with hierarchical clustering of local updates to improve training on non-IID data},
  author={Briggs, Christopher and Fan, Zhong and Andras, Peter},
  booktitle={IEEE IJCNN},
  year={2020}
}

@article{bhushan2014network,
  title={Network densification: the dominant theme for wireless evolution into 5G},
  author={Bhushan, Naga and Li, Junyi and Malladi, Durga and Gilmore, Rob and Brenner, Dean and Damnjanovic, Aleksandar and Sukhavasi, Ravi Teja and Patel, Chirag and Geirhofer, Stefan},
  journal={IEEE Commun. Mag. },
  year={2014},
  publisher={IEEE}
}

@article{ullah2023survey,
  title={A survey on handover and mobility management in 5G HetNets: current state, challenges, and future directions},
  author={Ullah, Yasir and Roslee, Mardeni Bin and Mitani, Sufian Mousa and Khan, Sajjad Ahmad and Jusoh, Mohamad Huzaimy},
  journal={Sensors},
  year={2023},
  publisher={MDPI}
}

@article{borralho2021survey,
  title={A survey on coverage enhancement in cellular networks: Challenges and solutions for future deployments},
  author={Borralho, Ruben and Mohamed, Abdelrahim and Quddus, Atta Ul and Vieira, Pedro and Tafazolli, Rahim},
  journal={IEEE Commun. Surv. Tutor.},
  year={2021},
  publisher={IEEE}
}

@article{rehman2023survey,
  title={A survey of handover management in mobile HetNets: Current challenges and future directions},
  author={Rehman, Aziz Ur and Roslee, Mardeni Bin and Jun Jiat, Tiang},
  journal={Applied Sciences},
  year={2023},
  publisher={MDPI}
}

@article{luo2020hfel,
  title={HFEL: Joint edge association and resource allocation for cost-efficient hierarchical federated edge learning},
  author={Luo, Siqi and Chen, Xu and Wu, Qiong and Zhou, Zhi and Yu, Shuai},
  journal={IEEE Trans. Wireless Commun. },
  year={2020},
  publisher={IEEE}
}

@article{ma2024personalized,
  title={Personalized client-edge-cloud hierarchical federated learning in mobile edge computing},
  author={Ma, Chunmei and Li, Xiangqian and Huang, Baogui and Li, Guangshun and Li, Fengyin},
  journal={	Springer J. Cloud Comput.},
  year={2024},
  publisher={Springer}
}

@article{wu2023hiflash,
  title={HiFlash: Communication-efficient hierarchical federated learning with adaptive staleness control and heterogeneity-aware client-edge association},
  author={Wu, Qiong and Chen, Xu and Ouyang, Tao and Zhou, Zhi and Zhang, Xiaoxi and Yang, Shusen and Zhang, Junshan},
  journal={IEEE Trans. Parallel Distrib. Syst.},
  year={2023},
  publisher={IEEE}
}

@inproceedings{alqattan2024security,
  title={Security Assessment of Hierarchical Federated Deep Learning},
  author={Alqattan, Duaa S and Sun, Rui and Liang, Huizhi and Nicosia, Guiseppe and Snasel, Vaclav and Ranjan, Rajiv and Ojha, Varun},
  booktitle={Springer ICANN},
  year={2024}
}

@inproceedings{mitra2023timely,
  title={Timely asynchronous hierarchical federated learning: Age of convergence},
  author={Mitra, Purbesh and Ulukus, Sennur},
  booktitle={IEEE WiOpt},
  year={2023}
}

@inproceedings{shamai2001enhancing,
  title={Enhancing the cellular downlink capacity via co-processing at the transmitting end},
  author={Shamai, Shlomo and Zaidel, Benjamin M},
  booktitle={IEEE VTC Spring},
  year={2001}
}

@article{qamar2017comprehensive,
  title={A comprehensive review on coordinated multi-point operation for LTE-A},
  author={Qamar, Faizan and Dimyati, Kaharudin Bin and Hindia, Mhd Nour and Noordin, Kamarul Ariffin Bin and Al-Samman, Ahmed M},
  journal={Computer Networks},
  year={2017},
  publisher={Elsevier}
}

@article{solaija2021generalized,
  title={Generalized coordinated multipoint framework for 5G and beyond},
  author={Solaija, Muhammad Sohaib J and Salman, Hanadi and Kihero, Abuu B and Sa{\u{g}}lam, Mehmet {\.I}zzet and Arslan, H{\"u}seyin},
  journal={IEEE Access},
  year={2021},
  publisher={IEEE}
}

@inproceedings{irram2020coordinated,
  title={Coordinated multi-point transmission in 5G and beyond heterogeneous networks},
  author={Irram, Fauzia and Ali, Mudassar and Maqbool, Zubdah and Qamar, Farhan and Rodrigues, Joel JPC},
  booktitle={IEEE INMIC},
  year={2020}
}

@article{shen2022comp,
  title={CoMP enhanced subcarrier and power allocation for multi-numerology based 5G-NR networks},
  author={Shen, Li-Hsiang and Su, Chia-Yu and Feng, Kai-Ten},
  journal={IEEE Trans. Veh. Technol. },
  year={2022},
  publisher={IEEE}
}

@article{jiang2021road,
  title={The road towards 6G: A comprehensive survey},
  author={Jiang, Wei and Han, Bin and Habibi, Mohammad Asif and Schotten, Hans Dieter},
  journal={IEEE Open J. Commun. Soc.},
  year={2021},
  publisher={IEEE}
}

@inproceedings{wu2021hierarchical,
  title={Hierarchical personalized federated learning for user modeling},
  author={Wu, Jinze and Liu, Qi and Huang, Zhenya and Ning, Yuting and Wang, Hao and Chen, Enhong and Yi, Jinfeng and Zhou, Bowen},
  booktitle={ACM Web Conf.},
  year={2021}
}

@inproceedings{liu2022time,
  title={Time minimization in hierarchical federated learning},
  author={Liu, Chang and Chua, Terence Jie and Zhao, Jun},
  booktitle={ IEEE/ACM SEC},
  year={2022}
}

@article{fang2024hierarchical,
  title={Hierarchical federated learning with multi-timescale gradient correction},
  author={Fang, Wenzhi and Han, Dong-Jun and Chen, Evan and Wang, Shiqiang and Brinton, Christopher},
  journal={Adv. Neural Inf. Process. Syst.},
  year={2024}
}

@article{azimi2024quantized,
  title={Quantized hierarchical federated learning: A robust approach to statistical heterogeneity},
  author={Azimi-Abarghouyi, Seyed Mohammad and Fodor, Viktoria},
  journal={arXiv preprint},
  year={2024}
}

@article{pervej2024personalized,
  title={Personalized Hierarchical Split Federated Learning in Wireless Networks},
  author={Pervej, Md-Ferdous and Molisch, Andreas F},
  journal={arXiv preprint},
  year={2024}
}

@inproceedings{abad2020hierarchical,
  title={Hierarchical federated learning across heterogeneous cellular networks},
  author={Abad, Mehdi Salehi Heydar and Ozfatura, Emre and Gunduz, Deniz and Ercetin, Ozgur},
  booktitle={IEEE ICASSP},
  year={2020}
}

@article{lim2021dynamic,
  title={Dynamic edge association and resource allocation in self-organizing hierarchical federated learning networks},
  author={Lim, Wei Yang Bryan and Ng, Jer Shyuan and Xiong, Zehui and Niyato, Dusit and Miao, Chunyan and Kim, Dong In},
  journal={IEEE J. Sel. Areas Commun. },
  year={2021},
  publisher={IEEE}
}

@article{su2025joint,
  title={Joint Adaptive Aggregation and Resource Allocation for Hierarchical Federated Learning Systems Based on Edge-Cloud Collaboration},
  author={Su, Yi and Fan, Wenhao and Meng, Qingcheng and Chen, Penghui and Liu, Yuan'an},
  journal={IEEE Trans. Cloud Comput.},
  year={2025},
  publisher={IEEE}
}

@article{liu2025compressed,
  title={Compressed Hierarchical Federated Learning for Edge-Level Imbalanced Wireless Networks},
  author={Liu, Yuan and Qu, Zhe and Wang, Jianxin},
  journal={IEEE Trans. Comput. Soc. Syst.},
  year={2025},
  publisher={IEEE}
}

@article{jiang2024hdhrfl,
  title={Hdhrfl: A hierarchical robust federated learning framework for dual-heterogeneous and noisy clients},
  author={Jiang, Yalan and Wang, Dan and Song, Bin and Luo, Shengyang},
  journal={Future Generation Computer Systems},
  year={2024},
  publisher={Elsevier}
}

@article{lim2021decentralized,
  title={Decentralized edge intelligence: A dynamic resource allocation framework for hierarchical federated learning},
  author={Lim, Wei Yang Bryan and Ng, Jer Shyuan and Xiong, Zehui and Jin, Jiangming and Zhang, Yang and Niyato, Dusit and Leung, Cyril and Miao, Chunyan},
  journal={IEEE Trans. Parallel Distrib. Syst.},
  year={2021},
  publisher={IEEE}
}

@inproceedings{wang2021resource,
  title={Resource-efficient federated learning with hierarchical aggregation in edge computing},
  author={Wang, Zhiyuan and Xu, Hongli and Liu, Jianchun and Huang, He and Qiao, Chunming and Zhao, Yangming},
  booktitle={IEEE INFOCOM },
  year={2021}}

@article{gao2019hhhfl,
  title={Hhhfl: Hierarchical heterogeneous horizontal federated learning for electroencephalography},
  author={Gao, Dashan and Ju, Ce and Wei, Xiguang and Liu, Yang and Chen, Tianjian and Yang, Qiang},
  journal={arXiv preprint },
  year={2019}
}

@article{feng2021min,
  title={Min-max cost optimization for efficient hierarchical federated learning in wireless edge networks},
  author={Feng, Jie and Liu, Lei and Pei, Qingqi and Li, Keqin},
  journal={IEEE Trans. Parallel Distrib. Syst.},
  year={2021},
  publisher={IEEE}
}

@inproceedings{wainakh2020enhancing,
  title={Enhancing privacy via hierarchical federated learning},
  author={Wainakh, Aidmar and Guinea, Alejandro Sanchez and Grube, Tim and M{\"u}hlh{\"a}user, Max},
  booktitle={ IEEE EuroS\&PW},
  year={2020}
}

@article{zhou2023toward,
  title={Toward robust hierarchical federated learning in internet of vehicles},
  author={Zhou, Hongliang and Zheng, Yifeng and Huang, Hejiao and Shu, Jiangang and Jia, Xiaohua},
  journal={IEEE Trans. Intell. Transp. Syst.},
  year={2023},
  publisher={IEEE}
}

@article{feng2022mobility,
  title={Mobility-aware cluster federated learning in hierarchical wireless networks},
  author={Feng, Chenyuan and Yang, Howard H and Hu, Deshun and Zhao, Zhiwei and Quek, Tony QS and Min, Geyong},
  journal={IEEE Trans. Wireless Commun. },
  year={2022},
  publisher={IEEE}
}

@article{zhou2023hierarchical,
  title={Hierarchical federated learning with social context clustering-based participant selection for internet of medical things applications},
    author={Zhou, Xiaokang and Ye, Xiaozhou and Wang, Kevin I-Kai and Liang, Wei and Nair, Nirmal Kumar C. and Shimizu, Shohei and Yan, Zheng and Jin, Qun},
  journal={IEEE Trans. Comput. Soc. Syst.},
  year={2023},
  publisher={IEEE}
}

@article{mhaisen2021optimal,
  title={Optimal user-edge assignment in hierarchical federated learning based on statistical properties and network topology constraints},
  author={Mhaisen, Naram and Abdellatif, Alaa Awad and Mohamed, Amr and Erbad, Aiman and Guizani, Mohsen},
  journal={IEEE Trans. Netw. Sci. Eng.},
  year={2021},
  publisher={IEEE}
}

@article{guo2023privacy,
  title={Privacy vs. efficiency: Achieving both through adaptive hierarchical federated learning},
  author={Guo, Yeting and Liu, Fang and Zhou, Tongqing and Cai, Zhiping and Xiao, Nong},
  journal={IEEE Trans. Parallel Distrib. Syst.},
  year={2023},
  publisher={IEEE}
}

@article{qu2022context,
  title={Context-aware online client selection for hierarchical federated learning},
  author={Qu, Zhe and Duan, Rui and Chen, Lixing and Xu, Jie and Lu, Zhuo and Liu, Yao},
  journal={IEEE Trans. Parallel Distrib. Syst.},
  year={2022},
  publisher={IEEE}
}

@article{wang2022accelerating,
  title={Accelerating federated learning with cluster construction and hierarchical aggregation},
  author={Wang, Zhiyuan and Xu, Hongli and Liu, Jianchun and Xu, Yang and Huang, He and Zhao, Yangming},
  journal={IEEE Trans. Mobile Comput. },
  year={2022},
  publisher={IEEE}
}

@inproceedings{li2019smartpc,
  title={SmartPC: Hierarchical pace control in real-time federated learning system},
  author={Li, Li and Xiong, Haoyi and Guo, Zhishan and Wang, Jun and Xu, Cheng-Zhong},
  booktitle={ IEEE RTSS},
  year={2019}
}

@article{de2023hed,
  title={HED-FL: A hierarchical, energy efficient, and dynamic approach for edge Federated Learning},
  author={De Rango, Floriano and Guerrieri, Antonio and Raimondo, Pierfrancesco and Spezzano, Giandomenico},
  journal={Pervasive and Mobile Computing},
  year={2023},
  publisher={Elsevier}
}

@article{qi2024bridge,
  title={Bridge the present and future: A cross-layer matching game in dynamic cloud-aided mobile edge networks},
  author={Qi, Houyi and Liwang, Minghui and Wang, Xianbin and Li, Li and Gong, Wei and Jin, Jian and Jiao, Zhenzhen},
  journal={IEEE Trans. Mobile Comput. },
  year={2024},
  publisher={IEEE}
}

@inproceedings{fezeu2023mmwave,
  title={An In-Depth Measurement Analysis of 5G mmWave PHY Latency and Its Impact on End-to-End Delay},
  author={Fezeu, R.A.K. and Brunstrom, A. and Flores, M. and Fiore, M.},
  booktitle={Springer PAM},
  year={2023}
}

@article{agiwal2016next,
  title={Next Generation 5G Wireless Networks: A Comprehensive Survey}, 
  author={Agiwal, Mamta and Roy, Abhishek and Saxena, Navrati},
  journal={IEEE Commun. Surv. Tutor}, 
  year={2016},
  publisher={IEEE}
}

@inproceedings{kawnine2024evaluating,
  title={Evaluating multi-global server architecture for federated learning},
  author={Kawnine, Asfia and Cao, Hung and Mih, Atah Nuh and Wachowicz, Monica},
  booktitle={ IEEE ICCE},
  year={2024}
}

@article{ganguly2023multi,
  title={Multi-edge server-assisted dynamic federated learning with an optimized floating aggregation point},
  author={Ganguly, Bhargav and Hosseinalipour, Seyyedali and Kim, Kwang Taik and Brinton, Christopher G and Aggarwal, Vaneet and Love, David J and Chiang, Mung},
  journal={IEEE/ACM Trans. Netw. },
  year={2023},
  publisher={IEEE}
}

@article{nguyen2022latency,
  title={Latency optimization for blockchain-empowered federated learning in multi-server edge computing},
  author={Nguyen, Dinh C and Hosseinalipour, Seyyedali and Love, David J and Pathirana, Pubudu N and Brinton, Christopher G},
  journal={IEEE J. Sel. Areas Commun. },
  year={2022},
  publisher={IEEE}
}

@article{guo2021inter,
  title={Inter-server collaborative federated learning for ultra-dense edge computing},
  author={Guo, Hongzhi and Huang, Weifeng and Liu, Jiajia and Wang, Yutao},
  journal={IEEE Trans. Wireless Commun. },
  year={2021},
  publisher={IEEE}
}

@article{diamanti2025resource,
  title={Resource allocation and pricing for multi-server multi-model federated learning based on market equilibrium},
  author={Diamanti, Maria and Rahman, Aisha B and Charatsaris, Panagiotis and Tsiropoulou, Eirini Eleni and Papavassiliou, Symeon},
  journal={Future Gener. Comput. Syst.},
  year={2025},
  publisher={Elsevier}
}

@inproceedings{nguyen2025mario,
  title={Mario: multi-round multiple-aggregator secure aggregation with robustness against malicious actors},
  author={Nguyen, Truong Son and Lepoint, Tancr{\`e}de and Trieu, Ni},
  booktitle={ IEEE EuroS\&P},
  year={2025}
}

@incollection{thapa2021advancements,
  title={Advancements of federated learning towards privacy preservation: from federated learning to split learning},
  author={Thapa, Chandra and Chamikara, Mahawaga Arachchige Pathum and Camtepe, Seyit A},
  booktitle={Springer Federated Learning Systems: Towards Next-Generation AI},
  year={2021}
}

@article{yang2022federated,
  title={Federated learning for 6G: Applications, challenges, and opportunities},
  author={Yang, Zhaohui and Chen, Mingzhe and Wong, Kai-Kit and Poor, H Vincent and Cui, Shuguang},
  journal={Elsevier Engineering},
  year={2022},
  publisher={Elsevier}
}

@article{standard2017final,
  title={Final draft ETSI ES 203 228 V1. 2.0 (2017-02)},
  author={STANDARD, ETSI},
  year={2017}
}

@article{liu2020analyzing,
  title={Analyzing grant-free access for URLLC service},
  author={Liu, Yan and Deng, Yansha and Elkashlan, Maged and Nallanathan, Arumugam and Karagiannidis, George K},
  journal={IEEE J. Sel. Areas Commun.},
  year={2020},
  publisher={IEEE}
}
\begin{IEEEbiography}[{\includegraphics[width=1in,height=1.5in,clip,keepaspectratio]{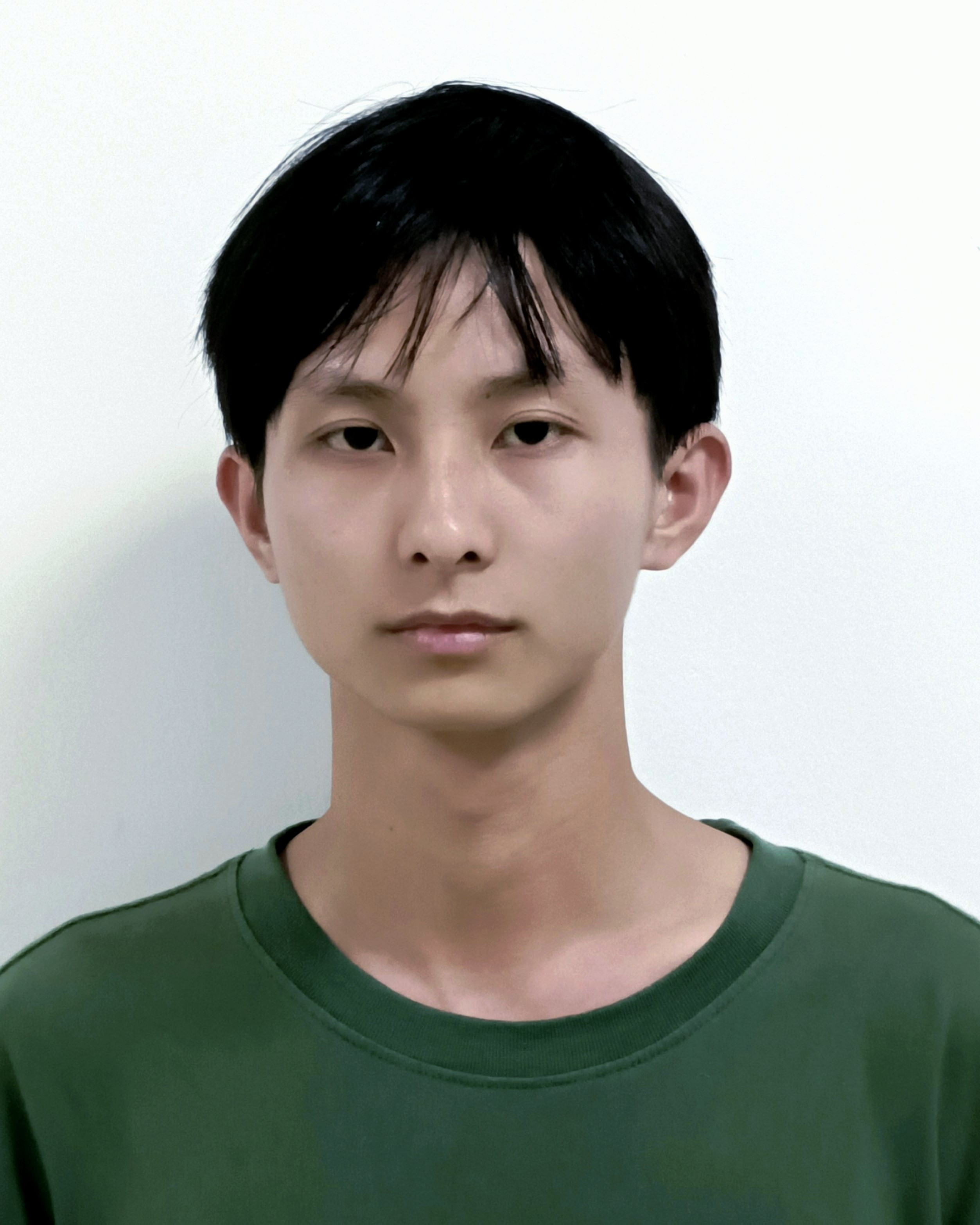}}]{Haiyun Liu} (Student Member, IEEE) received the B.S. degree in electrical engineering and automation from Shanghai University, Shanghai, China, in 2019, and the M.S. degree in signal and information processing from Sichuan University, Chengdu, China, in 2023. He is currently a Ph.D. student in the Department of Computer Science and Engineering, University of South Florida, Tampa, FL, USA. His research interests include security and privacy in wireless communications, and federated learning for networks.
\end{IEEEbiography}
\begin{IEEEbiography}[{\includegraphics[width=1in,height=1.8in,clip,keepaspectratio]{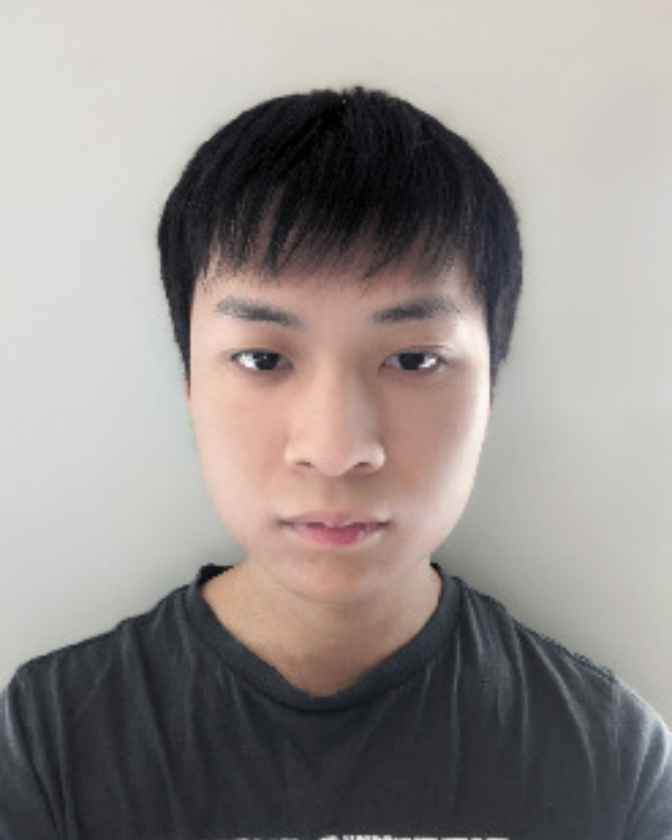}}]{Jiahao Xue} (Student Member, IEEE) received the B.S. degree from Huazhong University of Science and Technology, Wuhan, China, in 2018, and the M.S. degree from the University of California Santa Cruz, Santa Cruz, CA, USA, in 2021. He is currently a Ph.D. student of Systems and Security in the Department of Electrical Engineering, University of South Florida, Tampa, FL, USA. His primary research interests include network and mobile system security, and machine learning for networks.
\end{IEEEbiography}
\begin{IEEEbiography}[{\includegraphics[width=1in,height=1.8in,clip,keepaspectratio]{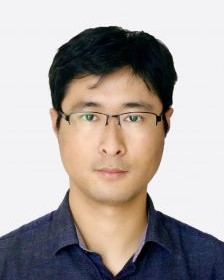}}]{Jie Xu} (Senior Member, IEEE) received the B.S. and M.S. degrees in electronic engineering from Tsinghua University, Beijing, China, in 2008 and 2010, respectively, and the Ph.D. degree in electrical engineering from UCLA, in 2015. He is currently an associate professor with the Department of Electrical and Computer Engineering, University of Florida. His research interests include mobile edge computing/intelligence, machine learning for networks, and network security. He received the NSF CAREER Award, in 2021.
\end{IEEEbiography}
\begin{IEEEbiography}[{\includegraphics[width=1in,height=1.8in,clip,keepaspectratio]{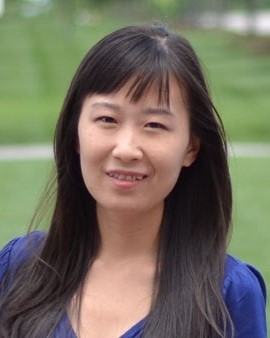}}]{Yao Liu} (Senior Member, IEEE) received the Ph.D. degree in computer science from North Carolina State University, in 2012. She is an professor with the Department of Computer Science and Engineering, University of South Florida. Her research interests include in the security applications for cyberphysical systems, Internet of Things, and machine learning. She was an NSF CAREER Award recipient in 2016. She also received the ACM CCS Test-of-Time Award by ACM SIGSAC in 2019.
\end{IEEEbiography}
\begin{IEEEbiography}[{\includegraphics[width=1in,height=1.8in,clip,keepaspectratio]{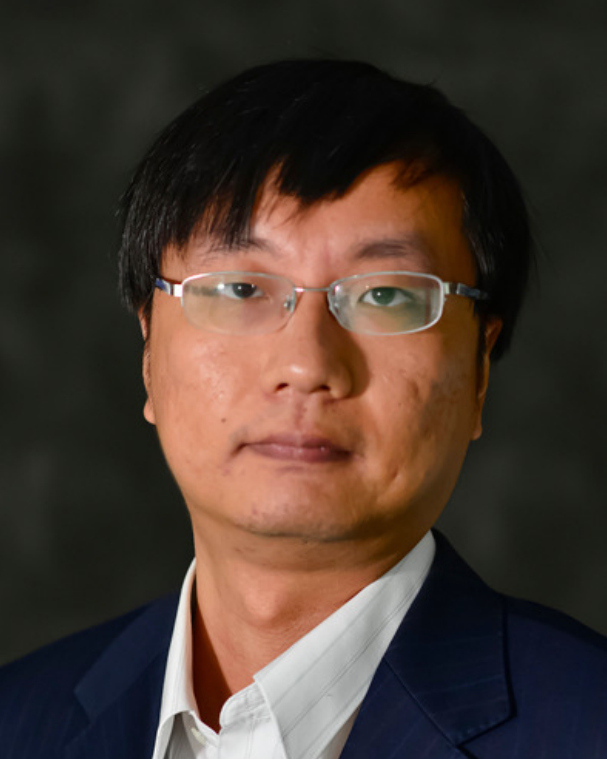}}]{Zhuo Lu} (Senior Member, IEEE) received the Ph.D. degree from North Carolina State University, in 2013. He is an associate professor with the Department of Electrical Engineering, University of South Florida. His research has been mainly focused on both theoretical and system perspectives on communication, network, and security. He received the NSF CISE CRII award in 2016, the Best Paper Award from IEEE GlobalSIP in 2019, and the NSF CAREER award in 2021.
\end{IEEEbiography}

\clearpage
\appendix
\section{Proofs}
\setcounter{equation}{0}
\renewcommand{\theequation}{\thesection.\arabic{equation}}
Here, we provide essential supplementary proofs and analyses for the main manuscript.

\subsection{Proof of Lemma $2$}
The proof of this lemma relies on (18) in the main paper and is carried out by examining the relationship between $w_i^t$ and $v_i^t$ in three cases of $t$.

\noindent \textit{Case 1}: $E\nmid t$.
Given that $w_i^t = v_i^t$ for all $i \in \{1, \ldots, K\}$, we can conclude that the aggregated models must also be equal, i.e., $\overline{w^t} = \overline{v^t}$.

\noindent \textit{Case 2}: $EG\mid t$.
Given that $w_i^{t} = \overline{v^t}$ for all $i \in \{1, \ldots, K\}$, we have
$	\overline{w^{t}} = \sum_{i=1}^K p_i w_i^{t} = \sum_{i=1}^K p_i \overline{v^{t}} = \overline{v^{t}}.
$

\noindent \textit{Case 3}: $E\mid t$ and $EG\nmid t$.
Given $\overline{v^{t}} = \sum_{i=1}^K p_i v_i^{t}$, the coefficient of $v_i^{t}$ in the expression of $\overline{v^{t}}$ is $p_i$. If $v_i^{t}$ has the same coefficient $p_i$ in $\overline{w^{t}}$ for all $i \in \{1, \ldots, K\}$, then it follows that $\overline{w^t} = \overline{v^t}$. We will use this idea as the foundation to complete the proof for this case.
Since $S_i$ denotes the set of indices of the ESs that client $i$ can connect to, for any $n \in S_i$, the model of $ES_n$ can be written as

\begin{equation}
	\begin{split}
		w_{(n)}^{t} = \left({\sum_{j\in {C_n \setminus i}} \frac{1}{|S_j|}p_j v_j^t}+\frac{1}{|S_i|}p_i v_i^j\right )\frac{1}{\lozenge_n},
	\end{split}
\end{equation}
where $\lozenge_n=\sum_{i\in C_n}\frac{p_i}{|S_i|}$.
It is clear that $w_{(n)}^t$ contains the term $v_i^t$ with a coefficient of $\frac{1}{|S_i|} p_i\frac{1}{\lozenge_n}$. For any $k \in C_n$, the model of client $k$ obtained through client aggregation, denoted as $w_k^t$, will be associated with $w_{(n)}^t$, ultimately causing $w_k^t$ to include the term $v_i^t$. In other words, the term related to $v_i^t$ in $w_{(n)}^t$ will be indirectly transmitted to $\overline{w^t}$ through $w_k^t$.
Assuming the coefficient of $v_i^t$ in $w_{(n)}^t$ is $\alpha_n^i$, then the coefficient of $v_i^t$ in $w_k^t$ contributed by $ES_n$ based on the \textit{client aggregation} is $\frac{1}{|S_k|}\alpha_n^i$. According to the definition of $\overline{w^t}$, the coefficient of $v_i^t$ in
$\overline{w^t}$ contributed by $ES_n$, denoted by $\beta_n^i$, would be expressed as
\begin{equation}
	\begin{split}\label{Eq:coefficient1}
		\beta_n^i&=\sum_{k\in C_n}{\frac{1}{|S_k|}\alpha_n^ip_k} \\
		&=\left(\sum_{k\in C_n}{\frac{1}{|S_k|}p_k}\right)\alpha_n^i.
	\end{split}
\end{equation}
Since the coefficient of $v_i^t$ in $w_{(n)}^t$ is $\frac{1}{|S_i|} p_i\frac{1}{\lozenge_n}$, we have $\alpha_n^i=\frac{1}{|S_i|} p_i\frac{1}{\lozenge_n}$ and (\ref{Eq:coefficient1}) can be rewritten as
\begin{equation}
	\begin{split}
		\beta_n^i
		=\left(\sum_{k\in C_n}{\frac{1}{|S_k|}p_k}\right)\frac{1}{|S_i|} p_i\frac{1}{\lozenge_n}.
	\end{split}
\end{equation}
As $v_i^t$ is only involved in the edge models of ESs indexed by $S_i$, the final coefficient of $v_i^t$ in $\overline{w^t}$, denoted by $\gamma_i$, is expressed as
\begin{equation}
	\begin{split}
		\gamma_i
		&=\sum_{n\in S_i}\beta_n^i\\
		&=\sum_{n\in S_i}\left[\left(\sum_{k\in C_n}{\frac{1}{|S_k|}p_k}\right)\frac{1}{|S_i|} p_i\frac{1}{\lozenge_n}\right]\\
		&=\sum_{n\in S_i}\left[\left(\sum_{k\in C_n}{\frac{1}{|S_k|}p_k}\right)\frac{1}{\lozenge_n}\right]\frac{1}{|S_i|} p_i\\
		&\stackrel{(\delta)}{=}\sum_{n\in S_i}\frac{1}{|S_i|} p_i=p_i\\
	\end{split}
\end{equation}
where ($\delta$) comes from (10) in the main paper, i.e., $\lozenge_n=\sum_{i\in C_n}\frac{p_i}{|S_i|}$.
Therefore, it is safe to say that the coefficient of $v_i^{t}$ in $\overline{w^{t}}$ always equals that in $\overline{v^{t}}$, and we can conclude that $\overline{w^{t}} = \overline{v^{t}}$. Which completes the proof.
\subsection{Proof of Lemma $3$} \label{proof4L3}
We first define $\mathbb{E'}[x_i]=\sum_{i=1}^{k}p_ix_i$, then we have
\begin{equation}
	\begin{split}
		\sum_{i=1}^K p_i \left\| \overline{w^t} - w_i^t \right\|^2
		&=\mathbb{E'}\left\| \overline{w^t} - w_i^t \right\|^2\\
		&=\mathbb{E'}\left\| (\overline{w^t}-\overline{w^{T_0}}) - (w_i^t-\overline{w^{T_0}}) \right\|^2\\
		&=\mathbb{E'}\left\| \mathbb{E'}[w_i^t-\overline{w^{T_0}}] - (w_i^t-\overline{w^{T_0}}) \right\|^2,
	\end{split}
\end{equation}
where $T_0$ is the largest multiple of $EG$ not exceeding $t$.
Furthermore, according to the mean-square error inequality (i.e., $\mathbb{E'} \left[ \| x - \mathbb{E'}[x] \|^2 \right] \leq \mathbb{E'} \left[ \| x \|^2 \right]$), we have
\begin{equation}\label{inequalw_hat}
	\begin{split}
		\sum_{i=1}^K p_i \left\| \overline{w^t} - w_i^t \right\|^2&=\mathbb{E'}\left\| \mathbb{E'}[w_i^t-\overline{w^{T_0}}] - (w_i^t-\overline{w^{T_0}}) \right\|^2\\
		&\leq\mathbb{E'}\left\|  (w_i^t-\overline{w^{T_0}}) \right\|^2\\
		&=\sum_{i=1}^K p_i \left\| w_i^t-\overline{w^{T_0}}\right\|^2.
	\end{split}
\end{equation}
Based on (18), the local models $w_i^{T_0}$ are identical across all clients $i \in \{1, \ldots, K\}$, i.e., $w_i^{T_0} = \overline{w^{T_0}}$. Consequently, we have $\sum_{i=1}^K p_i \left\| w_i^t-\overline{w^{T_0}}\right\|^2=\sum_{i=1}^K p_i \left\| w_i^t-w_i^{T_0}\right\|^2$. Substituting this into \eqref{inequalw_hat}, the inequality can be rewritten as
\begin{equation}\label{inequalw}
	\begin{split}
		\sum_{i=1}^K p_i \left\| \overline{w^t} - w_i^t \right\|^2
		\leq\sum_{i=1}^K p_i \left\| w_i^t-w_i^{T_0}\right\|^2.
	\end{split}
\end{equation}
Let $A_i^t=\left\| w_i^t-w_i^{T_0}\right\|^2$. Then, \eqref{inequalw} can be rewritten as
\begin{equation}\label{inequalfinal1_w}
	\begin{split}
		\sum_{i=1}^K p_i \left\| \overline{w^t} - w_i^t \right\|^2
		\leq\sum_{i=1}^K p_i A_i^t,
	\end{split}
\end{equation}
and we have
\begin{equation}\label{Eq:A_i}
	\begin{split}
		A_i^t&=\left\| w_i^t-w_i^{T_0}\right\|^2=\left\| \sum_{t'=T_0+1}^t w_i^{t'}-w_i^{t'-1}\right\|^2\\
		&=\left\| \sum_{t'=T_0+1}^t (v_i^{t'}-w_i^{t'-1})+(w_i^{t'}-v_i^{t'})\right\|^2\\
		&=\left\| \sum_{t'=T_0+1}^t (v_i^{t'}-w_i^{t'-1})+\sum_{t'=T_0+1}^t(w_i^{t'}-v_i^{t'})\right\|^2.
	\end{split}
\end{equation}
Note that $(w_i^{t'} - v_i^{t'})\neq0$ if and only if $E \mid t'$, according to (18).
Thus, the number of non-zero terms in $\sum_{t'=T_0+1}^t(w_i^{t'}-v_i^{t'})$ is $\left\lfloor \frac{t - T_0}{E} \right\rfloor$.
Using the inequality that the squared norm of the sum is bounded by the root mean square, we get
\begin{equation}\label{Ieq:AMR2MS1}
	\begin{split}
		&	\left\| \sum_{t'=T_0+1}^t (v_i^{t'}-w_i^{t'-1})+\sum_{t'=T_0+1}^t(w_i^{t'}-v_i^{t'})\right\|^2 \leq\\
		& \left[(t-T_0)+\left\lfloor \frac{t - T_0}{E} \right\rfloor \right]*\\
		&	\left[\sum_{t'=T_0+1}^t \left\|(v_i^{t'}-w_i^{t'-1})\right\|^2+\sum_{t'=T_0+1}^t \left\|(w_i^{t'}-v_i^{t'})\right\|^2\right].
	\end{split}
\end{equation}
Let $B_i^t=\sum_{t'=T_0+1}^t \left\|(v_i^{t'}-w_i^{t'-1})\right\|^2$ and $C_i^t=\sum_{t'=T_0+1}^t \left\|(w_i^{t'}-v_i^{t'})\right\|^2$. Then, based on \eqref{Eq:A_i} and \eqref{Ieq:AMR2MS1}, we have  
\begin{equation}\label{Ieq:AMR2MS2}
	\begin{split}
		A_i^t
		&	\leq \left[(t-T_0)+\left\lfloor \frac{t - T_0}{E} \right\rfloor \right] \left[B_i^t+C_i^t\right]\\
		&\leq \left[(GE-1)+(G-1) \right] \left[B_i^t+C_i^t\right]\\
		&=\left[(GE+G-2) \right] \left[B_i^t+C_i^t\right].
	\end{split}
\end{equation}
Furthermore, based on inequality \eqref{inequalfinal1_w}, we have
\begin{equation}\label{inequalfinal2_w}
	\begin{split}
		\sum_{i=1}^K p_i \left\| \overline{w^t} - w_i^t \right\|^2
		\leq\sum_{i=1}^K p_i \left[(GE+G-2) \right] \left[B_i^t+C_i^t\right],
	\end{split}
\end{equation}
and thus
\begin{equation}
	\begin{split}
		\mathbb{E}	\sum_{i=1}^K p_i \left\| \overline{w^t} - w_i^t \right\|^2
		\leq\mathbb{E} \sum_{i=1}^K p_i \left[(GE+G-2) \right] \left[B_i^t+C_i^t\right],
	\end{split}
\end{equation}
To bound the right-hand side of the above inequality, we begin by deriving an upper bound for $\mathbb{E} B_i^t$ as follows:
\begin{equation}\label{BoundB_i} 
	\begin{split}
		&\mathbb{E} \sum_{t'=T_0+1}^t \left\|(v_i^{t'}-w_i^{t'-1})\right\|^2\\
		&=\mathbb{E}\sum_{t'=T_0+1}^t  \eta_{t'-1}^2 	\left\|\nabla F_i\left( \mathbf w_i^{t'-1}, \mathbf \xi_i^{t'-1} \right)	\right\|^2\\
		&\leq \mathbb{E} \sum_{t'=T_0+1}^t  \eta_{T_0}^2 	\left\|\nabla F_i\left( \mathbf w_i^{t'-1}, \mathbf \xi_i^{t'-1} \right)	\right\|^2\\
		&=\sum_{t'=T_0+1}^t  \eta_{T_0}^2 \mathbb{E}	\left\|\nabla F_i\left( \mathbf w_i^{t'-1}, \mathbf \xi_i^{t'-1} \right)	\right\|^2\\
		&\leq \sum_{t'=T_0+1}^t  \eta_{t}^2 4^{\left\lceil \frac{t-T_0-1}{E} \right\rceil}	\mathbb{E} \left\|\nabla F_i\left( \mathbf w_i^{t'-1}, \mathbf \xi_i^{t'-1} \right)	\right\|^2\\
		&\leq \sum_{t'=T_0+1}^t  \eta_{t}^2 4^{G}\mathbb{E}	\left\|\nabla F_i\left( \mathbf w_i^{t'-1}, \mathbf \xi_i^{t'-1} \right)	\right\|^2\\
		&\leq \sum_{t'=T_0+1}^t  \eta_{t}^2 4^{G}H^2\leq (GE-1)  \eta_{t}^2 4^{G}H^2,
	\end{split}
\end{equation}
where the first inequality comes from $\eta_{t}$ is non-increasing, the second inequality comes from $\eta_{t}\leq2 \eta_{t+E}$, the third inequality comes from $\left\lceil \frac{t-T_0-1}{E} \right\rceil\leq{G}$, the fourth inequality comes from the Assumption 4, and the final inequality comes from $T_0 \leq t < T_0 + GE$.
Thus,
\begin{equation}\label{Ieq:BoundA-B}
	\begin{split}
		&\mathbb{E} \sum_{i=1}^K p_i \left[(GE+G-2) \right] B_i^t\\
		&\leq\left[(GE+G-2) \right](GE-1)  \eta_{t}^2 4^{G}H^2.
	\end{split}
\end{equation}
Similar to \eqref{Ieq:BoundA-B}, to bound $\mathbb{E} \sum_{i=1}^K p_i \left[(GE+G-2) \right]C_i^t$, we proceed as follows:
\begin{equation}\label{Ieq:BoundA-C-1}
	\begin{split}
		&\mathbb{E} \sum_{i=1}^K p_i \left[(GE+G-2) \right]C_i^t\\
		&=\mathbb{E}\sum_{i=1}^K p_i \left[(GE+G-2) \right]\sum_{t'=T_0+1}^t \left\|(w_i^{t'}-v_i^{t'})\right\|^2\\
		&=\left[(GE+G-2) \right]\mathbb{E}\sum_{i=1}^K p_i \sum_{t'=T_0+1}^t \left\|(w_i^{t'}-v_i^{t'})\right\|^2\\
		&=\left[(GE+G-2) \right]\sum_{t'=T_0+1}^t \sum_{i=1}^K p_i \mathbb{E} \left\|(w_i^{t'}-v_i^{t'})\right\|^2.\\
	\end{split}
\end{equation}
Let $\hat{t}$ denote the set of all step indices $t'$ in $\{T_0 + 1, \ldots, t\}$ such that $E \mid t'$, and let $U = GE + G - 2$. Then, $\hat{t}$ contains exactly $\left\lfloor \frac{t - T_0}{E} \right\rfloor$ elements. Accordingly, we can rewrite (\ref{Ieq:BoundA-C-1}) as
\begin{equation}\label{Ieq:BoundA-C-2} 
	\begin{split}
		&\mathbb{E} \sum_{i=1}^K p_i \left[(GE+G-2) \right]C_i^t\\
		&=U \sum_{t'\in\hat t} \sum_{i=1}^K p_i \mathbb{E} \left\|(w_i^{t'}-v_i^{t'})\right\|^2\\
		&\leq U \sum_{t'\in\hat t} \sum_{i=1}^K p_i \mathbb{E} \left\|(w_i^{t'-E}-v_i^{t'})\right\|^2\\
		&= U\sum_{t'\in\hat t} \sum_{i=1}^K p_i \mathbb{E} \left\|\sum_{t^\star=t'-E+1}^{t'}(v_i^{t^\star}-w_i^{t^\star-1})\right\|^2\\
		&\leq U\sum_{t'\in\hat t} \sum_{i=1}^K p_i  E\sum_{t^\star=t'-E+1}^{t'}\mathbb{E}\left\|(v_i^{t^\star}-w_i^{t^\star-1})\right\|^2\\
		&=U\sum_{t'\in\hat t} \sum_{i=1}^K p_i  E
		\sum_{t^\star=t'-E+1}^{t'}\mathbb{E}\left\|\eta_{t^\star-1}^2 	\left\|\nabla F_i\left( \mathbf w_i^{t^\star-1}, \mathbf \xi_i^{t^\star-1} \right)	\right\|^2\right\|^2\\
		&\leq U\sum_{t'\in\hat t} \sum_{i=1}^K p_i  E\sum_{t^\star=t'-E+1}^{t'} \eta_{t^\star-1}^2 H^2\\
		&\leq U\sum_{t'\in\hat t} \sum_{i=1}^K p_i  E\sum_{t^\star=t'-E+1}^{t'} \eta_{T_0}^2 H^2\\
		&\leq U\sum_{t'\in\hat t} \sum_{i=1}^K p_i  E\sum_{t^\star=t'-E+1}^{t'} \eta_{t}^2 4^{\left\lceil \frac{t-T_0-1}{E} \right\rceil} H^2\\
		&\leq U\sum_{t'\in\hat t} \sum_{i=1}^K p_i  E\sum_{t^\star=t'-E+1}^{t'} \eta_{t}^2 4^{G} H^2\\
		&= U\sum_{t'\in\hat t} \sum_{i=1}^K p_i  E^2 \eta_{t}^2 4^{G} H^2= U\left\lfloor \frac{t - T_0}{E} \right\rfloor  E^2 \eta_{t}^2 4^{G} H^2\\
		&\leq U(G-1)  E^2 \eta_{t}^2 4^{G} H^2,
	\end{split}
\end{equation}
where the first inequality comes from the fact that there is an edge aggregation at step $t' \in \hat{t}$ and $w_i^{t'}$ has a closer connection with $v_i^{t'}$ compared to $w_i^{t' - E}$, the second inequality comes from  the arithmetic mean is less than or equal to the root mean square, the third inequality comes from Assumption 4, the fourth inequality comes from $\eta_{t}$ is non-increasing, the fifth inequality comes from $\eta_{t}\leq2 \eta_{t+E}$, the sixth inequality comes from $\left\lceil \frac{t-T_0-1}{E} \right\rceil\leq{G}$, and the final inequality comes from $\left\lfloor \frac{t - T_0}{E} \right\rfloor\leq(G-1)$.

Finally, we have
\begin{equation}\label{BoundC_i^t}
	\begin{split}
		&\mathbb{E}\sum_{i=1}^K p_i \left\| \overline{w^t} - w_i^t \right\|^2=\mathbb{E}\sum_{i=1}^K p_i A_i^t\\
		&\leq\mathbb{E}\sum_{i=1}^K p_i \left[(GE+G-2) \right] \left[B_i^t+C_i^t\right]\\
		&\leq(GE+G-2)(GE-1)  \eta_{t}^2 4^{G}H^2\\
		&\;\;\;+ (GE+G-2)(G-1)E^2 \eta_{t}^2 4^{G} H^2\\
		&=\eta_{t}^2 4^{G}H^2(GE+G-2)\left[(GE-1)+E^2(G-1)\right],
	\end{split}
\end{equation}
where the first inequality comes from (\ref{Ieq:AMR2MS2}), the second inequality comes from (\ref{Ieq:BoundA-B}) and (\ref{Ieq:BoundA-C-2}). Which completes the proof.
\subsection{Proof of Corollary $2$} \label{proof4C2}
Based on the above analysis, we now have
\begin{equation}\label{proof4T1-2}
	\begin{split}
		\Delta_{t+1}
		&\leq (1 - \eta_t \mu) \Delta_{t}
		+ \eta_t^2 \mathbb{E} \left\| g_t - \overline{{g}^t} \right\|^2 \\
		&\;\;+ 6L \eta_t^2 (F^\star - \sum_{k=1}^N p_k F_k^\star)
		+ 2 \mathbb{E} \sum_{i=1}^K p_i \left\| \overline{{w}^t} - w_i^t \right\|^2,
		\\
		&\leq (1 - \eta_t \mu) \Delta_{t}
		+ \eta_t^2 \sum_{i=1}^K {p_i}^2\sigma_i^2 \\
		&\;\;\;+ 6L \eta_t^2 (F^\star - \sum_{k=1}^N p_k F_k^\star)
		+ 2 \mathbb{E} \sum_{i=1}^K p_i \left\| \overline{{w}^t} - w_i^t \right\|^2\\
		&\leq (1 - \eta_t \mu) \Delta_{t}
		+ \eta_t^2 \sum_{i=1}^K {p_i}^2\sigma_i^2 + 6L \eta_t^2 (F^\star - \sum_{k=1}^N p_k F_k^\star)\\
		&\;\;\;+ 2 \eta_{t}^2 4^{G}H^2(GE+G-2)\left[(GE-1)+E^2(G-1)\right]
	\end{split}
\end{equation}
where the first inequality comes from Corollary $1$, the second inequality comes from Lemma $4$, and the third inequality comes from Lemma $3$ .
Let $ X = \sum_{i=1}^K p_i^2 \sigma_i^2 + 6L(F^\star - \sum_{k=1}^N p_k F_k^\star) + 2^{2G+1}H^2(GE+G-2)[(GE-1)+E^2(G-1)]$, then \eqref{proof4T1-2} can be simplified as
\begin{equation}\label{proof4T1-3}
	\begin{split}
		\Delta_{t+1}
		&\leq (1 - \eta_t \mu) \Delta_{t}
		+ \eta_t^2 X,
	\end{split}
\end{equation}
which completes the proof.
\subsection{Proof of Theorem $1$} \label{proof4T1}
Since $\eta_t = \frac{\beta}{t + \alpha}$ with $\beta > \frac{1}{\mu} > 0$ and $\alpha > 0$, it follows that $\eta_t$ is non-increasing.  
Moreover, since $\eta_0 \leq \min\left\{ \frac{1}{\mu}, \frac{1}{4L} \right\}$, we have $\eta_t \leq \eta_0\leq \frac{1}{4L}$. Additionally, note that $(2\eta_{t+E} - \eta_t) = \frac{2\beta}{t+E+\alpha} - \frac{\beta}{t+\alpha} = \frac{\beta(t+\alpha-E)}{(t+E+\alpha)(t+\alpha)} \geq \frac{\beta(\alpha-E)}{(t+E+\alpha)(t+\alpha)}$. Since $\alpha\geq E>0$, we have $(2\eta_{t+E} - \eta_t)\geq 0$, i.e., $\eta_t \leq 2\eta_{t+E}$. The three properties, together with Assumptions~1–4, satisfy all the conditions required in Corollary~$2$. Consequently, based on (26) in the main paper, we have $\Delta_{t+1} \leq (1 - \eta_t \mu) \Delta_{t} + \eta_t^2 X$, where $X$ is defined therein.

Here, we introduce a crucial assumption: $\Delta_t \leq \frac{Z}{t+\alpha}$, where $Z = \max\{\frac{\beta^2 X}{\beta\mu-1}, \Delta_0 \alpha\}$. This assumption is critical for the subsequent proof of Theorem~$1$ and we will prove it using mathematical induction.
Since $Z = \max\left\{ \frac{\beta^2 X}{\beta \mu - 1},\, \Delta_0 \alpha \right\}$, we have $Z \geq \Delta_0 \alpha$, which implies
$\Delta_0 \leq \frac{Z}{\alpha} = \frac{Z}{0 + \alpha}.
$
Hence, the assumption holds for the base case $t = 0$.
Next, assume that $\Delta_{t_0} \leq \frac{Z}{t_0 + \alpha}$ holds for some $t_0 \geq 0$. Then, for $\Delta_{t_0 + 1}$, we have
\begin{equation}\label{induction}
	\begin{split}
		\Delta_{t_0+1}&\leq(1 - \eta_{t_0} \mu) \Delta_{t_0}+ \eta_{t_0}^2 X\\
		&\leq(1 - \eta_{t_0} \mu)\frac{Z}{t_0+\alpha}+\eta_{t_0}^2 X\\
		&=\left(1 - \frac{\beta}{t_0+\alpha} \mu\right)\frac{Z}{t_0+\alpha}+\frac{\beta^2}{(t+\alpha)^2} X\\
		&=\frac{Z\left(1 - \frac{\beta}{t_0+\alpha} \mu\right)(t_0+\alpha)+\beta^2X}{(t_0+\alpha)^2}\\
		&=\left[\frac{Z\left(t_0+\alpha -\beta \mu\right)+\beta^2X}{(t_0+\alpha)^2}-\frac{Z}{t_0+1+\alpha}\right]\\
		&\;\;\;\;+\frac{Z}{t_0+1+\alpha}.
	\end{split}
\end{equation}
In the inequality above, the first term can be expressed as
\begin{equation}\label{Diffterm}
	\begin{split}
		&\frac{Z\left(t_0+\alpha -\beta \mu\right)+\beta^2X}{(t_0+\alpha)^2}-\frac{Z}{t_0+1+\alpha}\\
		&=Z\left[ \frac{t_0+\alpha -\beta \mu}{(t_0+\alpha)^2}-\frac{1}{t_0+1+\alpha}\right]+\frac{\beta^2X}{(t_0+\alpha)^2}\\
		&=\frac{1}{(t_0+\alpha)^2}\left\{Z\left[t_0+\alpha-\beta\mu-\frac{(t_0+\alpha)^2}{t_0+\alpha+1}\right]+\beta^2X\right\}\\
		&=\frac{1}{(t_0+\alpha)^2}\left[Z\left(\frac{t_0+\alpha}{t_0+\alpha+1}-\beta\mu\right)+\beta^2X\right].
	\end{split}
\end{equation}
Since $0 < \frac{t_0 + \alpha}{t_0 + \alpha + 1} < 1$, it follows that $\beta\mu - \frac{t_0 + \alpha}{t_0 + \alpha + 1} > \beta\mu - 1$, which implies $\frac{\beta^2 X}{\beta \mu - \frac{t_0 + \alpha}{t_0 + \alpha + 1}} < \frac{\beta^2 X}{\beta \mu - 1}$. Given $Z = \max\left\{\frac{\beta^2 X}{\beta \mu - 1}, \Delta_0 \alpha\right\}$, we have $Z \geq \frac{\beta^2 X}{\beta \mu - 1}\geq \frac{\beta^2 X}{\beta \mu - \frac{t_0 + \alpha}{t_0 + \alpha + 1}}$, and thus $\left[Z\left(\frac{t_0+\alpha}{t_0+\alpha+1}-\beta\mu\right)+\beta^2X\right]\leq0$. 
Which shows that the expression in~\eqref{Diffterm} is non-positive. Therefore, based on~\eqref{induction}, we conclude that $\Delta_{t_0 + 1} \leq \frac{Z}{t_0 + 1 + \alpha}$, thereby completing the induction and proving the assumption $\Delta_t \leq \frac{Z}{t+\alpha}$.

Based on Assumption 1, we know that $F(w)$ is a $L$-smooth function, which implies for any $\mathbf w$ and $\mathbf {w'}$, $F(\mathbf{w}) \leq F(\mathbf{w}') + (\mathbf{w} - \mathbf{w}') \nabla F(\mathbf{w}') + \frac{L}{2} \| \mathbf{w} - \mathbf{w}' \|^2$. Let $\mathbf w=\overline{w^t}$ and $\mathbf {w'}=w^\star$, since $\nabla F(w^\star)=0$, we can get

\begin{equation}
	\begin{split}
		F(\overline{w^t}) &\leq F(w^\star) + (\overline{w^t} - w^\star) \nabla F(w^\star) + \frac{L}{2} \| \overline{w^t} - w^\star \|^2\\
		&=F(w^\star) + \frac{L}{2} \| \overline{w^t} - w^\star \|^2,
	\end{split}
\end{equation}
and thus
\begin{equation}
	\begin{split}
		\mathbb{E} \left[F(\overline{w^t}) -F(w^\star)\right]&\leq \mathbb{E} \left[\frac{L}{2} \| \overline{w^t} - w^\star \|^2\right]\\
		&=\frac{L}{2}\mathbb{E} \left[\| \overline{w^t} - w^\star \|^2\right]=\frac{L}{2}\mathbb{E} \left[\Delta_t \right]\\
		&\leq \frac{L}{2} \frac{Z}{t+\alpha},
	\end{split}
\end{equation}
where the last inequality comes from our introduced assumption. 
Which completes the proof.
\hlb \subsection{Proof of Remark $5$} \label{proof4Remark4}
We assume that Assumptions~1 to 4 also hold under the HFL setting. Since Lemma~1 is derived under Assumptions~1 and~2, the same result can be established for conventional HFL as well. 

To show that Lemma~2 also holds under HFL, we first define the \emph{virtual global models} $\overline{{w}^{t}}$ and $\overline{{v}^{t}}$, which are obtained by hypothetically performing {edge aggregation} followed by {cloud aggregation} over all client models $\{\mathbf{w}_i^{t}\}_{i=1}^{K}$ and $\{\mathbf{v}_i^{t}\}_{i=1}^{K}$, respectively. According to the HFL aggregation procedure described in Section~II-B, we have

\begin{equation}\label{HFL_w1}
	\begin{split}
		\overline{{w}^{t}}
		&=\sum_{n=1}^{N}\frac{N_n}{N_{\mathrm{total}}}{w}_{(n)}^{t}
		=\sum_{n=1}^{N}\frac{N_n}{N_{\mathrm{total}}}\sum_{i\in \mathcal{C}_n}\frac{N_i}{N_n}{w}_i^{t}\\
		&=\sum_{n=1}^{N}\frac{1}{N_{\mathrm{total}}}\sum_{i\in \mathcal{C}_n}{N_i}{w}_i^{t}=\sum_{n=1}^{N}\sum_{i\in \mathcal{C}_n}\frac{N_i}{N_{\mathrm{total}}}{w}_i^{t},
	\end{split}
\end{equation}
where $N_i$ is the number of local data held by client $i$, and $N_n = \sum_{i \in \mathcal{C}_n} N_i$ is the number of data covered by $ES_n$. Since in HFL each client $i$ can connect to only one ES, the client sets $\{\mathcal{C}_n\}_{n=1}^{N}$ are pairwise disjoint. Therefore, \eqref{HFL_w1} can be rewritten as
\begin{equation}\label{HFL_w_2}
	\overline{{w}^{t}}
	=\sum_{i=1}^{K}\frac{N_i}{N_{\mathrm{total}}}{w}_i^{t}=\sum_{i=1}^K p_i w_i^t,
\end{equation}
where $p_i \triangleq \frac{N_i}{N_{\mathrm{total}}}$.
Similarly, we obtain
\begin{equation}\label{HFL_v_2}
	\overline{{v}^{t}}
	=\sum_{i=1}^{K}\frac{N_i}{N_{\mathrm{total}}}{v}_i^{t}
	=\sum_{i=1}^{K} p_i {v}_i^{t}.
\end{equation}
Next, similar to the proof in Appendix~A, we show that the equality $\overline{{w}^{t}}=\overline{{v}^{t}}$  can be established by examining the relationship between ${w}_i^{t}$ and ${v}_i^{t}$ under three cases of $t$.

\noindent \textit{Case 1}: $E\nmid t$.
Given that $w_i^t = v_i^t$ for all $i \in \{1, \ldots, K\}$, we can conclude that the aggregated models must also be equal, i.e., $\overline{w^t} = \overline{v^t}$.

\noindent \textit{Case 2}: $EG\mid t$.
Given that $w_i^{t} = \overline{v^t}$ for all $i \in \{1, \ldots, K\}$, we have
$	\overline{w^{t}} = \sum_{i=1}^K p_i w_i^{t} = \sum_{i=1}^K p_i \overline{v^{t}} = \overline{v^{t}}.
$

\noindent \textit{Case 3}: $E\mid t$ and $EG\nmid t$. Given that $w_{(n)}^t=\sum_{i\in \mathcal{C}_n} \frac{v_i^t p_i}{\sum_{i\in \mathcal{C}_n}p_i}$, we have
\begin{equation}\label{HFL_w3}
	\begin{split}
		\overline{{w}^{t}}
		&=\sum_{i=1}^K p_i w_i^t=\sum_{n=1}^N \sum_{i\in \mathcal{C}_n}p_i w_i^t=\sum_{n=1}^N \sum_{i\in \mathcal{C}_n}p_i w_{(n)}^t\\
		&=\sum_{n=1}^N w_{(n)}^t\sum_{i\in \mathcal{C}_n}p_i =\sum_{n=1}^N \sum_{i\in \mathcal{C}_n} \frac{v_i^t p_i}{\sum_{i\in \mathcal{C}_n}p_i}\sum_{i\in \mathcal{C}_n}p_i\\
		&=\sum_{n=1}^N \sum_{i\in \mathcal{C}_n}p_i v_i^t=\sum_{i=1}^K p_i v_i^t=\overline{v^t}.
	\end{split}
\end{equation}
Therefore, we conclude that $\overline{\mathbf{w}}^{t}=\overline{\mathbf{v}}^{t}$, and hence Lemma~2 holds under the HFL setting. Moreover, since Corollary~1 is built upon Lemmas~1 and~2, it also holds under the HFL setting. For Lemma~4, since it is derived under Assumption~3, the same result also holds under the HFL setting.

We denote the upper bound of the HFL term
$\mathbb{E}\sum_{i=1}^{K} p_i \bigl\|\overline{{w}^{t}}-{w}_i^{t}\bigr\|^{2}$
by $M'$, which yields
\begin{equation} \label{HFL_M'}
	\mathbb{E}\sum_{i=1}^{K} p_i \left\|\overline{{w}^{t}}-\mathbf{w}_i^{t}\right\|^{2}
	\leq M'.
\end{equation}
Following the same steps as in Appendix~C, we can derive an analogous recursion:
\begin{equation}\label{Eq:HFL_recursion}
	\Delta_{t+1}
	\leq (1-\eta_t \mu)\Delta_t
	+\eta_t^{2} X',
\end{equation}
where
$X' = \sum_{i=1}^{K} p_i^{2}\sigma_i^{2}
+ 6L\!\left(F^\star-\sum_{k=1}^{K} p_k F_k^\star\right)
+ \frac{2M'}{\eta_t^{2}}$. 
Similarly, we denote by $M$ the corresponding upper bound of the HHFL term
$\mathbb{E}\sum_{i=1}^{K} p_i \left\|\overline{\mathbf{w}}^{t}-\mathbf{w}_i^{t}\right\|^{2}$.
Based on Corollary~2, the associated constant in HHFL can be written as
$X = \sum_{i=1}^{K} p_i^{2}\sigma_i^{2}
+ 6L\!\left(F^\star-\sum_{k=1}^{K} p_k F_k^\star\right)
+ \frac{2M}{\eta_t^{2}}$.
Note that $\sum_{i=1}^{K} p_i^{2}\sigma_i^{2}$ is determined by Assumption~3 and captures the data heterogeneity {within} each client, while
$F^\star-\sum_{k=1}^{K} p_k F_k^\star$ is determined by the data heterogeneity {across} different clients.
Moreover, we assume the same learning rate $\eta_t$ for both HHFL and HFL at any step $t$. 
Therefore, the difference between $X'$ and $X$ is solely determined by the difference between $M'$ and $M$. As shown in Appendix~D, Theorem~1 is obtained via an inductive argument based on Corollary~2, where $X$ appears as a constant throughout the derivation. Therefore, for HFL, a similar result can be derived:
$
\mathbb{E}\!\left[F\!\left(\overline{\mathbf{w}}^{t}\right)\right]-F^\star
\leq \frac{L}{2} \frac{Z'}{t+\alpha},
$
where
$
Z' = \max\!\left\{\frac{\beta^{2}X'}{\beta\mu+1},\,\Delta_{0}\alpha\right\}.
$
Since the difference between $X'$ (for HFL) and $X$ (for HHFL) is determined by $M'$ and $M$, it follows that the difference between $Z'$ and $Z$ is also governed by $M'$ and $M$. In other words, $M'$ and $M$ capture the convergence efficiency of HFL and HHFL: if $M'>M$, then HFL converges more slowly than HHFL; otherwise, HHFL is slower.

Next, we focus on comparing the magnitudes of $M'$ and $M$.
By definition, $M'$ upper bounds the drift term $\mathbb{E}\sum_{i=1}^K p_i\|\overline{{w}^{t}}-w_i^t\|^2$ under HFL,
whereas $M$ upper bounds the same term under HHFL; hence, a larger drift directly leads to a larger value of $M'$ or $M$.
To reason about how this drift differs between the two architectures, we define the disagreement
$D_t\triangleq \sum_{i=1}^K p_i\|\overline{{w}^{t}}-w_i^t\|^2$.
Within each edge round, every client performs $E$ local SGD steps before the next aggregation, so the local models repeatedly accumulate gradient steps and deviate from $\overline{{w}^{t}}$; consequently, $D_t$ increases with larger $E$ (e.g., under bounded gradients, it scales on the order of $\eta_t^2 H^2 E^2$).
Moreover, since cloud aggregation occurs only once every $G$ edge rounds, this deviation has more time to accumulate when $G$ is larger, which further enlarges the drift. The key difference between HFL and HHFL is how much of this accumulated disagreement can be mitigated before the next cloud aggregation: in HFL, each client is associated with exactly one ES and the client sets across ESs are pairwise disjoint, so edge aggregations remain isolated across ESs between two cloud aggregations and thus cross-ES disagreement is only weakly reduced, leading to a larger accumulated drift and hence a larger $M'$.
In contrast, HHFL enables clients located in overlapping regions (i.e., $|S_i|>1$) to act as {bridges} between ESs by averaging multiple ES models for initialization and simultaneously uploading their updated models to multiple ESs at every edge round; this continual cross-ES information mixing suppresses model divergence and mitigates the drift growth caused by large $E$ and $G$.
As a result, the drift bound under HFL increases {more rapidly} with $E$ and $G$ than that under HHFL, so we typically have $M'>M$, and the gap $M'-M$ (and thus the gaps $X'-X$ and $Z'-Z$) becomes more pronounced as $E$ and $G$ increase, ultimately making the advantage of HHFL over HFL more evident. Which completes the proof.
{\hlb\subsection{Analysis of ES--client Transmission Resource Consumption}
	Let $R$ denote the multiplicative factor of the per-round resource consumption for ES--client model transmissions in HHFL relative to HFL. In general, we have $R>1$.  When ES--client transmissions are implemented in a pure unicast manner (i.e., each ES--client transmission is scheduled over orthogonal resources in a one-to-one manner), the resource consumption per round scales near-linearly with the number of ES--client links. This is because each ES--client link carries a uniform payload (i.e., the model parameters), leading to nearly identical resource consumption for each round-trip transmission between them. In HHFL, the number of ES--client links is  $\sum_{i=1}^{K} |S_i|$, whereas it is $K$ in HFL. Therefore, the multiplicative factor of the per-round resource consumption for ES--client model transmissions in HHFL relative to HFL can be approximated by $\frac{\sum_{i=1}^{K} |S_i|}{K}$. However, in our practical HHFL implementation, model transmissions over additional ES–client links leverage the one-to-many nature of the wireless interface. Specifically, clients in overlapping areas can transmit their local models once, while multiple ESs decode the same uplink transmission via CoMP-enabled reception (uplink JR). Compared to the traditional single-link uplink in HFL, such clients do not require additional transmitter-side uplink resources (e.g., PRB-time footprint and transmission energy) to support these concurrent receptions.
	Therefore, the multiplicative factor of the per-round resource consumption for ES--client model transmissions in HHFL relative to HFL under pure unicast transmission can be viewed as a conservative upper bound on $R$, and we have $1 < R <\frac{\sum_{i=1}^{K} |S_i|}{K}$. Moreover, for certain resource metric such as energy consumption, the cost does not necessarily scale linearly with the number of ES--client associations. This is because parallel transmission/reception can share common RF and front-end baseband operations (e.g., a common local oscillator, up/downconversion, and a shared OFDM processing pipeline). Consequently, the marginal cost of supporting additional links can be lower, suggesting that the actual $R$ can be tighter than the above conservative unicast-based bound.
	
	Given our experimental setup where $\sum_{i=1}^{K} |S_i|=75$ and $K=57$, we have $1 < R < \frac{\sum_{i=1}^{K} |S_i|}{K}=\frac{75}{57}\approx1.32$. 
	As shown in Fig.~11 and Fig.~14 (b), HHFL reduces the required training steps/communication rounds to reach the same target accuracy by more than $1.5\times$. Since these convergence speedups exceed 1.32 (and consequently $R$ itself), HHFL maintains a clear advantage in terms of total ES--client model transmission resource consumption required to reach convergence.
}

\bibliographystyle{IEEEtran}

\end{document}